 \newtheorem*{theorem*}{Theorem}
  \newtheorem*{lemma*}{Lemma}
    \newtheorem*{corollary*}{Corollary}
\newcommand{\Set}[2]{\left\{\,{#1} : {#2}\,\right\}}
\newcommand{\intr}{\mathrel{\rhd\mspace{-10mu}\lhd}}
\newcommand{\intro}{\intr_{\!_\preceq} \!}
\newcommand{\La}{\mathcal{L}}
\newcommand{\pf}[1]{\operatorname{Pref}({#1})}
\newcommand{\mt}[1]{\mathcal{#1}} 
\newlang{\Wh}{W}
\newlang{\UW}{UW}
\newlang{\EW}{EW}
\newlang{\DEF}{DEF}
\newlang{\RDEF}{RDEF}
\newlang{\SLT}{SLT}
\newlang{\Fin}{Fin}
\newlang{\CoFin}{CoFin}
\newlang{\SETH}{SETH}
\newlang{\OV}{OV}
\newlang{\YES}{YES}
\begin{document}
\title{Universally Wheeler Languages}

\author{Ruben Becker\inst{1}\orcidID{0000-0002-3495-3753} \and
Giuseppa Castiglione\inst{2}\orcidID{0000-0002-1838-9785} \and
Giovanna D'Agostino\inst{3}\orcidID{0000-0002-8920-483X} \and
Alberto Policriti\inst{3}\orcidID{0000-0001-8502-5896} \and
Nicola Prezza\inst{1}\orcidID{0000-0003-3553-4953} \and
Antonio Restivo\inst{2}\orcidID{0000-0002-1972-6931} \and
Brian Riccardi\inst{4}\orcidID{0000-0002-4925-9529}
}%

\authorrunning{R. Becker et al.}

\institute{Dept. of Environmental Sciences, Informatics and Statistics, Ca' Foscari University of Venice, Italy \and
Dept.\ of Mathematics and Computer Science, University of Palermo, Italy \and
Dept.\ of Mathematics, Computer Science and Physics, University of Udine, Italy \and
Dept.\ of Informatics, Systems and Communication, University of Milano-Bicocca, Italy}
\maketitle              
\begin{abstract}
The notion of Wheeler languages is rooted in the Burrows-Wheeler transform (BWT), one of the most central concepts in data compression and indexing. The BWT has been generalized to finite automata, the so-called Wheeler automata, by Gagie et al.~[Theor.\ Comput.\ Sci.\ 2017]. Wheeler languages have subsequently been defined as the class of regular languages for which there exists a Wheeler automaton accepting them. Besides their advantages in data indexing, these Wheeler languages also satisfy many interesting properties from a language theoretic point of view [Alanko et al., Inf.\ Comput.\ 2021]. A characteristic yet unsatisfying feature of Wheeler languages however is that their definition depends on a \emph{fixed} order of the alphabet. 

In this paper we introduce the \emph{Universally Wheeler languages} $\UW$, i.e., the regular languages that are Wheeler with respect to \emph{all} orders of a given alphabet. Our first main contribution is to  relate $\UW$ to some very well known regular language classes.
We first show that the \emph{Striclty Locally Testable} languages are strictly included in $\UW$. After noticing that $\UW$ is not closed under taking the complement, we prove that the class of languages for which both the language and its complement are in $\UW$ exactly coincides with those languages that are \emph{Definite} or \emph{Reverse Definite}\\
Secondly, we prove that deciding if a regular language  given by a DFA  is in $\UW$ can be done in quadratic time. We also show that this is optimal unless the Strong Exponential Time Hypothesis (\SETH{}) fails.

\keywords{String Matching \and Deterministic Finite Automata \and Wheeler languages \and Graph Indexing \and Co-lexicographical Sorting}
\end{abstract}

\section{Introduction}
The probably most powerful single algorithmic primitive for indexing data is sorting. The Burrows-Wheeler transform (BWT) is a string transformation that is based on sorting all rotations of a given string. It is at the base of the most efficient text indexes developed in the last three decades such as, for example, the FM index~\cite{FerraginaM05} and the $r$-index~\cite{GagieNP20}. 
This technique of indexing compressed data based on sorting has been extended to sets of strings that can be described by finite automata, i.e., regular languages, by Gagie et al.~\cite{GAGIE201767}. The class of automata that permits sorting its states so that the underlying language can be indexed was termed \emph{Wheeler automata}. Subsequently, there has been an interesting line of works~\cite{AlankoDPP21,becker2023optimal,jacm/CotumaccioDPP23} that aims at understanding the class of regular languages accepted by such automata, the so-called \emph{Wheeler languages}.

The definition of Wheeler languages is, however, characterized by the peculiarity that it depends on a fixed total order of the alphabet. This peculiarity is inherited from the definition of Wheeler automata as follows. Given a totally ordered alphabet, a language is called Wheeler if there exists an automaton accepting that language, which is Wheeler with respect to this given order. This dependency of an alphabet order is unsatisfying for several reasons. Firstly, from a practical point of view, in terms of the original indexing problem, this alphabet order is not very meaningful as any order of the alphabet is equally interesting, e.g. after a suitable alphabet remapping. Secondly, from a more theoretical perspective, if we want to understand the characteristics of Wheeler languages by relating them to other classes of regular languages, those classes typically do not depend on alphabet orders and thus such a relation is not easily possible. 

There are two canonical ways of removing this dependency of an alphabet order: to consider the languages for which there \emph{exists} an alphabet order that makes 
the language Wheeler or to consider those languages that are Wheeler with respect to \emph{all} alphabet orders. The first approach was followed by D'Agostino et al.~\cite{ictcs/Martincigh21} leading to the class of \emph{Existentially Wheeler} languages $\EW$ (named \emph{Generalized Wheeler} in \cite{ictcs/Martincigh21}). However, their approach is unsatisfying from a computational viewpoint. While the Wheeler language recognition problem, i.e., the problem of checking if a language $\La$ is Wheeler given a DFA accepting it, can be solved in quadratic time in the size of the DFA~\cite{becker2023optimal}, the problem of deciding whether $\La$ is in $\EW$ is NP-complete~\cite{ictcs/Martincigh21},  even when the language is given by a DFA.

\paragraph{Our contribution.}
In this article we follow the second approach by defining the class of \emph{Universally Wheeler languages} $\UW$, i.e., those languages that are Wheeler with respect to every order of the alphabet. We present two main contributions in this area. (1)~We manage to relate $\UW$ to some very well known classes of regular languages, namely the \emph{Definite} ($\DEF$), \emph{Reverse Definite} ($\RDEF$), and the \emph{Strictly Locally Testable} ($\SLT$) languages (Section~\ref{sec:DefRevDef}).
More precisely, we first show that $\SLT$ is strictly included in $\UW$. As $\UW$ is not closed under taking the complement, we then consider the class of languages $\La$ such that both $\La$ and $\overline{\La}$ belong to $\UW$ and prove that this class coincides with $\DEF \cup \RDEF$.  
(2)~From a computational point of view, we prove that $\UW$ behaves better than $\EW$:  checking whether the language accepted by a given DFA is in \UW{} can be done in quadratic time in the size of the DFA (Section~\ref{sec:decidingUW}). We complete this algorithmic result with a fine-grained complexity lower bound showing that no truly subquadratic time algorithm is possible unless the Strong Exponential Time Hypothesis (\SETH{}) fails.  \\
Due to space limitations, some  of the proofs  are moved to  the appendix.

\section{Preliminaries}\label{sec:prelim}

\paragraph{Automata.}
A \emph {deterministic finite automaton} (DFA) is  a  tuple $\mt D= (Q, s,\Sigma, \delta, F)$, where  $Q$ is a finite  non-empty set of \emph{states}, $s$ is the 
\emph{initial state} (or \emph{source}), $\delta\colon Q \times \Sigma \rightarrow Q$ is the (possibly partial)  
\emph{transition function}, and $F \subseteq Q$ is the set of \emph{final states}. 

If $\delta$ is not defined on $(q,\sigma)$ we write $\delta(q,\sigma)=\bot$.  If $\delta$ is a total function we say that the DFA is \emph{complete}.  We sometimes    omit the alphabet $\Sigma$ and simply write a  DFA as a tuple $\mt D= (Q, s, \delta, F)$. If we are dealing with a single DFA,    the  letters $Q,s,\delta,F$   always refer to the set of $\mt  D$-states, the $\mt  D$-initial state etc.   Sometimes we shall describe the transition function $\delta$ using triples (edges), where $(q,q',a)$ stands for $q'=\delta(q,a)$.

We denote with $|\delta|$ and $|Q|$ the number of edges and states of an automaton $\mt D= (Q, s, \delta, F)$, and with $|\mt D| = |Q| + |\delta|$ its total size. 

As customary, we extend $\delta$ to strings as follows: for all $q\in Q$, $a\in \Sigma$ and $\alpha \in \Sigma^*$, we define
$\delta(q,\varepsilon) = q$ and $\delta(q,\alpha a)= \delta(\delta(q,\alpha),a)$, 
where $\delta(\bot,a)=\bot$.

Given a DFA $\mt D= (Q, s,\Sigma, \delta, F)$ and any of its states $q\in Q$, we denote with $I_q$ the set of all strings labeling paths starting from the source of $\mt  D$ and ending in $q$, i.e., $I_q = \{\alpha \in \Sigma^*\ :\ \delta(s, \alpha) = q\}$. The language \emph{accepted} (or \emph{recognized}) by $\mt  D$ is then defined as $\La(\mt D)=\cup_{q\in F} I_q$. 
Moreover, for any \(q \in Q\), we denote by $\lambda(q)$ the set of labels of all transitions whose endpoint is $q$, i.e.,

$\lambda(q) \coloneq  \Set{a \in \Sigma}{(\exists p \in Q)(\delta(p, a) = q)}$.

We mostly  consider    \emph{trimmed} DFAs, where     every state is reachable from the initial state and can reach at least one final state. This is not restrictive:  every automaton can be turned into a trimmed and equivalent one by simply deleting unreachable states as well as states not reaching at least one final state. 
For trimmed automata the following hold:
 there can be at most one state without incoming edges, namely $s$, and  
every string that can be read starting from $s$ belongs to the set of prefixes, $\pf{\La}$, of the language $\La$. 

The languages accepted by automata form the class of \emph{regular languages}. 
Given a regular language $\La$, there exists a unique (up to isomorphism) state-wise minimum \emph{complete} DFA recognizing $\La$.
The states of this minimum DFA correspond to the classes of the Myhill-Nerode equivalence relation $\equiv_\La$ on $\Sigma^*$,  defined  as follows:
\[
    \alpha \equiv_\La \beta \iff \{\gamma\in \Sigma^* \mid \alpha\gamma \in \La\}= \{\gamma\in \Sigma^* \mid \beta\gamma \in \La\}.
\]

We denote by $\mt D^c_\La$ the \emph{complete} DFA   having as set of states  the  classes $[\alpha]_\La$  of the   Myhill-Nerode equivalence,  $s=[\varepsilon]_\La$,  
      $  
        \delta([\alpha]_\La,a)=[\alpha\cdot a]_\La,
       $  
and  $F=\{[\alpha]_\La  \mid\,\alpha\in \La\}$.
Then it can be proved that  $\mt D^c_\La$ is the    DFA   with minimum number of states among all complete DFA recognizing $\La$ and it is unique up to isomorphism. 
Let  $\mt D_\La$ be  the DFA obtained from   $\mt D^c_\La$  by considering only    the classes  $[\alpha]_\La$ with   $\alpha \in \pf{\La}$ and the transitions among them.  Then one can easily check that $\mt D_\La$ recognizes $\La$, it is trimmed,  and can differ from $\mt D^c_\La$ for (at most) one non-final \emph{absorbing} state --- that is, a state \(q\in Q\) such that \(\delta(q,a)=q\), for all \(a \in \Sigma\).
The automaton \(\mt D_\La\)   is the DFA with minimum number of states among  (complete or not) DFAs recognizing $\La$. 
  \paragraph{Orders.}
If $(Z,\leq) $ is a partial order, we denote by $(Z,<) $ its corresponding strict partial order. 

If $ \preceq $ is a total order on the alphabet  $ \Sigma $, we extend  it  to strings in $\Sigma^*$ \emph{co-lexicographically}, that is, for $ \alpha, \beta \in \Sigma^* $, we have \index{alpha is smaller than beta@ $ \alpha \preceq \beta $} $ \alpha \preceq \beta $ if and only if either $ \alpha $ is a suffix of $ \beta $ (denoted by $\alpha \dashv \beta$), or there exist $ \alpha', \beta', \gamma \in \Sigma^* $ and $ a, b \in \Sigma $, such that $ \alpha = \alpha' a \gamma $ and $ \beta = \beta' b \gamma $ and $ a \prec b $.

\paragraph{Wheeler Automata and Wheeler Languages.}

Wheeler Automata are a special class of DFAs that leverage an a priori fixed order of the alphabet in order to achieve, among other things, efficient compression and indexing (see \cite{GAGIE201767}).

\begin{definition}[Wheeler Automaton]
\label{def:wheeler automaton}
\sloppy Let $\preceq$ be a fixed  total order of the alphabet $\Sigma$. A  Wheeler DFA (WDFA for brevity) is a trimmed DFA endowed with a total order $(Q, \leq)$ on
the set of states such that the initial state $s$ has no incoming transitions, it is
$\leq$-minimum, and such that the following two \emph{Wheeler axioms} are satisfied. Let $p' = \delta(p, a)$ and $q' = \delta(q, b)$:
\begin{enumerate} 
    \item[] (W1) if $a \prec b $, then $ p' < q'$; 
    \item[] (W2) if $a  = b$, $p < q$, and  $p' \neq q'$, then $ p' < q'$.
\end{enumerate}
\end{definition}

Notice that (W1) implies that a WDFA is \emph{input consistent}, that is, 
$|\lambda(p)|=1$ for all states $p\neq s$.  Any DFA can be transformed into an equivalent input consistent DFA in $\mt O\big(|Q|\cdot |\Sigma|\big)$ time by simply creating, for each state $q\in Q$, at most $|\Sigma|$ copies of $q$, one for each different incoming label of $q$.

\begin{figure}[ht]%
\begin{center}

\begin{tikzpicture}[->,>=stealth', semithick, initial text={}, auto, scale=.25]
 \node[state, label=above:{}, initial] (0) at (0,0) {$s$};
 \node[state, label=above:{}, accepting] (1) at (5,3) {$q_1$};
 \node[state, label=above:{}, accepting] (2) at (10,3) {$q_2$};
 \node[state, label=above:{}] (4) at (5,-3) {$q_4$};
 \node[state, label=above:{}] (3) at (10,-3) {$q_3$};
 \node[state, label=above:{}, accepting] (5) at (15,-3) {$q_5$};

\draw (0) edge [above] node [above] {$a$} (1);
\draw (1) edge [below] node [above] {$c$} (2);
\draw (2) edge [loop above] node [above] {$c$} (N1);
\draw (0) edge [bend left=0, above] node [bend right, above] {$d$} (4);
\draw (4) edge [above] node [above] {$c$} (3);
\draw (3) edge [loop below] node [above, xshift=8] {$c$} (3);
\draw (3) edge[above] node [above] {$f$} (5);

\end{tikzpicture}
 
\end{center}
    \caption{  A WDFA $\mt D$ recognizing the language $\mt L_d = ac^*\cup dc^+f$. The only order that   makes $\mt D$ Wheeler is $s < q_1 < q_2 < q_3 < q_4 < q_5$.}
    \label{fig:example WDFA}
\end{figure}
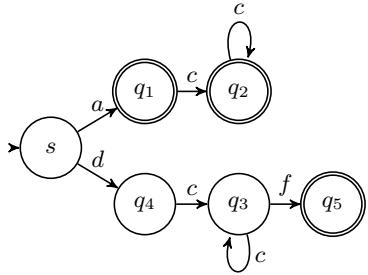

It can be proved (see \cite{AlankoDPP21,jacm/CotumaccioDPP23}) that, if such a total order satisfying Definition~\ref{def:wheeler automaton} exists, then it is unique.  More precisely, given an order $\preceq$ on $\Sigma$ and  a trimmed, input consistent   DFA $\mt D$ in which the initial state has no incoming edges,     we can   define  a partial order $\leq_\mt D$  on $Q$   by
$q\leq _\mt D q' \Leftrightarrow (q=q') \lor   \forall \alpha\in I_q, \forall \beta\in I_{q'} ~(\alpha \prec \beta)$. 
Then it can be proved that  $ \leq_\mt D$ always satisfies properties (W1) and (W2) and that 
  the     DFA is Wheeler precisely when the partial order $ \leq_\mt D$ is total.  
In this case,  we can decide whether $q \leq_\mt D q'$ by simply checking the relative co-lexicographical  order among a pair $(\alpha, \beta)$ with $\alpha\in I_q$ and $\beta\in I_{q'}$. 

\medskip

Switching from automata to languages we define Wheeler languages~\cite{AlankoDPP21}.
\begin{definition}\label{def:WL}
    A language is \emph{Wheeler} if there exists a WDFA recognizing it. 
\end{definition}
One  can  easily generalize Definition~\ref{def:wheeler automaton} to  WNFA     and 
  prove that WDFA and WNFA have the same expressive power  in terms of accepted languages (see \cite[Theorem 3.16]{AlankoDPP21}).
We denote by $\Wh(\preceq)$ the class of all regular languages that are Wheeler w.r.t.\ the order $\preceq$.
In \cite{becker2023optimal} a characterization of languages which are not in  $\Wh(\preceq)$ was  based on the existence of particular cycles in the product automaton $\mt D_\La^2$.
Here we shall use a variant of this result  based on the following definition:

\begin{definition}\label{def: intr}
Let $\mt D= (Q, \Sigma, \delta, s, F)$ be a DFA and  let  $\preceq$ be an order of the alphabet. If    $p,q\in Q$  we define: 
\begin{align*}
    p \intro q &\iff  \exists  \alpha, \alpha' \in I_p,~  \exists\beta, \beta' \in I_q ~ (\alpha \prec  \beta)\wedge  (\beta' \prec   \alpha'). \\
    p \intr q &\iff  \exists \preceq ~p \intro  q.
\end{align*}
\end{definition}

  Recall that  the automaton $\mt D_\La^2$ is defined by $\mt D_\La^2=(Q\times Q, \Sigma, \delta', (s,s), F\times F)$ where $\delta'((p,q),a)=(\delta(p,a), \delta(q,a))$.  Then we have: 
  
\begin{lemma}[{\cite[Theorem 3]{becker2023optimal}}]\label{lem:wcycle}
  Let $\La$ be a regular language,   let $\preceq$ be an order of the alphabet, and let  $\mt D_\La$ be the minimum trimmed DFA accepting $\La$. Then, 
    $\La\notin \Wh(\preceq)$ if and only if $\mt D_\La^2$ contains a cycle $(p_1, q_1) \rightarrow
    (p_2, q_2) \rightarrow \cdots \rightarrow (p_k, q_k) \rightarrow (p_1, q_1)$ such that
 the following hold: (i) $p_1 \neq q_1$, and (ii) $p_1 \intro  q_1$.
 \end{lemma}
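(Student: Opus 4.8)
The plan is to obtain Lemma~\ref{lem:wcycle} as a restatement of \cite[Theorem~3]{becker2023optimal}: that theorem characterizes $\La\notin\Wh(\preceq)$ via the presence in $\mt D_\La^2$ of a cycle through a pair $(p_1,q_1)$ of states whose incoming-string sets $I_{p_1},I_{q_1}$ are interleaved in co-lexicographic order, so all that needs checking is that conditions (i)--(ii) capture exactly this interleaving. I would do this by unfolding Definition~\ref{def: intr}. Since $\mt D_\La$ is deterministic, for $p_1\neq q_1$ the sets $I_{p_1}$ and $I_{q_1}$ are disjoint; and since $\preceq$ extended co-lexicographically is a total order on $\Sigma^*$, for every $\alpha\in I_{p_1}$ and $\beta\in I_{q_1}$ exactly one of $\alpha\prec\beta$ and $\beta\prec\alpha$ holds. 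Hence ``$\alpha\prec\beta$ fails for some $\alpha\in I_{p_1},\beta\in I_{q_1}$'' means exactly ``$\beta\prec\alpha$ for some such pair'', and symmetrically with the two states exchanged; combining the two, $p_1\intro q_1$ holds precisely when neither ``$I_{p_1}$ lies entirely below $I_{q_1}$'' nor ``$I_{q_1}$ lies entirely below $I_{p_1}$'' --- that is, when $p_1$ and $q_1$ are incomparable in the co-lexicographic partial order on the states of $\mt D_\La$ (equivalently, incomparable in $\leq_{\mt D_\La}$ when $\mt D_\La$ is input consistent). Working through $\intro$ rather than through $\leq_{\mt D_\La}$ is the point of the variant: $\intro$ is defined for arbitrary DFAs, while $\mt D_\La$ need not be input consistent. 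Finally, ``$\mt D_\La^2$ contains a cycle through $(p_1,q_1)$'' just says $\delta(p_1,\gamma)=p_1$ and $\delta(q_1,\gamma)=q_1$ for some $\gamma\in\Sigma^+$, so together with the above this is exactly the hypothesis of \cite[Theorem~3]{becker2023optimal}.

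For a self-contained argument I would re-derive the two implications. For the ``if'' direction, assume such a cycle with label $\gamma\in\Sigma^+$ and fix witnesses $\alpha,\alpha'\in I_{p_1}$, $\beta,\beta'\in I_{q_1}$ with $\alpha\prec\beta$ and $\beta'\prec\alpha'$. First I would pump along the cycle: right-concatenating a fixed string preserves strict co-lexicographic order, so for every $k$ one has $\alpha\gamma^k,\alpha'\gamma^k\in I_{p_1}$, $\beta\gamma^k,\beta'\gamma^k\in I_{q_1}$, and $\alpha\gamma^k\prec\beta\gamma^k$, $\beta'\gamma^k\prec\alpha'\gamma^k$. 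Stitching the two witness pairs together --- the technical core of this direction, requiring a short case analysis of whether the sequences $\alpha\gamma^{\bullet},\beta\gamma^{\bullet},\dots$ are co-lex increasing or decreasing --- produces an infinite sequence $\mu_0\prec\mu_1\prec\mu_2\prec\cdots$ of strings of $\pf{\La}$ that alternates infinitely often between $I_{p_1}$ and $I_{q_1}$. Then I would feed this to an arbitrary WDFA $\mt W$ recognizing $\La$: Wheeler automata are monotone (by a routine induction on $|w|+|w'|$ from (W1) and (W2), $w\preceq w'$ implies $\delta_\mt W(s,w)\leq_\mt W\delta_\mt W(s,w')$), so the states reached by $\mu_0,\mu_1,\dots$ form a non-decreasing sequence in the finite chain of states of $\mt W$ and are eventually equal to a single state $z$. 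The set of words $\mt W$ accepts from $z$ then equals $\{\rho\in\Sigma^*\mid\mu_i\rho\in\La\}$ for all large $i$; as the even-indexed $\mu_i$ lie in $I_{p_1}$ and the odd-indexed ones in $I_{q_1}$, this forces some $\alpha_0\in I_{p_1}$ and $\beta_0\in I_{q_1}$ to be Myhill-Nerode equivalent, hence $p_1=q_1$ in $\mt D_\La$ --- a contradiction. So $\La\notin\Wh(\preceq)$.

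For the ``only if'' direction --- the substantive one --- I would argue the contrapositive: if $\mt D_\La^2$ has no cycle meeting (i)--(ii), then every co-lexicographic straddle between two states of $\mt D_\La$ lies off the cyclic part of the automaton, so a suitably bounded unrolling (state-splitting) of the input consistent version of $\mt D_\La$ eliminates all straddles and yields a WDFA for $\La$, i.e.\ $\La\in\Wh(\preceq)$. I expect this to be the main obstacle in a fully self-contained proof: one must bound how far a straddle can propagate before a cycle is forced and show the unrolling terminates --- precisely the cycle analysis of \cite{becker2023optimal}. For the present paper the clean move is the one in the first paragraph: invoke \cite[Theorem~3]{becker2023optimal} and check that its hypothesis is exactly what conditions (i) and (ii) say.
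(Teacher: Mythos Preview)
Your proposal is correct, but the paper proves the lemma by a different route. Your primary plan---read the statement off \cite[Theorem~3]{becker2023optimal} after unfolding Definition~\ref{def: intr} into ``$I_{p_1}$ and $I_{q_1}$ are interleaved in co-lex order'' (equivalently, $p_1,q_1$ are incomparable in the state partial order)---is sound and is the shortest path. The paper instead derives both implications from a \emph{different} external characterization, \cite[Theorem~4.3]{AlankoDPP21} (restated in the appendix as Theorem~\ref{th:Wheeler}), which says $\La\notin\Wh(\preceq)$ iff there are $\gamma$-labeled cycles from distinct states $p,q$ together with words $\alpha\in I_p$, $\beta\in I_q$ (with $\gamma$ not a suffix of either) lying on the \emph{same side} of $\gamma$ in co-lex order. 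Each direction is then a short case analysis: from witnesses $\alpha\prec\beta$, $\beta'\prec\alpha'$ of $p\intro q$ (taken so that $\gamma$ is a suffix of none of them) one splits on whether $\gamma\prec\alpha$, $\alpha'\prec\gamma$, or $\alpha\prec\gamma\prec\alpha'$, and each branch immediately produces a same-side pair; conversely, from $\alpha,\beta\prec\gamma$ one observes $\alpha\gamma\prec\beta\gamma\prec\alpha\gamma^2$, which witnesses $p\intro q$. What the paper's approach buys is a fully explicit, few-line argument with no deferred step; what your self-contained alternative (pumping plus Wheeler monotonicity for the ``if'' direction, bounded unrolling for ``only if'') would buy is independence from both external theorems, at the cost of the ``stitching'' step you flag---which is genuinely the crux, and is exactly what the same-side-of-$\gamma$ case split in the paper's proof handles cleanly.
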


It is not difficult to prove that the class $\Wh(\preceq)$ is closed under intersection. 
However, it is neither closed under complement nor union: indeed, there are examples of languages $\La\in \Wh(\preceq)$ for which no  \emph{complete} WDFA recognizing $\La$ exists (see \cite{AlankoDPP21}) and this prevent the possibility  of finding a WDFA for the complement by simply switching final and non-final states~\cite{Castiglione}.

One peculiarity of the Wheeler notion is that it depends on a fixed order of the alphabet.
In order to compare Wheeler languages with other sub-classes of  regular languages, which do not depend on a fixed order of the alphabet, it is natural to ask for a definition that is independent of such an order. Two natural options are the following. 

\begin{definition}
A regular language $\La$ is:
\begin{itemize}
\item [-]   {\em Existentially Wheeler} if there exists an order $\preceq$ such that $\La\in \Wh(\preceq)$;
\item [-]   {\em Universally Wheeler} if $\La\in \Wh(\preceq)$ for all orders $\preceq$  of the alphabet.
  \end{itemize}
The class of Existentially (Universally) Wheeler languages is denoted by $\EW$ ($\UW$).
\end{definition}

In \cite{DBLP:journals/tcs/DAgostinoMP23} it is proved that deciding \EW{} is  $\NP$-complete.  In Section~\ref{sec:decidingUW} we shall prove that deciding \UW\ is in $\P$, but first in the next lemma we give a necessary condition that languages in \EW{} must satisfy. 

\begin{lemma}\label{lem:3cycles}
    If the minimum trimmed automaton $\mt D_\La$ of a language $\La$ contains three equally labeled cycles starting from three different states, then $\La\not \in \EW$.
\end{lemma}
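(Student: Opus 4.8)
The plan is to invoke Lemma~\ref{lem:wcycle}. Since $\La\in\EW$ means $\La\in\Wh(\preceq)$ for \emph{some} order $\preceq$, it suffices to prove that for \emph{every} order $\preceq$ the product automaton $\mt D_\La^2$ contains a cycle satisfying conditions (i) and (ii) of that lemma. Let $\gamma$ (a non-empty word, since the three cycles have positive length) be their common label and let $r_1,r_2,r_3$ be the three distinct starting states, so $\delta(r_i,\gamma)=r_i$ for $i=1,2,3$. For every $i\neq j$ the pair $(r_i,r_j)$ carries a $\gamma$-labelled self-loop in $\mt D_\La^2$ and $r_i\neq r_j$; thus condition (i) is automatic, and the claim reduces to showing that, for every $\preceq$, there is a pair $i\neq j$ with $r_i\intro r_j$.

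Fix $\preceq$ and assume for contradiction that $r_i\not\intro r_j$ for all $i\neq j$. Unwinding Definition~\ref{def: intr}, and using that the sets $I_{r_i}$ are non-empty (since $\mt D_\La$ is trimmed) and pairwise disjoint, $r_i\not\intro r_j$ means that $I_{r_i}$ and $I_{r_j}$ are $\prec$-separated, i.e.\ either $\alpha\prec\beta$ for all $\alpha\in I_{r_i},\beta\in I_{r_j}$, or $\beta\prec\alpha$ for all such. This relation is transitive, so it linearly orders $\{I_{r_1},I_{r_2},I_{r_3}\}$; after relabelling the three states we may assume $I_{r_1}\prec I_{r_2}\prec I_{r_3}$, that is, $\alpha_1\prec\alpha_2\prec\alpha_3$ for every choice of $\alpha_i\in I_{r_i}$. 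Fix such $\alpha_1,\alpha_2,\alpha_3$; since $\delta(r_i,\gamma)=r_i$, we get $\alpha_i\gamma^{n}\in I_{r_i}$ for all $n\ge 0$.

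The contradiction exploits that the \emph{middle} set $I_{r_2}$ is squeezed between $I_{r_1}$ and $I_{r_3}$, while the words $\alpha_1\gamma^{n}$ and $\alpha_3\gamma^{m}$ look, read from the right, increasingly like the periodic word $\gamma\gamma\gamma\cdots$. Let $\pi$ be the length-$|\alpha_2|$ suffix of $\gamma^{n}$ for all sufficiently large $n$ (it does not depend on $n$). First, $\alpha_2\neq\pi$: otherwise $\alpha_2=\pi$ would be a proper suffix of $\alpha_1\gamma^{n}$ for $n$ large, forcing $\alpha_2\prec\alpha_1\gamma^{n}$, which contradicts $I_{r_1}\prec I_{r_2}$. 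Let then $L<|\alpha_2|$ be the length of the longest common suffix of $\alpha_2$ and $\pi$, and let $a$ and $b$ be the $(L+1)$-th letters from the right of $\alpha_2$ and of $\pi$, respectively (so $a\neq b$). Comparing co-lexicographically: for $n$ large, neither $\alpha_1\gamma^{n}$ nor $\alpha_2$ is a suffix of the other, their first mismatch from the right is at position $L+1$, where $\alpha_1\gamma^{n}$ carries $b$ and $\alpha_2$ carries $a$; hence $\alpha_1\gamma^{n}\prec\alpha_2$ iff $b\prec a$. Symmetrically, for $m$ large, $\alpha_2\prec\alpha_3\gamma^{m}$ iff $a\prec b$. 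But $\alpha_1\gamma^{n}\in I_{r_1}$, $\alpha_2\in I_{r_2}$, $\alpha_3\gamma^{m}\in I_{r_3}$, so $I_{r_1}\prec I_{r_2}\prec I_{r_3}$ forces $\alpha_1\gamma^{n}\prec\alpha_2$ and $\alpha_2\prec\alpha_3\gamma^{m}$, hence $b\prec a$ and $a\prec b$: contradiction. Therefore some pair satisfies $r_i\intro r_j$, its $\gamma$-self-loop in $\mt D_\La^2$ is a cycle meeting (i) and (ii), and Lemma~\ref{lem:wcycle} gives $\La\notin\Wh(\preceq)$. As $\preceq$ was arbitrary, $\La\notin\EW$.

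The delicate point, and the only place where care is needed, is the co-lexicographic bookkeeping in the last paragraph: one must check that for $n$ (resp.\ $m$) large the relevant length-$|\alpha_2|$ suffix of $\alpha_1\gamma^{n}$ (resp.\ $\alpha_3\gamma^{m}$) is exactly $\pi$, that neither word involved is a suffix of the other, and that the leftmost (from the right) discrepancy occurs precisely at position $L+1$, so that the comparison is decided solely by the order of $a$ versus $b$. The rest — the reduction through Lemma~\ref{lem:wcycle}, transitivity of $\prec$-separation, and the pumping identity $\alpha_i\gamma^{n}\in I_{r_i}$ — is routine.
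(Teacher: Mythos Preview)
Your proof is correct, but it follows a different line from the paper's. The paper's argument does not go through Lemma~\ref{lem:wcycle} at all; instead it invokes the characterization from \cite[Theorem~4.3]{AlankoDPP21} (stated in the appendix as Theorem~\ref{th:Wheeler}): pick words $\zeta_1,\zeta_2,\zeta_3$ reaching the three states, replace $\gamma$ by a power so that it is not a suffix of any $\zeta_i$, and observe that for any order $\preceq$ at least two of the $\zeta_i$ lie on the same side of $\gamma$ --- a one-line pigeonhole. Those two states then satisfy all three conditions of Theorem~\ref{th:Wheeler}, giving $\La\notin\Wh(\preceq)$ directly. Your argument instead assumes all three pairs fail $\intro$, obtains a linear order $I_{r_1}\prec I_{r_2}\prec I_{r_3}$, and derives a contradiction by squeezing $\alpha_2$ between $\alpha_1\gamma^n$ and $\alpha_3\gamma^m$, whose long suffixes coincide. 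What your route buys is self-containment: it uses only Lemma~\ref{lem:wcycle}, which is stated in the body of the paper, and avoids appealing to the external Theorem~\ref{th:Wheeler}. What the paper's route buys is brevity: once Theorem~\ref{th:Wheeler} is available, the proof is three lines and no co-lexicographic bookkeeping is needed.
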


Note that the condition of Lemma~\ref{lem:3cycles} can easily be checked in cubic time. 

\section{Definite and Reverse Definite Languages meet  Wheelerness} 
\label{sec:DefRevDef}

We first notice  that the class $\UW$ is not closed under complement (see Example \ref{ex} in the appendix). 
In this section we give a characterization of the class of languages $\La$ for which both $\La$ and $\overline \La$ are in the class $\UW$. This characterization involves two well known subclasses of star-free languages, the definite and reverse definite ones \cite{Brzozowski}. Remember that, given two languages $\La, \La'$,  we denote by $\La\La'$ the set containing all concatenations of a word in $\La$ with a word in $\La'$ (with $\La\La'$ equal to the empty language if either one of $\La, \La'$ is). 

\begin{definition}\;A regular language $\La$ is:
 \emph{definite} (\DEF{}) if  $\La=F\cup \Sigma^*G$, for finite  $F,G\subseteq \Sigma^*$;
  \emph{reverse definite} (\RDEF{}) if  $\La=F\cup G\Sigma^* $, for finite  $F,G\subseteq \Sigma^*$;
 \emph{strictly locally testable} (\SLT{}, see~\cite{Caron}) if  there are finite sets $H,K,W,F$ s.t.
  $\La=F\cup (H\Sigma^* \cap \Sigma^* K) \setminus \Sigma^* W \Sigma^*$.
\end{definition}

Our aim is to prove that the class $\DEF\cup \RDEF$ consists in exactly the $\UW$-languages for which the complement is in $\UW$ as well. 
First, using a characterization due to Caron~\cite{Caron}, we compare $\SLT$ and $\UW$ languages.

\begin{lemma} [{\cite[Proposition 3.2 and 3.3]{Caron}}] \label{lem:caron} Let $\La$ be a regular language and let $\mt D_\La$ be its minimum trimmed DFA. Then  $\La\not \in \SLT$ if and only if there exists  a pair of equally labeled cycles $\mt C_p, \mt C_q$ in $\mt D_\La$ starting respectively from  states $p\neq q$.  
    \end{lemma}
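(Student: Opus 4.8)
Since the statement is Caron's characterization~\cite{Caron}, one option is simply to cite it; nonetheless, here is how I would reprove it, splitting into the two implications and working with the ``windowed'' form of the $\SLT$ normal form: $\La\in\SLT$ iff there are an integer $k$, finite sets $H,K,W$ of words of length at most $k$, and a finite set $E$ of words of length less than $k$, such that a word $z$ with $|z|\ge k$ belongs to $\La$ iff $z$ has a prefix in $H$, a suffix in $K$, and no factor in $W$, while $\La\cap\Sigma^{<k}=E$; this follows from the definition by taking $k$ larger than every word occurring in the defining sets. For a word $v$ write $\mathrm{pref}_k(v)$, $\mathrm{suff}_k(v)$ for its length-$k$ prefix and suffix and $\mathrm{Fact}_k(v)$ for its set of length-$k$ factors; for $|v|\ge k$, whether $v$ has a prefix in $H$ (resp.\ a suffix in $K$, resp.\ no factor in $W$) depends only on $\mathrm{pref}_k(v)$ (resp.\ $\mathrm{suff}_k(v)$, resp.\ $\mathrm{Fact}_k(v)$).

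For ``two equally labeled cycles $\Rightarrow\La\notin\SLT$'', suppose $\mt C_p,\mt C_q$ are cycles from $p\ne q$ both labeled by a non-empty word $w$, so $\delta(p,w)=p$, $\delta(q,w)=q$. As $\mt D_\La$ is minimal some $\gamma$ distinguishes $p$ and $q$; swapping $p$ and $q$ if needed, assume $\delta(p,\gamma)$ is final and $\delta(q,\gamma)$ is not. As $\mt D_\La$ is trimmed, fix $\pi$ reaching $p$, $\rho$ reaching $q$, and $\sigma$ with $\delta(q,\sigma)$ final. Assuming $\La\in\SLT$ with window $k$, pick $N$ with $N|w|\ge k$, large enough that $z_1:=\rho\,w^N\sigma$, $z_2:=\pi\,w^N\gamma$ and $y:=\rho\,w^N\gamma$ all have length $\ge k$. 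Then $z_1,z_2\in\La$ (they reach $\delta(q,\sigma)$ and $\delta(p,\gamma)$, both final) while $y\notin\La$ (it reaches the non-final $\delta(q,\gamma)$). But $\mathrm{pref}_k(y)=\mathrm{pref}_k(z_1)$, $\mathrm{suff}_k(y)=\mathrm{suff}_k(z_2)$, and every length-$k$ factor of $y=\rho\,w^N\gamma$ lies either inside the prefix $\rho w^N$ (hence is a factor of $z_1$) or inside the suffix $w^N\gamma$ (hence a factor of $z_2$) — the inequality $N|w|\ge k$ rules out a factor escaping both. Since $z_1,z_2$ are long members of $\La$, each has a prefix in $H$, a suffix in $K$, and no factor in $W$; hence so does $y$, which is long, so the normal form forces $y\in\La$ — a contradiction.

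For the converse, assume $\mt D_\La$ has no two equally labeled cycles from distinct states, i.e.\ for every non-empty $w$ the map $\delta(\cdot,w)$ on $Q$ has at most one fixed point. Passing to idempotent powers in the finite transition monoid, every idempotent $\delta(\cdot,w^m)$ has at most one fixed point, i.e.\ image of size $\le 1$; a pigeonhole argument on the prefixes of a word (two prefixes inducing the same transformation yield a non-empty subword whose idempotent power fixes the whole image of the shorter prefix's transformation, forcing that image to be a singleton by hypothesis) then produces a bound $B$ such that $\delta(\cdot,u)$ has image of size $\le 1$ whenever $|u|\ge B$ — that is, $\delta(q,u)$ is then independent of $q$ over all $q$ where it is defined, say $\delta(q,u)=r(u)$. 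I would then argue: (i) for $z$ with $|z|\ge B$, membership of $z$ in $\pf{\La}$ (readability from $s$) depends only on $\mathrm{pref}_B(z)$ and on $z$ having no ``dead'' factor of length $\le B{+}1$, since once the run has survived $B$ steps its state is pinned to $r(\cdot)$ of the last $B$ letters and can only die by hitting a locally bad window; (ii) if moreover $z\in\pf{\La}$ and $|z|\ge B$, then $\delta(s,z)=r(\mathrm{suff}_B(z))$. Combining, for $|z|\ge B$ one gets $z\in\La$ iff $\mathrm{pref}_B(z)$ is in a suitable set $H$, $\mathrm{suff}_B(z)$ is in a suitable set $K$, and $z$ has no factor in a suitable finite set $W$; collecting the finitely many short members of $\La$ into $E$ yields the $\SLT$ normal form, so $\La\in\SLT$.

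The delicate point is the converse: cleanly handling the partiality of $\delta$ (recall $\mt D_\La$ carries no sink state), so that ``being readable from $s$'' is itself shown to be a prefix condition plus a forbidden-factor condition — this is precisely where the $(B{+}1)$-window bookkeeping is needed, and where $\SLT$ parts ways with the Boolean-closed (complete-DFA) class of generalized definite languages. In the forward direction the only routine care is the $\SLT$-window reformulation; its combinatorial heart — that the length-$k$ factors of $\rho\,w^N\gamma$ are covered by those of $\rho\,w^N\sigma$ and $\pi\,w^N\gamma$ once $N|w|\ge k$ — is immediate.
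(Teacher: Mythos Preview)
The paper does not actually prove this lemma; it is stated with a citation to Caron~\cite{Caron} and used as a black box. So there is nothing to compare your argument against, and your instinct that ``one option is simply to cite it'' matches exactly what the paper does.

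That said, your self-contained reconstruction is correct. The forward direction is clean: the choice of $z_1=\rho w^N\sigma$, $z_2=\pi w^N\gamma$, $y=\rho w^N\gamma$ with $N|w|\ge k$ does force $y$ to share its length-$k$ prefix with $z_1$, its length-$k$ suffix with $z_2$, and to have every length-$k$ factor covered by one of $z_1,z_2$; the only small thing to make explicit is that ``$\delta(q,\gamma)$ non-final'' may mean $\delta(q,\gamma)=\bot$ in the trimmed (partial) automaton, which still gives $y\notin\La$ and causes no trouble. For the converse, your pigeonhole on prefixes is the right move: if $\delta(\cdot,u_i)=\delta(\cdot,u_j)$ with $i<j$, then every state in the image of $\delta(\cdot,u_i)$ is a fixed point of $\delta(\cdot,v)$ for $v=u_{i+1}\cdots u_j$, hence that image has size $\le 1$ and a fortiori so does the image of $\delta(\cdot,u)$; the detour through idempotent powers is not needed. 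The synchronization statement (ii) then follows immediately, and your $(B{+}1)$-window bookkeeping for (i) goes through once one defines $W$ as the set of length-$(B{+}1)$ words $v$ with $\text{Im}(\delta(\cdot,v))=\emptyset$: if $v\notin W$, some $q$ has $\delta(q,v)\ne\bot$, and since $\delta(q,v_{1..B})$ must equal $r(v_{1..B})$ this forces $\delta(r(v_{1..B}),v_{B+1})\ne\bot$, which is exactly the inductive step keeping the run alive. So the sketch completes to a full proof with essentially no surprises.
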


   Since a  pair of equally labeled cycles in $\mt D_\La$ corresponds to a single  cycle  in $\mt D_\La^2$, combining the previous  lemma with Lemma~\ref{lem:wcycle} we obtain:

    \begin{lemma}\label{lem:slt_are_uw} $\SLT \subseteq \UW$. In particular, all languages in $\DEF\cup \RDEF$ are in $\UW$. 
    \end{lemma}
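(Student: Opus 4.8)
The plan is to prove the two statements in sequence, using the combinatorial characterizations already available. For the inclusion $\SLT \subseteq \UW$, I would argue by contraposition: suppose $\La \notin \UW$, so there is \emph{some} order $\preceq$ with $\La \notin \Wh(\preceq)$. By Lemma~\ref{lem:wcycle}, $\mt D_\La^2$ then contains a cycle $(p_1,q_1)\to(p_2,q_2)\to\cdots\to(p_k,q_k)\to(p_1,q_1)$ with $p_1\neq q_1$ and $p_1 \intro q_1$. Projecting this cycle onto the two coordinates and reading off the common sequence of edge labels around it, one obtains a pair of cycles $\mt C_p$ and $\mt C_q$ in $\mt D_\La$, both spelling the same label word, starting from $p_1$ and $q_1$ respectively; since $p_1 \neq q_1$ these are two equally labeled cycles from distinct states. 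By Lemma~\ref{lem:caron} this witnesses $\La \notin \SLT$. Contrapositively, $\SLT \subseteq \UW$.

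The one subtlety in the above is that the cycle in $\mt D_\La^2$ must genuinely project to \emph{cycles} in each coordinate, not merely to closed walks that revisit states; but a closed walk through a state $p_1$ in a DFA always contains a simple cycle through some state it passes, and we only need the existence of \emph{a} pair of equally labeled cycles from distinct states — actually we must be a little careful that the two projected cycles start from \emph{different} states. If at some point the two coordinates coincide, say $p_i = q_i$, then from that index onward the two projections agree forever (the DFA is deterministic), contradicting $p_1 \neq q_1$ after going around the cycle; hence $p_i \neq q_i$ for all $i$, and in particular the two coordinate-projections of the full cycle are equally labeled cycles starting from the distinct states $p_1$ and $q_1$. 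This is the step I expect to require the most care, though it is still routine. Note the condition $p_1 \intro q_1$ from Lemma~\ref{lem:wcycle} is not even needed here — the mere existence of the cycle with $p_1 \neq q_1$ suffices, which is consistent with $\SLT$ being a \emph{strictly} smaller class.

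For the ``in particular'' clause, it remains only to observe $\DEF \cup \RDEF \subseteq \SLT$. This is classical: taking $H = \Sigma$, $W = \emptyset$ in the $\SLT$ template shows $F \cup \Sigma^* K \in \SLT$, which is exactly the shape of a definite language (with $G$ playing the role of $K$), and symmetrically $K = \Sigma$, $W=\emptyset$ gives the reverse definite languages $F \cup H \Sigma^*$; more directly, both $\DEF$ and $\RDEF$ are well known to be subclasses of $\SLT$ \cite{Brzozowski,Caron}. Combining $\DEF \cup \RDEF \subseteq \SLT \subseteq \UW$ gives the claim. The strictness of $\SLT \subsetneq \UW$ is not asserted in this lemma (it is part of the broader contribution stated in the introduction), so nothing further is needed here.
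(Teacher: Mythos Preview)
Your proposal is correct and follows exactly the paper's approach: the paper's proof is the one-line remark that a pair of equally labeled cycles in $\mt D_\La$ corresponds to a single cycle in $\mt D_\La^2$, so combining Lemma~\ref{lem:caron} with Lemma~\ref{lem:wcycle} yields the inclusion by contraposition. Your write-up simply unpacks this (the extra care that $p_i\neq q_i$ for all $i$ is correct but not needed, and for the $\DEF\subseteq\SLT$ template you want $H=\{\epsilon\}$ rather than $H=\Sigma$, a harmless slip since you also cite the classical inclusion).
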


In order to prove that the class of regular languages $\La$ for which both  $\La$ and $\overline \La$ are in $\UW$ coincides with the class $\DEF\cup \RDEF$     we need a careful inspection,  for a regular language $\La$,  on   the differences between  the minimum complete DFA for $\La$,  $D^c_\La$, its trimmed version $D_\La$,    the minimum complete DFA for $\overline \La$, $  D^c_{\overline {\mt  L}}$, and its trimmed version  
  $ D_{\overline {\mt  L}}$, as  in the following remark. 

  \begin{remark}\label{rem:careful}
Since the Myhill-Nerode equivalence on $\Sigma^*$ is the same for $\La$ and $\overline{\La}$,  we have that $  D^c_\La$ and $  D^c_{\overline {\mt  L}}$ only differ for the final states,  which  in  $  D^c_{\overline {\mt  L}}$ is the  complement of the one in $  D^c_\La$. 
Let $Q$ be the set of $  D^c_\La$-states.  
Consider the subsets $Q_1,Q_2,Q_3\subseteq Q$ 
 defined as follows:
 \begin{eqnarray*}
    Q_1 & = & \{q \in Q \mid I_q \cap (\pf\La \setminus \pf{\overline\La})\neq \emptyset\}, \\
     Q_2 & = & \{q \in Q \mid I_q \cap (\pf{\overline\La} \setminus \pf\La)\neq \emptyset\}, \\
     Q_3 & = & \{q \in Q \mid I_q \cap (\pf{\overline\La} \cap \pf\La)\neq \emptyset\}.
 \end{eqnarray*}
States in \(Q_1\) are final for \(\mt D^c_\La\) and can reach only final states, hence, by minimality,  \(|Q_1|\leq 1\). Moreover,   \(|Q_1|= 1\) iff \(\pf\La \setminus \pf{\overline\La}\neq \emptyset\), and if \(Q_1= \{q\}\)  then $q$ is an  absorbing state  (that is, $\delta(q,a)=q$, for all $a\in \Sigma$)  and it    is final for \(\mt D^c_{{\La}}\). 
The analogous  holds for \(Q_2\) with respect to \(\mt D^c_{\overline{\La}}\):  \(|Q_2|= 1\) iff \(\pf{\overline\La} \setminus \pf\La\neq \emptyset\), and if \(Q_2= \{\overline q\}\)  then $\overline q$ is an  absorbing state which is final in 
\(\mt D^c_{\overline{\La}}\).
Finally,    $D_\La$ is obtained from $D^c_\La$ by (eventually) erasing the   absorbing state $\overline q$ and all transition arriving in $\overline q$, while   $ D_{\overline {\mt  L}}$  is obtained from $ D^c_{\overline {\mt  L}}$ by (eventually) erasing the  absorbing state $ q$ and all transition arriving in $q$ (see Example~\ref{ap:ex} in the appendix). 
\end{remark}

Considering  a  regular language $\La$ with $\pf\La=\Sigma^*$ we notice the following:  
\begin{remark} \label{rem:complement}  For every order $(\Sigma,\preceq)$, if $\pf\La=\Sigma^*$  and    
$\La\in \Wh(\preceq)$ then $\overline \La\in \Wh(\preceq)$.   
 This follows by the fact that if $\pf\La=\Sigma^*$, then any deterministic
 automaton accepting $\La$ is complete and, thus, a  Wheeler DFA for $\La$ becomes a Wheeler DFA for $\overline \La$ by just exchanging final and not final states.  In particular, if $\La\in \UW$ and $\pf\La=\Sigma^*$,  then $\overline \La\in \UW$. Notice that  the converse is not true, because  there are example in which  $\pf\La=\Sigma^*$,  $\overline \La\in \UW$ but  $\La\not \in \UW$ (see 
 Example~\ref{ex} in the appendix).
 \end{remark}

Using the previous remarks we can characterize definite languages using $\UW$.
\begin{lemma}\label{lem:Sigma^*}   
  $\La\in \DEF$   iff  $\La$ is finite,  or  $\La\in \UW$ and  $\pf\La=\Sigma^*$.   
\end{lemma}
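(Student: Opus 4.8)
The plan is to prove both directions of the biconditional, handling the finite case trivially and focusing on the nontrivial equivalence between $\La \in \DEF$ and the conjunction ``$\La \in \UW$ and $\pf\La = \Sigma^*$'' for infinite $\La$.

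For the easy direction, suppose $\La \in \DEF$, say $\La = F \cup \Sigma^* G$ with $F, G$ finite. If $\La$ is finite we are done, so assume $\La$ is infinite; then $G \neq \emptyset$, and since $\Sigma^* G \subseteq \La$ we get $\pf{\Sigma^* G} \subseteq \pf\La$. I would check that $\pf{\Sigma^* G} = \Sigma^*$ whenever $G \neq \emptyset$ (every string is a prefix of some $wg$ with $g \in G$), hence $\pf\La = \Sigma^*$. It remains to show $\La \in \UW$, which by Lemma~\ref{lem:slt_are_uw} follows once we know $\DEF \subseteq \SLT$; this is a classical fact, but if we want it self-contained we can instead invoke Lemma~\ref{lem:caron} directly: a definite language's minimum trimmed DFA cannot contain two equally labeled cycles from distinct states, because after reading a sufficiently long string the state reached depends only on a bounded-length suffix, so any two states lying on equally labeled cycles would be forced to coincide. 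Either route gives $\La \in \UW$.

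For the converse, assume $\La$ is infinite, $\La \in \UW$, and $\pf\La = \Sigma^*$; I want $\La \in \DEF$. The key structural fact to extract is that since $\La \in \UW \subseteq \Wh(\preceq)$ for every order, by Lemma~\ref{lem:wcycle} the product automaton $\mt D_\La^2$ has no cycle through a pair $(p_1,q_1)$ with $p_1 \neq q_1$ and $p_1 \intr q_1$ (note $\intr$ here quantifies existentially over orders, matching ``Wheeler for all orders''). Since $\pf\La = \Sigma^*$, the DFA $\mt D_\La$ is complete, so every state has every outgoing transition, and from any state reading any long enough string leads into a cycle. The plan is to argue that the absence of ``bad'' cycles in $\mt D_\La^2$ forces $\mt D_\La$ to have, for each letter $a$, at most one $a$-cycle, and more strongly that there is a bound $k$ such that $\delta(p, \alpha)$ depends only on the length-$k$ suffix of $\alpha$ for all sufficiently long $\alpha$ — equivalently, $\mt D_\La$ is a ``definite'' (suffix-reading) automaton. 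This is precisely the automaton-theoretic characterization of $\DEF$: $\La$ is definite iff in its minimal DFA, for all states $p, q$ and all sufficiently long $\alpha$, $\delta(p,\alpha) = \delta(q,\alpha)$.

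The main obstacle will be the last step: upgrading ``no bad cycle in $\mt D_\La^2$'' to the synchronization/convergence property defining $\DEF$. The subtlety is that avoiding cycles with the $\intr$-property is a statement about co-lexicographic incomparability of the incoming-string sets $I_p, I_q$, and one must show that if $\delta(p,\alpha) = p'$ and $\delta(q,\alpha) = q'$ with $p' \neq q'$ lying on a common cycle in $\mt D_\La^2$, then the full completeness of $\mt D_\La$ lets us build strings witnessing $p' \intr q'$ — typically by prepending arbitrary letters to force both $\prec$ and $\succ$ comparisons, using that any string is readable into any state. I would carefully construct, for a putative surviving cycle, two orders (or one order with two string pairs) realizing the $\intro$ condition, contradicting $\La \in \UW$; the completeness hypothesis $\pf\La = \Sigma^*$ is exactly what makes $I_p$ and $I_q$ rich enough for this. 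Once every equally-labeled pair of states synchronizes after a bounded number of steps, a standard pigeonhole argument on $\mt D_\La^2$ gives the uniform bound $k$, and then $\La = F \cup \Sigma^{\ge k}\!\cap\La$ can be rewritten in the form $F' \cup \Sigma^* G$ by taking $G$ to be the relevant length-$k$ suffixes, completing the proof.
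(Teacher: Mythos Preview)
Your forward direction is correct and matches the paper. Your final step for the converse --- deducing $\La\in\DEF$ from the absence of off-diagonal cycles in $\mt D_\La^2$ via a synchronization/pigeonhole argument on the complete product automaton --- is correct and genuinely different from the paper's route. The paper instead first concludes $\La\in\SLT$, writes $\La=F\cup\bigl((H\Sigma^*\cap\Sigma^*K)\setminus\Sigma^*W\Sigma^*\bigr)$, and then uses $\pf\La=\Sigma^*$ to force $\varepsilon\in H$ and $W=\emptyset$, collapsing the expression to $F\cup\Sigma^*K$. Your argument (the off-diagonal part of $\mt D_\La^2$ is acyclic, the automaton is complete, hence every pair hits the diagonal within $|Q|^2$ steps, giving the Perles--Rabin--Shamir characterization of $\DEF$) is cleaner and avoids the $\SLT$ normal form entirely.

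The crucial intermediate step, however --- showing that in a complete $\mt D_\La$ every off-diagonal cycle $(p',q')$ yields $p'\intr q'$ --- has a real gap, and your stated mechanism would fail. ``Prepending arbitrary letters'' to a string in $I_{p'}$ does not produce another string in $I_{p'}$, and ``any string is readable into any state'' is simply false; completeness only says any string is readable \emph{from} any state, which does nothing to enrich $I_{p'}$ or $I_{q'}$. Completeness has to be used quite differently. The paper exploits it as follows: from the source $s$, read powers of the cycle label $\gamma$ (this is where completeness is used --- the path is always defined), and let $r$ be a state on the resulting eventual cycle, so $r$ also carries a $\gamma^{k}$-cycle. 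If $r\notin\{p',q'\}$ one has three equally-labeled cycles from three distinct states, and Lemma~\ref{lem:3cycles} gives $\La\notin\EW$, contradicting $\La\in\UW$. If $r\in\{p',q'\}$, say $r=p'$, then $p'$ is reached by a power of $\gamma$ (hence co-lex below the cycle label in every order), and since the cycle label is not a suffix of the chosen $\beta\in I_{q'}$ one can pick an order $\preceq$ with $\beta\prec\gamma^{k}$ as well, so Theorem~\ref{th:Wheeler} yields $\La\notin\Wh(\preceq)$. This case analysis --- a third cycle from $s$, then three-cycles versus coincidence --- is the missing idea; without it your plan for this step does not go through.
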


We now consider the case $\pf\La\neq \Sigma^*$. 
If  $|\Sigma|=1$ then  $\UW=\EW$  coincide with the class of finite and cofinite languages which, for unary alphabet, is    equal to $\DEF\cup \RDEF$. 
Hence, for unary languages,  $\La, \overline \La\in \UW$   iff    $\La\in \DEF\cup \RDEF$ holds.  
To handle the case $|\Sigma|\geq 2$ we prove:

\begin{lemma}\label{lemma:esiste}
Let  $|\Sigma|\geq 2$ and let  $\La\in \UW$. If $\pf\La\neq \Sigma^*$, then
$\exists \preceq \;\overline \La\not \in \Wh(\preceq)$ iff $|\pf\La\cap \pf{\overline\La}|=\infty$.

\end{lemma}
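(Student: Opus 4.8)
We may assume $\La\neq\emptyset$: otherwise $\overline\La=\Sigma^*$ is Wheeler for every order and $\pf\La\cap\pf{\overline\La}=\emptyset$, so both sides of the claimed equivalence are false. The plan is then to use Lemma~\ref{lem:wcycle} to rephrase ``$\overline\La\notin\Wh(\preceq)$ for some $\preceq$'' as the existence of a suitable cycle in $\mt D_{\overline\La}^2$, to exploit $\La\in\UW$ to discard all such cycles except one distinguished family, and to show that this family is non-empty precisely when $\pf\La\cap\pf{\overline\La}$ is infinite. First I would set up the structure via Remark~\ref{rem:careful}: since $\pf\La\neq\Sigma^*$, the set $Q_2$ of $\mt D^c_\La$ is a singleton $\{\overline q\}$ with $\overline q$ absorbing and non-final, and $\mt D_\La$, $\mt D_{\overline\La}$ are both obtained from $\mt D^c_\La$ by deleting one absorbing state ($\overline q$, respectively the possible state of $Q_1$); hence they share the same transitions among the states of $Q_3$ and $\mt D_{\overline\La}$ has state set $\{\overline q\}\cup Q_3$. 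Since $\overline q$ is absorbing and reachable, $I_{\overline q}\supseteq m_0\Sigma^*$ for any fixed shortest entry word $m_0\in I_{\overline q}$; and, as noted in Remark~\ref{rem:careful}, $\bigcup_{q\in Q_3}I_q=\pf\La\cap\pf{\overline\La}$, with $Q_3$ finite.

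Next I would apply Lemma~\ref{lem:wcycle} to $\overline\La$ and $\mt D_{\overline\La}$ and classify the cycles of $\mt D_{\overline\La}^2$. Because $\overline q$ is absorbing, along any such cycle each coordinate is either constantly $\overline q$ or stays inside $Q_3$. The ``both coordinates constantly $\overline q$'' case violates condition~(i) of Lemma~\ref{lem:wcycle}, and a cycle contained in $Q_3\times Q_3$ is also a cycle of $\mt D_\La^2$ (transitions within $Q_3$ agree), so since $\La\in\UW$ it cannot satisfy conditions~(i) and~(ii) simultaneously for any order --- otherwise it would witness $\La\notin\Wh(\preceq)$ for that order. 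Thus the only candidates that can witness $\overline\La\notin\Wh(\preceq)$ have the form $(\overline q,q_1)\to\dots\to(\overline q,q_1)$ (or its mirror image, which by symmetry of $\intr$ gives the same condition), where $q_1,\dots,q_k$ form a cycle of $\mt D_{\overline\La}$ inside $Q_3$, and witnessing requires $\overline q\intro q_1$. This gives the equivalence: there is an order $\preceq$ with $\overline\La\notin\Wh(\preceq)$ if and only if $\mt D_{\overline\La}$ contains a cycle inside $Q_3$ whose base state $q_1$ satisfies $\overline q\intr q_1$.

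Two facts then remain. The first is routine: a path of $\mt D^c_\La$ ending in a state of $Q_3$ cannot pass through the absorbing states $\overline q$ or (if present) the state of $Q_1$, so it lies entirely inside $Q_3$; combined with $Q_3$ being finite and $\bigcup_{q\in Q_3}I_q=\pf\La\cap\pf{\overline\La}$, this shows that $\mt D_{\overline\La}$ contains a cycle inside $Q_3$ iff some $q_1\in Q_3$ has $|I_{q_1}|=\infty$ iff $|\pf\La\cap\pf{\overline\La}|=\infty$. The second fact is the crux, and I expect it to be the main obstacle: \emph{every state $q_1\in Q_3$ lying on a cycle inside $Q_3$ satisfies $\overline q\intr q_1$}. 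Such a $q_1$ has $I_{q_1}$ infinite, hence contains a non-empty string $\mu$; using $|\Sigma|\geq2$ I would pick a total order $\preceq$ whose minimum letter $a$ is such that $\mu$ is not a power of $a$ (any $a$ works if $\mu$ uses two distinct letters, otherwise take $a$ different from the unique letter of $\mu$), and let $b$ be the $\preceq$-maximum letter. Then, with $N,M>|\mu|$ and $\alpha=m_0a^N$, $\alpha'=m_0b^M$ --- both lying in $m_0\Sigma^*\subseteq I_{\overline q}$ --- a short co-lexicographic check gives $\alpha\prec\mu$ (the rightmost position at which $\alpha$ and $\mu$ disagree carries the minimum letter $a$ in $\alpha$, and $\mu$ is not a suffix of $\alpha$ as $\mu$ is not a power of $a$) and $\mu\prec\alpha'$ (either $\mu$ is a suffix of $\alpha'$, or the rightmost position at which they disagree carries the maximum letter $b$ in $\alpha'$). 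Hence $\beta=\beta'=\mu$ witness $\overline q\intro q_1$, and so $\overline q\intr q_1$.

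Assembling these facts yields both directions. If $|\pf\La\cap\pf{\overline\La}|=\infty$, then $\mt D_{\overline\La}$ has a cycle inside $Q_3$; picking any state $q_1$ on it, the crux provides an order $\preceq$ with $\overline q\intro q_1$, so the cycle $(\overline q,q_1)\to\dots\to(\overline q,q_1)$ satisfies conditions~(i) and~(ii) of Lemma~\ref{lem:wcycle} for that $\preceq$, i.e.\ $\overline\La\notin\Wh(\preceq)$. Conversely, if $|\pf\La\cap\pf{\overline\La}|<\infty$, there is no cycle of $\mt D_{\overline\La}$ inside $Q_3$, and by the classification no cycle of $\mt D_{\overline\La}^2$ can satisfy~(i) and~(ii) for any order, whence $\overline\La\in\Wh(\preceq)$ for every $\preceq$. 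The delicate point is the co-lexicographic argument in the crux step: one must produce, using a single order, strings of $I_{\overline q}$ lying co-lexicographically both below and above a common witness from $I_{q_1}$, which is precisely where $|\Sigma|\geq2$ and the rich prefix set $m_0\Sigma^*\subseteq I_{\overline q}$ (a consequence of $\overline q$ being absorbing, i.e.\ of $\pf\La\neq\Sigma^*$) come into play.
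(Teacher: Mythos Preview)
Your proof is correct and reaches the same conclusion as the paper, but along a noticeably different route. For the direction ``$\exists\preceq\,\overline\La\notin\Wh(\preceq)\Rightarrow|\pf\La\cap\pf{\overline\La}|=\infty$'', the paper avoids classifying cycles in $\mt D_{\overline\La}^2$: it simply observes that $\overline\La\notin\Wh(\preceq)$ forces $\overline\La\notin\SLT$ (since $\SLT\subseteq\UW$), and then Caron's Lemma~\ref{lem:caron} yields two equally labeled cycles in $\mt D_{\overline\La}$, one of which must sit at a state $p\neq\overline q$; that state is in $Q_3$, so $I_p\subseteq\pf\La\cap\pf{\overline\La}$ is infinite. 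Your argument instead exploits the hypothesis $\La\in\UW$ structurally, showing that any $Q_3\times Q_3$ cycle of $\mt D_{\overline\La}^2$ is already a cycle of $\mt D_\La^2$ and hence cannot satisfy~(i)+(ii); this is a nice self-contained reduction but slightly longer. For the converse direction, the paper's crux is a two-case analysis (according to whether the entry word $\alpha$ of the $Q_3$-state is a suffix of the cycle label $\gamma$), producing witnesses that mix $\gamma$ with a word to $\overline q$. Your construction is more direct and arguably cleaner: using that $\overline q$ is absorbing you take $m_0 a^N$ and $m_0 b^M$ with $a,b$ the $\preceq$-minimum and maximum letters, sandwich a single nonempty $\mu\in I_{q_1}$ between them, and the choice of $a$ so that $\mu$ is not a power of $a$ (possible because $|\Sigma|\ge 2$) makes the co-lexicographic check immediate. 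Both approaches ultimately hinge on the same structural picture of $\mt D_{\overline\La}$ as $\{\overline q\}\cup Q_3$ with $\overline q$ absorbing.
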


  Using Lemma~\ref{lemma:esiste} we can finally prove:

\begin{corollary} \label{cor:nonSigma^*}  Let  $|\Sigma|\geq 2$ and let  $\La$ be a regular language such that   
  $\pf\La\neq \Sigma^*$. 
Then  $\La, \overline \La\in \UW$   iff    $\La\in \RDEF$  .
\end{corollary}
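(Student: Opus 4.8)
The plan is to prove the two directions of the equivalence $\La,\overline\La\in\UW \iff \La\in\RDEF$ (under $|\Sigma|\geq 2$ and $\pf\La\neq\Sigma^*$) by leveraging Lemma~\ref{lemma:esiste} together with the structural picture laid out in Remark~\ref{rem:careful}.

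\smallskip
\noindent\textbf{The ``if'' direction.} Assume $\La\in\RDEF$, so $\La=F\cup G\Sigma^*$ for finite $F,G$. First, $\La\in\UW$ already follows from Lemma~\ref{lem:slt_are_uw}, since $\RDEF\subseteq\SLT\subseteq\UW$. For $\overline\La$, note that $\overline{\La}=\overline{F\cup G\Sigma^*}$. I would argue that since $\pf\La\neq\Sigma^*$ there is some string $w\notin\pf\La$, i.e.\ $w$ has no extension in $\La$; in terms of the $G\Sigma^*$-structure this forces $\pf\La\cap\pf{\overline\La}$ to be finite. Indeed, any $\alpha\in\pf\La$ that has a proper extension into both $\La$ and $\overline\La$ must be ``short'' — once $\alpha$ contains a factor witnessing membership in $G\Sigma^*$ (or long enough that no such factor can appear, given $\pf\La\neq\Sigma^*$), its behaviour is decided. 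Making this precise, $|\pf\La\cap\pf{\overline\La}|<\infty$, and then Lemma~\ref{lemma:esiste} gives that $\overline\La\in\Wh(\preceq)$ for \emph{all} orders $\preceq$, i.e.\ $\overline\La\in\UW$.

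\smallskip
\noindent\textbf{The ``only if'' direction.} Assume $\La,\overline\La\in\UW$ and $\pf\La\neq\Sigma^*$. By Lemma~\ref{lemma:esiste} (applied to $\La$), since $\overline\La\in\Wh(\preceq)$ for every $\preceq$, we must have $|\pf\La\cap\pf{\overline\La}|<\infty$; equivalently, in the notation of Remark~\ref{rem:careful}, $Q_3$ is finite and moreover the states in $Q_3$ form a DAG (no state of $Q_3$ lies on a cycle). I would then translate this into a structural statement about $\mt D_\La$: every long enough prefix of a word of $\La$ is \emph{already committed}, meaning that from some bounded depth on, every state reachable in $\mt D_\La$ is either final-and-absorbing or has the property that all of its extensions stay in $\La$. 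Concretely, there is a constant $k$ such that for every $\alpha$ with $|\alpha|\geq k$, membership of $\alpha$ in $\La$ depends only on the prefix of $\alpha$ of length $k$ (because reading past depth $k$ can no longer cross between the ``$\La$-accepting region'' and the ``$\overline\La$-accepting region''). This is exactly the statement that $\La$ is reverse definite: $\La = F\cup G\Sigma^*$ where $F$ collects the words of length $<k$ in $\La$ and $G$ collects the length-$k$ prefixes $u$ such that $u\Sigma^*\subseteq\La$.

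\smallskip
\noindent\textbf{Main obstacle.} The delicate point is the bridge between the purely ``quantitative'' hypothesis $|\pf\La\cap\pf{\overline\La}|<\infty$ and the ``qualitative'' syntactic form $F\cup G\Sigma^*$. One has to rule out that a state $q\in Q_1\cup Q_2$ (reachable only by words whose fate is already sealed) is fed by a cycle living entirely in $Q_1\cup Q_2$ that could itself still branch — but here the absorbing-state analysis of Remark~\ref{rem:careful} does the work: $|Q_1|\leq 1$ and $|Q_2|\leq 1$, and each, when present, is absorbing, so the only cycles outside $Q_3$ are self-loops on these two absorbing states. Combined with $Q_3$ being a finite DAG, every path in $\mt D_\La$ of length exceeding $|Q_3|$ has reached one of the two absorbing states, which pins down the constant $k$ and yields the $G\Sigma^*$ tail. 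I also need to double-check the degenerate corner where $\La$ itself is finite, but then $\pf\La$ is finite, hence $\La=F\cup\emptyset\cdot\Sigma^*\in\RDEF$ trivially, and $\overline\La$ is cofinite with $\pf{\overline\La}=\Sigma^*$, so both sides hold vacuously via Remark~\ref{rem:complement}.
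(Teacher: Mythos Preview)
Your plan is sound and would go through, but it diverges from the paper's proof in both directions. For the ``if'' direction, the paper takes a one-line route: since $\RDEF$ is closed under complement, $\overline\La\in\RDEF\subseteq\SLT$, and Lemma~\ref{lem:slt_are_uw} finishes. You instead detour through Lemma~\ref{lemma:esiste}, which forces you to first verify $|\pf\La\cap\pf{\overline\La}|<\infty$ for $\La=F\cup G\Sigma^*$; this is true (if $|\alpha|\geq\max\{|w|:w\in F\cup G\}$ and $\alpha\in\pf\La$ then $\alpha$ has a prefix in $G$, hence $\alpha\Sigma^*\subseteq\La$ and $\alpha\notin\pf{\overline\La}$), but your sketch of it is loose. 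For the ``only if'' direction, both proofs invoke Lemma~\ref{lemma:esiste} to get $|\pf\La\cap\pf{\overline\La}|<\infty$, but the paper then works purely at the language level: it sets $F=\pf\La\cap\pf{\overline\La}$ and $G=\{\alpha x\in\pf\La\setminus\pf{\overline\La}:\alpha\in F,\ x\in\Sigma\}$ and checks $\La=(F\cap\La)\cup G\Sigma^*$ by a short double inclusion, never touching the automaton. Your automata-theoretic route via Remark~\ref{rem:careful} (the $Q_3$-DAG argument) is a valid alternative and arguably more structural; just fix two slips. First, in $\mt D_\La$ the $Q_2$ state has been \emph{removed}, so there is at most one absorbing state there --- carry out the ``after $|Q_3|$ steps we are absorbed'' argument in $\mt D^c_\La$ instead. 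Second, in the finite corner case your appeal to Remark~\ref{rem:complement} is misplaced (its hypothesis is $\pf\La=\Sigma^*$, not $\pf{\overline\La}=\Sigma^*$); simply note that $\overline\La$ is cofinite, hence in $\DEF\subseteq\SLT\subseteq\UW$.
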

\begin{proof} 
If $\La\in \RDEF$ then  $\La, \overline \La \in \SLT$ and hence in $\La, \overline \La \in \UW$ by Lemma~\ref{lem:slt_are_uw}.

Suppose that $\La, \overline \La\in \UW$ and 
    $\pf\La\neq \Sigma^*$.  Then  from Lemma~\ref{lemma:esiste}  we obtain that    $\pf\La\cap \pf{\overline\La}=F$ where  $F$  is a finite set. \\
If   $G= \{\alpha x \in \pf\La\setminus \pf{\overline\La}~:~  \alpha \in F,~ x\in \Sigma \}$,   we claim that   
 $\La=(F\cap \La)\cup G\Sigma^*$. 
If $\beta\in G\Sigma^*$ then  $\beta=\alpha x \gamma$ with  $\alpha x\in \pf\La\setminus \pf{\overline\La}$  and $\alpha \in F$. Since  $\alpha x\not \in   \pf{\overline\La} $, 
for every  $\delta \in \Sigma^*$ we have  $\alpha x \delta\in \La$. In particular, 
 $\beta=\alpha x \gamma\in \La$, proving  $(F\cap \La)\cup G\Sigma^*\subseteq \La$. 

Conversely, we prove that if   $\beta\in \La$, then    $\beta \in (F\cap \La)\cup G\Sigma^*$. 
 If  $\beta\in \pf{\overline\La}$ then  $\beta \in \pf\La\cap \pf{\overline\La}= F$.
Hence  $\beta\in F\cap \La$.
 
If  $\beta\not \in  \pf{\overline\La}$, let  $\alpha$ be  its longest prefix belonging to      $\pf\La\cap   \pf{\overline\La}=F$.  Since  $\alpha\neq \beta$, there exists  $x\in \Sigma$ such that  $\alpha x$ is still a   $\beta$-prefix  and hence  $\alpha x \in \pf\La$.  From  the definition of $\alpha$ it follows that   $\alpha x \in \pf\La\setminus    \pf{\overline\La}$. Hence $\alpha x \in G$ and $\beta \in G\Sigma^*$.

Since we proved that   $\La=(F\cap L)\cup G\Sigma^*$ with  $G,F$ finite sets, 
we obtain $\La\in \RDEF$. 
\end{proof}

Putting all previous results together we finally obtain:

\begin{theorem}\label{th:antonio}  Let $\La$ be a regular language. Then
\(\La, \overline \La\in \UW\Leftrightarrow \La\in \DEF\cup \RDEF. \)
    \end{theorem}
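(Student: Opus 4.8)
The plan is to split the proof into the two directions and, crucially, into the two cases $\pf\La=\Sigma^*$ and $\pf\La\neq\Sigma^*$, since the preceding lemmas are already organized this way. For the direction $(\Leftarrow)$, suppose $\La\in\DEF\cup\RDEF$. Since $\DEF$ and $\RDEF$ are closed under complement --- if $\La=F\cup\Sigma^*G$ then $\overline\La$ is again of the form $F'\cup\Sigma^*G'$ for finite $F',G'$, and symmetrically for $\RDEF$ --- we get that $\overline\La\in\DEF\cup\RDEF$ as well. By Lemma~\ref{lem:slt_are_uw} every language in $\DEF\cup\RDEF$ lies in $\UW$, so both $\La$ and $\overline\La$ are in $\UW$. (One should double-check the edge case where $\La$ is finite or cofinite, but a finite language is in $\DEF$ and its complement is cofinite hence in $\DEF$, so this is covered.)

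For the direction $(\Rightarrow)$, assume $\La,\overline\La\in\UW$. First dispose of the unary case $|\Sigma|=1$: as remarked in the text, $\UW$ then coincides with the finite and cofinite languages, which equals $\DEF\cup\RDEF$, so there is nothing to prove. Assume now $|\Sigma|\geq 2$. If $\pf\La=\Sigma^*$, then by Lemma~\ref{lem:Sigma^*} (using $\La\in\UW$) we conclude $\La\in\DEF$ --- note the ``$\La$ finite or ($\La\in\UW$ and $\pf\La=\Sigma^*$)'' disjunction in that lemma is exactly what we need. If instead $\pf\La\neq\Sigma^*$, then Corollary~\ref{cor:nonSigma^*} applied to $\La$ (which requires $|\Sigma|\geq 2$, $\pf\La\neq\Sigma^*$, and $\La,\overline\La\in\UW$) gives $\La\in\RDEF$. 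In all cases $\La\in\DEF\cup\RDEF$, completing the proof.

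The only real work is bookkeeping: making sure the hypotheses of Lemma~\ref{lem:Sigma^*}, Lemma~\ref{lemma:esiste}, and Corollary~\ref{cor:nonSigma^*} are matched exactly (in particular that we are entitled to use $\overline\La\in\UW$, and not merely $\overline\La\in\Wh(\preceq)$ for some $\preceq$, when invoking the corollary), and handling the low-complexity corner cases ($\La$ finite, $\La$ unary, $\pf\La=\Sigma^*$) separately so that no lemma is applied outside its stated scope. I expect the main --- though still minor --- obstacle to be the $\pf\La=\Sigma^*$ branch: Lemma~\ref{lem:Sigma^*} only yields $\La\in\DEF$ and we should confirm this is consistent with $\overline\La\in\UW$ (it is, by Remark~\ref{rem:complement}, since $\pf\La=\Sigma^*$ forces $\overline\La\in\UW$ automatically, and indeed $\overline\La\in\DEF$ too). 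Since all the heavy lifting is already done in the cited lemmas, the proof of Theorem~\ref{th:antonio} is essentially a two-line case analysis invoking them in the right order.
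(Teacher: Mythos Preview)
Your proposal is correct and follows essentially the same approach as the paper: the $(\Leftarrow)$ direction uses closure of $\DEF\cup\RDEF$ under complement together with Lemma~\ref{lem:slt_are_uw}, and the $(\Rightarrow)$ direction is the same case split (unary alphabet; $\pf\La=\Sigma^*$ via Lemma~\ref{lem:Sigma^*}; $\pf\La\neq\Sigma^*$ with $|\Sigma|\geq 2$ via Corollary~\ref{cor:nonSigma^*}). Your additional sanity checks about edge cases and hypothesis matching are fine but not required for the argument.
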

    \begin{proof} Since the class $\DEF\cup \RDEF$ is closed for complements,  implication from right to left is Lemma~\ref{lem:slt_are_uw}. 
    The implication from left to right  is proved just before Lemma \ref{lemma:esiste} for  $|\Sigma|=1$ 
   
    and,  if  
    $\pf\La=\Sigma^*$ then the result  follows from Lemma~\ref{lem:Sigma^*};    
    if $\pf\La\neq \Sigma^*$ and $|\Sigma|\geq 2$ and  then the result  follows from Corollary~\ref{cor:nonSigma^*}.
    \end{proof}

\section{Deciding $\UW$}\label{sec:decidingUW}

In this section we consider the  problem of deciding, given a DFA $\mt D$,   whether $\La(\mt D) \in \UW$. First, we propose an algorithm that solves  this problem in quadratic time, more precisely in time $\mt O(nm)$, where $n$ and $m$ are respectively the number of states and edges of the minimal DFA $\mt D_{\La}$ accepting the language $\La$ (note that $n$ and $m$ are upper bounded by the number of states and edges of the input DFA $\mt D$). Finally, we prove that this is optimal unless the Strong Exponential Time Hypothesis (\SETH{})~\cite{impagliazzo2001complexity,vassilevska2015hardness} fails.

\subsubsection{The algorithm}\label{subsec:algorithm}

A characterization of $\UW$ can be given by looking at the structure of the minimum trimmed automaton. Indeed, since the existence of a cycle in $\mt D_\La^2$ (see Section~\ref{sec:prelim}) does not depend on the alphabet's order, from Lemma~\ref{lem:wcycle} we obtain: 
\begin{corollary}\label{cor:strategy}
   Let $\mt D_\La$ be the minimum  trimmed  DFA accepting $\La$. Then, $\La\notin \UW$ if and only if  $\mt D_\La^2$ contains a cycle $(p_1, q_1) \rightarrow
   (p_2, q_2) \rightarrow \cdots \rightarrow (p_k, q_k) \rightarrow (p_1, q_1)$ such that
   the following hold: (i) $p_1 \neq q_1$, and (ii) $p_1 \intr  q_1$.
\end{corollary}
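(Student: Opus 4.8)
The plan is to derive Corollary~\ref{cor:strategy} directly from Lemma~\ref{lem:wcycle} by quantifying over all alphabet orders, using Definition~\ref{def: intr} to package that quantification into the single relation $\intr$. Recall that $\La \in \UW$ means $\La \in \Wh(\preceq)$ for every order $\preceq$, so $\La \notin \UW$ is equivalent to the existence of \emph{some} order $\preceq$ with $\La \notin \Wh(\preceq)$. By Lemma~\ref{lem:wcycle}, for a fixed $\preceq$ this happens exactly when $\mt D_\La^2$ contains a cycle $(p_1,q_1)\to\cdots\to(p_k,q_k)\to(p_1,q_1)$ with $p_1\neq q_1$ and $p_1\intro q_1$ (where $\intro$ is taken with respect to that same $\preceq$). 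So $\La\notin\UW$ iff there exists an order $\preceq$ and a cycle of the above form with $p_1\neq q_1$ and $p_1 \intro q_1$.

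The key observation is that the \emph{cycle} itself — its existence as a directed cycle in $\mt D_\La^2$ and the condition $p_1\neq q_1$ — is a purely graph-theoretic statement that does not mention $\preceq$ at all. The only order-dependent ingredient is the condition $p_1 \intro q_1$. Hence the existential quantifier over orders can be pushed inward past the cycle: ``$\exists\preceq$ such that [cycle exists with $p_1\neq q_1$ and $p_1\intro q_1$]'' is equivalent to ``[cycle exists with $p_1\neq q_1$] and $\exists\preceq\; p_1\intro q_1$'', i.e.\ to ``[cycle exists with $p_1 \neq q_1$] and $p_1 \intr q_1$'' by the very definition $p\intr q \iff \exists\preceq\; p\intro q$ from Definition~\ref{def: intr}. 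The one point that needs a word of care is the direction where we go from ``$\exists\preceq$: cycle \& $p_1\intro q_1$'' to ``cycle \& $\exists\preceq$: $p_1\intro q_1$'': here we must observe that the cycle witnessing $\La\notin\Wh(\preceq)$ for the chosen $\preceq$ is in particular a cycle in $\mt D_\La^2$ with $p_1\neq q_1$, independently of $\preceq$, so it remains a valid witness for the first conjunct; and for the converse direction, given a cycle with $p_1\neq q_1$ and an order $\preceq$ with $p_1\intro q_1$ (which exists since $p_1\intr q_1$), the pair of this cycle and this order satisfies the hypotheses of Lemma~\ref{lem:wcycle}, giving $\La\notin\Wh(\preceq)\subseteq\text{(complement of }\UW)$.

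Concretely I would write: ($\Leftarrow$) Suppose $\mt D_\La^2$ contains a cycle $(p_1,q_1)\to\cdots\to(p_1,q_1)$ with $p_1\neq q_1$ and $p_1\intr q_1$. By Definition~\ref{def: intr} there is an order $\preceq$ with $p_1 \intro q_1$. Then this cycle and the order $\preceq$ satisfy conditions (i) and (ii) of Lemma~\ref{lem:wcycle}, so $\La\notin\Wh(\preceq)$, hence $\La\notin\UW$. ($\Rightarrow$) Suppose $\La\notin\UW$. Then there is an order $\preceq$ with $\La\notin\Wh(\preceq)$, so by Lemma~\ref{lem:wcycle} there is a cycle $(p_1,q_1)\to\cdots\to(p_1,q_1)$ in $\mt D_\La^2$ with $p_1\neq q_1$ and $p_1\intro q_1$ (w.r.t.\ this $\preceq$). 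In particular $p_1\intr q_1$ holds by definition, and the cycle is a cycle in $\mt D_\La^2$ with $p_1 \neq q_1$, as required.

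The proof has essentially no obstacle — it is a one-line manipulation of quantifiers — and the only thing worth stressing in the writeup is precisely that the graph-structural part of the characterization (the existence of the cycle and $p_1\neq q_1$) is manifestly order-independent, which is exactly what lets the $\exists\preceq$ commute with the cycle and collapse into the relation $\intr$. I would therefore keep the exposition to the short two-direction argument above, explicitly citing Lemma~\ref{lem:wcycle} and Definition~\ref{def: intr}, and not belabor it further.
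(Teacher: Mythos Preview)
Your proposal is correct and matches the paper's approach exactly: the paper simply remarks that the existence of a cycle in $\mt D_\La^2$ (and the condition $p_1\neq q_1$) does not depend on the alphabet order and therefore Lemma~\ref{lem:wcycle} immediately yields the corollary. Your write-up just makes this quantifier manipulation explicit.
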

Note that this corollary provides us with a strategy to decide, for a given regular language $\La$, whether  $\La\in \UW$ without testing every possible order $\preceq$ of the alphabet. It thus motivates the following definition of a directed (unlabeled) graph. Here and in the rest of this section $Q,\delta$ always refer to the set of states and the transition function of  $\mt D_\La$, the minimum trimmed DFA for $\La$. 
\begin{definition}\label{def:graph V E}  Let  $(V,E)$ be such that $   V:=\Set{(p,q)\in Q^2}{(p\neq q) \land  (p \intr q)}$ and   
$E:=\Set{\langle (p,q), (p',q') \rangle}{\exists a\in \Sigma~\bigl( \delta(p,a)=p' \wedge \delta(q,a)=q'\bigr) }$.
\end{definition}
Using the graph $(V,E)$, we can rephrase Corollary~\ref{cor:strategy}    as follows.
\begin{corollary}\label{cor:testingUW}
    $\La  \in \UW$ if and only if the graph $G=(V,E)$ of Definition~\ref{def:graph V E} is acyclic.
\end{corollary}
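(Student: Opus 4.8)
The plan is to deduce Corollary~\ref{cor:testingUW} directly from Corollary~\ref{cor:strategy} by observing that the graph $(V,E)$ of Definition~\ref{def:graph V E} is essentially a subgraph of $\mt D_\La^2$ whose cycles are precisely those cycles of $\mt D_\La^2$ that certify $\La \notin \UW$. First I would prove the \emph{contrapositive of the ``if'' direction}: suppose $G=(V,E)$ contains a cycle $(p_1,q_1)\to(p_2,q_2)\to\cdots\to(p_k,q_k)\to(p_1,q_1)$. By definition of $V$, every vertex on this cycle satisfies $p_i\neq q_i$ and $p_i\intr q_i$; in particular conditions (i) and (ii) of Corollary~\ref{cor:strategy} hold for $(p_1,q_1)$. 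By definition of $E$, each edge $\langle(p_i,q_i),(p_{i+1},q_{i+1})\rangle$ comes from some letter $a_i\in\Sigma$ with $\delta(p_i,a_i)=p_{i+1}$ and $\delta(q_i,a_i)=q_{i+1}$, which is exactly an edge of $\mt D_\La^2$. Hence the same sequence of states is a cycle in $\mt D_\La^2$ satisfying (i) and (ii), so by Corollary~\ref{cor:strategy} we conclude $\La\notin\UW$.

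For the reverse implication I would again argue contrapositively: assume $\La\notin\UW$. By Corollary~\ref{cor:strategy} there is a cycle $(p_1,q_1)\to\cdots\to(p_k,q_k)\to(p_1,q_1)$ in $\mt D_\La^2$ with $p_1\neq q_1$ and $p_1\intr q_1$. The key step is to check that \emph{every} vertex $(p_i,q_i)$ on this cycle lies in $V$, i.e.\ satisfies $p_i\neq q_i$ and $p_i\intr q_i$. That each transition of the cycle is also an edge of $E$ is immediate from the definitions. The only genuine content is therefore propagating the properties $p_i\neq q_i$ and $p_i\intr q_i$ around the whole cycle. I expect this is already implicit in the proof of Lemma~\ref{lem:wcycle}/Corollary~\ref{cor:strategy}, but to be self-contained I would spell it out: since $\mt D_\La$ is the \emph{minimum} trimmed DFA, equal states are identified, so if $p_j=q_j$ for some $j$ then, reading the cyclic word from position $j$ around to position $1$, we would get $p_1=q_1$, a contradiction; hence $p_i\neq q_i$ for all $i$. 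For the $\intr$ property, I would note that $p_1\intro q_1$ for some order $\preceq$ (witnessed by $\alpha,\alpha'\in I_{p_1}$, $\beta,\beta'\in I_{q_1}$ with $\alpha\prec\beta$ and $\beta'\prec\alpha'$); appending the cyclic word segment from position $1$ to position $i$ to each of these four strings yields witnesses for $p_i\intro q_i$ under the same $\preceq$ — co-lexicographic order is preserved under appending a common suffix — so $p_i\intr q_i$ and thus $(p_i,q_i)\in V$. Consequently the cycle lies entirely inside $G$, so $G$ is not acyclic.

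Putting the two directions together gives ``$\La\in\UW$ iff $G=(V,E)$ is acyclic.'' The main obstacle, such as it is, is the bookkeeping in the second direction: verifying that the ``bad cycle'' furnished by Corollary~\ref{cor:strategy} stays within the vertex set $V$ at \emph{every} step rather than only at its base point $(p_1,q_1)$. This hinges on two robustness facts — that minimality of $\mt D_\La$ forbids $p_i=q_i$ along the cycle, and that the $\intr$ relation is closed under appending a common suffix to the $I_p,I_q$ witnesses (since $\dashv$ and the co-lexicographic comparison used in Definition~\ref{def: intr} are both stable under right-concatenation). Once these are in hand the corollary is essentially a restatement of Corollary~\ref{cor:strategy} in graph-theoretic language, and the proof is short.
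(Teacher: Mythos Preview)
Your proof is correct and follows exactly the route the paper takes—the paper treats Corollary~\ref{cor:testingUW} as an immediate rephrasing of Corollary~\ref{cor:strategy}, and your propagation argument (determinism forces $p_i\neq q_i$ around the whole cycle, and appending a common suffix preserves the co-lexicographic witnesses for $\intro$, hence for $\intr$) is precisely the detail that makes this rephrasing rigorous. One minor remark: you invoke minimality of $\mt D_\La$ for the $p_i\neq q_i$ step, but determinism alone suffices there.
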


\paragraph{More Tractable Definition of $V$.}
Given the above corollary, in order to decide $\UW$ it is sufficient to test  the directed graph $(V,E)$ for acyclicity. However, the problem is that the above definition of $V$ does not easily allow us to efficiently compute the graph $(V, E)$.
To this end, we show how  to  define $V$ in an equivalent way that enables us to build  the graph $(V,E)$  in time $ \mathcal{O}(nm)$, starting from the DFA  ${D_\La}$ having   $n$   states and $m$  edges. 
We will now define a table $P$ indexed by pairs   $(p,q)\in Q^2$, such that  $P[p, q]$    collects   all pairs $(a,b)\in \Sigma^2$ with $a\neq b$  which are   witnesses  of the existence   of words of the form
$\alpha=\alpha' a \gamma, \beta=\beta'b\gamma$ arriving in $p,q$, respectively.   Together with pairs $(\dashv, \dashv)$ (or $(\vdash, \vdash)$) indicating whether there are words $\alpha\dashv \beta$   ($\beta\dashv\alpha$) arriving in $p,q$, respectively, these pairs are enough to decide whether $p\intr q $.  Recall that $\alpha \dashv \beta$ means that the word $\alpha$ is a suffix of $\beta$. We extend this definition to states as follows:

\begin{definition} 
    Let $p, q \in Q$ with $p \neq q$. Let 
   \( p \dashv q   \iff  \exists \alpha \in I_p  \exists \beta \in I_q (\alpha \dashv \beta)\). Furthermore, 
\begin{align*}
    P[p, q]   &\coloneqq  
    \Set{ (a, b) \in \Sigma^2 }{ a \neq b \land  \exists \alpha, \beta, \gamma \in \Sigma^* 
           (\alpha a \gamma \in I_p \land \beta b \gamma \in I_q) } ~\cup\\
                   &\;\;\;\;\;\;\Set{(\dashv, \dashv)}{p \dashv q} ~ \cup~ \Set{(\vdash, \vdash)}{q \dashv p}.
\end{align*}
\end{definition}

Notice that, since   $\mt  D_\La$ is trimmed  we have, for  $(a,b)\in \Sigma^2$ with $a \neq b$:
\begin{align}\label{formula: equiv P}
\begin{split}
    &(a,b) \in\ P [p,q]  \Longleftrightarrow \\ 
    &\exists p', q' \in Q,~\exists \gamma \in \Sigma^*~ 
    (a \in \lambda(p') \land
               b \in \lambda(q') \land p = \delta(p', \gamma) \land q = \delta(q', \gamma)).
\end{split}
\end{align}
 
\begin{lemma} \label{lem:Pgeq2}
    Let $p, q \in Q$ with $p \neq q$. Then, $p \intr q$ if and only if $|P[p, q]| \geq 2$.
\end{lemma}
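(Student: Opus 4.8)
The plan is to unwind both sides of the equivalence using Definitions of $\intr$ and $P[p,q]$, keeping track of the three "types" of witness pairs that can live in $P[p,q]$: genuine letter-pairs $(a,b)$ with $a\neq b$, the suffix marker $(\dashv,\dashv)$, and the reverse suffix marker $(\vdash,\vdash)$. Recall that $p\intr q$ means there is \emph{some} order $\preceq$ with $p\intro q$, i.e.\ there exist $\alpha,\alpha'\in I_p$ and $\beta,\beta'\in I_q$ with $\alpha\prec\beta$ and $\beta'\prec\alpha'$. By the co-lexicographic definition, each of the two strict inequalities is witnessed either by a strict suffix relation or by a mismatching pair of letters $a\prec b$ at the first position (reading from the right) where the two strings differ modulo a common suffix $\gamma$. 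The crucial observation is that both inequalities are witnessed by \emph{distinct} entries of $P[p,q]$ (they point in "opposite directions"), so $p\intr q$ forces $|P[p,q]|\ge 2$; conversely, any two distinct entries of $P[p,q]$ can be simultaneously satisfied by a suitable choice of $\preceq$.

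\medskip
\textbf{($\Rightarrow$)} Suppose $p\intr q$, so fix $\preceq$ and $\alpha\prec\beta$, $\beta'\prec\alpha'$ with $\alpha,\alpha'\in I_p$, $\beta,\beta'\in I_q$. From $\alpha\prec\beta$: either $\alpha\dashv\beta$, which puts $(\dashv,\dashv)\in P[p,q]$; or there are $\gamma$ and letters $a\prec b$ with $\alpha=\alpha_0 a\gamma$, $\beta=\beta_0 b\gamma$, which by \eqref{formula: equiv P} puts $(a,b)\in P[p,q]$ with $a\neq b$. Symmetrically, from $\beta'\prec\alpha'$ (note the roles of $p,q$ are swapped): either $\beta'\dashv\alpha'$, giving $(\vdash,\vdash)\in P[p,q]$; or there are letters $b'\prec a'$ with $\beta'=\beta_1 b'\gamma'$, $\alpha'=\alpha_1 a'\gamma'$, giving $(a',b')\in P[p,q]$ with $a'\neq b'$. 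It remains to argue these two entries are distinct. A $(\dashv,\dashv)$ can never equal a $(\vdash,\vdash)$ or a letter-pair. If both entries are letter-pairs, say $(a,b)$ from the first and $(a',b')$ from the second, then $a\prec b$ whereas $b'\prec a'$; if they were the same pair we would get $a\prec b$ and $b\prec a$, a contradiction. The only remaining case, one entry $(\dashv,\dashv)$ and the other $(\vdash,\vdash)$, is already distinct. Hence $|P[p,q]|\ge 2$.

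\medskip
\textbf{($\Leftarrow$)} Suppose $|P[p,q]|\ge 2$; pick two distinct entries $e_1,e_2$. By \eqref{formula: equiv P} and the definition of $\dashv$, each $e_i$ that is a letter-pair $(a_i,b_i)$ comes with words $\alpha^{(i)}=\alpha_0^{(i)}a_i\gamma^{(i)}\in I_p$ and $\beta^{(i)}=\beta_0^{(i)}b_i\gamma^{(i)}\in I_q$ agreeing on the suffix $\gamma^{(i)}$ but differing right before it; each $e_i$ equal to $(\dashv,\dashv)$ comes with $\alpha\in I_p$, $\beta\in I_q$, $\alpha\dashv\beta$; each $(\vdash,\vdash)$ comes with $\beta\in I_q$, $\alpha\in I_p$, $\beta\dashv\alpha$. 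We must exhibit a single order $\preceq$ on $\Sigma$ so that one entry forces an inequality "$I_p$-word $\prec$ $I_q$-word" and the other forces "$I_q$-word $\prec$ $I_p$-word". If one of $e_1,e_2$ is a marker, it already imposes a strict co-lexicographic inequality between an $I_p$-word and an $I_q$-word in a fixed direction, independent of $\preceq$; if the other is also a marker of the opposite flavour we are done with any order, and if the other is a letter-pair $(a,b)$ we only need $\preceq$ to decide $a$ vs.\ $b$ in the direction that produces the opposite inequality, which is always possible since $a\neq b$. If both $e_1=(a_1,b_1)$ and $e_2=(a_2,b_2)$ are letter-pairs, we need an order $\preceq$ with, say, $a_1\prec b_1$ and $b_2\prec a_2$ simultaneously; this is an orientation constraint on at most two "edges" $\{a_1,b_1\}$ and $\{a_2,b_2\}$ of $\Sigma$, and since $a_1\neq b_1$, $a_2\neq b_2$, and the two required orientations never contradict each other on a shared edge (that would force $a\prec b$ and $b\prec a$ for the \emph{same} pair $\{a,b\}$, impossible as $e_1\neq e_2$), such a total order always exists. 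With this $\preceq$ we get $\alpha^{(1)}\prec\beta^{(1)}$ and $\beta^{(2)}\prec\alpha^{(2)}$, i.e.\ $p\intro q$, hence $p\intr q$.

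\medskip
\textbf{Main obstacle.} The routine direction is essentially bookkeeping; the one place needing care is the $(\Leftarrow)$ argument that two distinct letter-pair entries can always be oriented \emph{consistently} by a single total order, and the careful case analysis guaranteeing that a "same pair $\{a,b\}$ appearing as two distinct ordered pairs $(a,b)$ and $(b,a)$" cannot occur among the chosen witnesses — which follows simply because $(a,b)=(b,a)$ would contradict $e_1\neq e_2$, but one must also make sure in the $(\Rightarrow)$ direction that the two witnesses we extract are forced to be distinct (handled above via the $a\prec b$ vs.\ $b\prec a$ contradiction). Everything else reduces to applying \eqref{formula: equiv P} and the co-lexicographic definition of $\preceq$ on strings.
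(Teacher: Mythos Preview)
Your proof is correct and follows essentially the same approach as the paper's: extract one element of $P[p,q]$ from each of the two strict inequalities in the $(\Rightarrow)$ direction and argue distinctness via the $a\prec b$ versus $b'\prec a'$ clash, and for $(\Leftarrow)$ do a case analysis on whether the two chosen entries are markers or letter-pairs and build a compatible order. If anything, you are slightly more careful than the paper in the two-letter-pair case, where you explicitly justify that the constraints $a_1\prec b_1$ and $b_2\prec a_2$ are simultaneously satisfiable (the paper simply asserts that such an order exists).
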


Hence, in order to decide whether  $p\intr q$ it will be sufficient to compute just two elements (if they exist)
from $P[p, q]$. 
Since we are only interested in deciding whether $P[p, q]$ contains at least two elements, it seems plausible
that it is sufficient to examine only a few labels of $\lambda(q)$ for each $q\in Q$. Indeed, in this spirit suppose we have computed a maximal set $\lambda'(p) \subseteq \lambda(p)$ such that 
$|\lambda'(p)| \leq 2$; in other words, $\lambda'\colon Q\rightarrow 2^\Sigma$ is a function such that 
$\lambda'(p) = 
\lambda(p)$, if  $|\lambda(p)|\leq 1$,  while $\lambda'(p)$ is a subset of $\lambda(p)$ with two elements, if $|\lambda(p)|> 1$. For example, we may define  $\lambda'(p)$ as the set containing the two smallest elements in $\lambda(p)$ (or
$\lambda(p)$ itself, in case there are not enough elements). Given such a function $\lambda'$, we define:
\begin{align*}\label{def:lambda'}
   P_{\lambda'}[p, q] \coloneqq 
    &\{ (a, b) \,|\, a \neq b \, \land \,\exists p', q' \in Q, ~ \exists \gamma \in \Sigma^*~\\
    &\hspace{3em}  (a \in \lambda'(p') \land
                b \in \lambda'(q') \land p = \delta(p', \gamma) \land q = \delta(q', \gamma)) \} \\
                  \cup &\Set{(\dashv, \dashv)}{p \dashv q}
                  \cup \Set{(\vdash, \vdash)}{q \dashv p}.
\end{align*}
We note that due to the characterization of $P[p,q]$ in~\eqref{formula: equiv P}, the definition of $P_{\lambda'}[p, q]$ is equivalent to the one of $P[p, q]$ up to replacing $\lambda$ by $\lambda'$. Moreover, since $\lambda'(\cdot) \subseteq \lambda(\cdot)$, it holds that $P_{\lambda'}[p, q] \subseteq P[p, q]$
for every $p, q \in Q$.

The following lemma ensures that we can use $\lambda'(\cdot)$ instead of $\lambda(\cdot)$.

\begin{lemma}\label{lemma:P-table-geq2}
    Let $p, q \in Q$, with $p \neq q$. Then $|P[p, q]| \geq 2$ iff $|P_{\lambda'}[p, q]| \geq 2$. 
\end{lemma}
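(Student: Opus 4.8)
The plan is to reduce everything to two structural observations plus a short case analysis; the only delicate point sits in a single sub-case. The implication $|P_{\lambda'}[p,q]|\ge 2\Rightarrow |P[p,q]|\ge 2$ is immediate from the inclusion $P_{\lambda'}[p,q]\subseteq P[p,q]$ already noted, so I focus on the converse. \emph{Observation~(i):} the two ``suffix'' entries are the same in both tables, $(\dashv,\dashv)\in P[p,q]\Leftrightarrow p\dashv q\Leftrightarrow(\dashv,\dashv)\in P_{\lambda'}[p,q]$ and likewise for $(\vdash,\vdash)$, because the conditions $p\dashv q$, $q\dashv p$ do not mention $\lambda$. \emph{Observation~(ii):} call $(p',q')$ a \emph{co-predecessor} of $(p,q)$ if $\delta(p',\gamma)=p$ and $\delta(q',\gamma)=q$ for some $\gamma$. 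By~\eqref{formula: equiv P}, the non-suffix part of $P[p,q]$ equals $\bigcup R(p',q')$ over co-predecessors, where $R(p',q'):=\{(a,b):a\in\lambda(p'),\,b\in\lambda(q'),\,a\ne b\}$, and the non-suffix part of $P_{\lambda'}[p,q]$ equals $\bigcup R'(p',q')$ with $R'$ defined using $\lambda'$. Since $R(p',q')\ne\emptyset$ exactly when $\lambda(p')$ and $\lambda(q')$ are not one and the same singleton, and since $\lambda'(r)$ is a singleton exactly when $\lambda(r)$ is (and then equals it), a co-predecessor contributes a non-suffix pair to $P[p,q]$ if and only if it contributes one to $P_{\lambda'}[p,q]$.

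Now assume $|P[p,q]|\ge 2$. If both suffix pairs are in $P[p,q]$, both are in $P_{\lambda'}[p,q]$ by~(i). If exactly one is, say $(\dashv,\dashv)$, then $P[p,q]$ also has a non-suffix pair; by~(ii) some co-predecessor feeds a non-suffix pair into $P_{\lambda'}[p,q]$, which with $(\dashv,\dashv)$ makes two. If $P[p,q]$ has no suffix pair, then $p\not\dashv q$ and $q\not\dashv p$, and $P[p,q]$ consists of $\ge 2$ non-suffix pairs; I must exhibit two non-suffix pairs in $P_{\lambda'}[p,q]$. If every co-predecessor has $|R(\cdot)|\le 1$, then at least two of them carry two distinct singletons, and since their $R'$ are nonempty and contained in their $R$, they carry the same singletons into $P_{\lambda'}[p,q]$ --- done. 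Otherwise some co-predecessor $(p_0,q_0)$ has $|R(p_0,q_0)|\ge 2$; as $\emptyset\ne R'(p_0,q_0)\subseteq R(p_0,q_0)$, I am done unless $|R'(p_0,q_0)|=1$, and an elementary count shows this forces (up to swapping the roles of $p_0$ and $q_0$) that $\lambda(p_0)=\{c\}$ is a singleton, $c\in\lambda(q_0)$, and $|\lambda(q_0)|\ge 3$.

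It remains to handle this last configuration, and I expect it to be the crux. Since $\lambda(p_0)=\{c\}$, the state $p_0$ is not the source (otherwise $\varepsilon\in I_{p_0}$ would immediately give $p\dashv q$), so there is $p_1$ with $\delta(p_1,c)=p_0$, and since $c\in\lambda(q_0)$ there is $q_1$ with $\delta(q_1,c)=q_0$, so $(p_1,q_1)$ is again a co-predecessor of $(p,q)$. I would keep pulling back like this as long as the current pair contributes nothing to $\bigcup R$, i.e.\ as long as $\lambda(p_i)=\lambda(q_i)$ is a common singleton, passing each time to a co-predecessor obtained by pulling back along that letter. By finiteness of $\mt D_\La^2$ this either reaches a co-predecessor with $R\ne\emptyset$, whose contribution to $P_{\lambda'}[p,q]$ must then be checked to be distinct from the single pair coming from $(p_0,q_0)$, or it closes into a cycle, producing a word $w$ and a co-predecessor $(\hat p,\hat q)$ with $\hat p\ne\hat q$, $\delta(\hat p,w)=\hat p$, $\delta(\hat q,w)=\hat q$; then comparing the strings $u\,w^t\,\eta\in I_p$ and $v\,w^t\,\eta\in I_q$ (for $u\in I_{\hat p}$, $v\in I_{\hat q}$, and $\eta$ a word co-reaching $(p,q)$ from $(\hat p,\hat q)$) yields $p\dashv q$, or $q\dashv p$, or a second non-suffix pair of $P[p,q]$ which by~(ii) again produces a second element of $P_{\lambda'}[p,q]$. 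The one genuinely technical step is closing the loop here: verifying that the ``new'' pair is really new, and in particular that the cyclic branch does not merely reproduce the pair already supplied by $(p_0,q_0)$. Everything else is the routine casework above.
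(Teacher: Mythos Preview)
Your plan is incomplete at precisely the point you flag, and as outlined it does not close. In branch~(a) of the pull-back, nothing prevents the co-predecessor $(p_k,q_k)$ you eventually reach from contributing to $P_{\lambda'}[p,q]$ only the \emph{same} pair $(c,d)$ already supplied by $(p_0,q_0)$; you give no mechanism to force a different pair. Branch~(b) is also unfinished: the strings $u\,w^t\eta$ and $v\,w^t\eta$ share the suffix $w^t\eta$, so comparing them reduces to comparing $u$ against $v$, which you have not controlled at all, and your final appeal to Observation~(ii) is circular---(ii) only says that a co-predecessor contributing to $P$ contributes \emph{something} to $P_{\lambda'}$, not something \emph{new}. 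So the ``routine casework'' really does hinge entirely on the step you leave open.

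The paper avoids this whole pursuit with a short contrapositive. Assume $|P_{\lambda'}[p,q]|<2$; then the non-suffix part of $P_{\lambda'}[p,q]$ lies inside some singleton $\{(a,b)\}$. The paper argues that for \emph{every} co-predecessor $(p',q')$ one has $\lambda'(p')\subseteq\{a\}$ and $\lambda'(q')\subseteq\{b\}$, and hence---by maximality of $\lambda'$---also $\lambda(p')\subseteq\{a\}$ and $\lambda(q')\subseteq\{b\}$, so the non-suffix part of $P[p,q]$ is contained in the same singleton. Combined with your Observation~(i) on suffix pairs, this yields $|P[p,q]|<2$. The idea you are missing is to reason about all co-predecessors simultaneously via the global constraint $R'(p',q')\subseteq\{(a,b)\}$ and then invoke maximality of $\lambda'$ once, rather than chasing a single backward path from one distinguished co-predecessor and trying to manufacture a second explicit witness.
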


\paragraph{Algorithm.}
We are now ready to describe the algorithm that tests   whether a language $\La$ is in $\UW$ using Corollary~\ref{cor:testingUW}.

We start by constructing the vertices $V$ of the graph $(V, E)$ as follows. We compute sets $\lambda'(p)$ for each state $p$ in $\mt D_\La$ as above, i.e., $\lambda'(p)$ is maximal with $\lambda'(p) \subseteq \lambda(p)$ and $|\lambda'(p)| \leq 2$. Then according to Lemma~\ref{lem:Pgeq2} and Lemma~\ref{lemma:P-table-geq2}, it follows that the vertex set $V$ can be equivalently defined as $V=\{(p,q)~:~ P_{\lambda'}[p,q]|\geq 2\}$. The pseudocode is given in Algorithm~\ref{algo:T-table-new}.

First, in lines~\ref{line: lambda prime start} to~\ref{line: lambda prime end} we compute the   sets  $\lambda'(q)$, for all $q\in Q$. Then, from line~\ref{line: A1 start} to~\ref{line: A1 end}    the algorithm  computes a queue $A$ containing all quadruples $(s,q, \dashv, \dashv)$,  $(p,s, \vdash, \vdash)$,  where $s$ is the initial state of   $\mt  D_\La$ and $s\neq p,  q$.  Notice that, since $\mt  D_\La$ is a trimmed automaton,      both    $s\dashv q$ and $p\vdash s$  hold so that  $(\vdash, \vdash)\in P_{\lambda'}[s,q]$ and   $(\dashv, \dashv)\in P_{\lambda'}[p,s]$, for all $s\neq p,   q$.  Then, from line~\ref{line: A2 start} to~\ref{line: A2 end},    all quadruples $(p, q, a, b)$ such that $p\neq q$, $a\neq b$, $a\in \lambda'(p)$,  $b\in \lambda'(q)$ are added to $A$. 
These  quadruples represent the ``base'' case  in which  $(a,b)\in P_{\lambda'}[p,q]$, where,  in  the definition of $P_{\lambda'}[p,q]$  we consider   $\gamma=\epsilon$, $p'=p$, and  $q'=q$.  

Starting from these quadruples, from line~\ref{line: propagation start} to~\ref{line: propagation end}   we  ``propagate'', generating new pairs $(a,b)\in P_{\lambda'}[p,q]$, as long as  $P_{\lambda'}[p,q]<2$. This is correct, because, by definition, for all $(p',q')$ we have $(a,b)\in P_{\lambda'}[p',q']$ iff there exists $\gamma\in \Sigma^*$ and $(p,q)$ such that $(p, q, a, b)$ belongs to $A$ and $\delta(p,\gamma)=p', \delta(q,\gamma)=q'$.
At the same time we compute a table $T$ such that,   for each $p\neq q$, $T[p,q]$ contains   two pairs in   $P_{\lambda'}[p,q]$ if there are such.  

Finally, from line~\ref{line: build graph start} to~\ref{line: build graph end}, we select the vertices $V$ using $T$ (i.e., those pairs $(p,q)$ with $|T[p,q]| \geq 2$), build the edge set $E$, and test acyclicity of the subgraph $G=(V,E)$ of $\mt D_\La^2$. 

\begin{algorithm}[ht!]
\caption{Test if $\La(D_\La) \in \UW$.}\label{algo:T-table-new}
\begin{algorithmic}[1]
    \Function{$\UW$-Tester}{$\mt D_\La= (Q, \Sigma, \delta, F)$}
        \ForAll{$(p, a, q) \in \delta$}\Comment{Compute $\lambda'(q)$ for every $q \in Q$.} \label{line: lambda prime start}
            \If{$|\lambda'(q)| < 2 \land a \notin \lambda'(q)$}
                \State $\lambda'(q).\Call{Push}{a}$
            \EndIf
        \EndFor\label{line: lambda prime end}
        \State
        \State $A \gets \Call{EmptyQueue}{}$\Comment{Base cases.} \label{line: A1 start}
        \ForAll{$p, q \in Q$ s.t. $p \neq q$}
            \If{$p = s$}
                \State $A.\Call{Push}{(s, q, \dashv, \dashv)}$
            \EndIf
            \If{$q = s$}
                \State $A.\Call{Push}{(p, s, \vdash, \vdash)}$
            \EndIf \label{line: A1 end}
            \ForAll{$(a, b) \in \lambda'(p) \times \lambda'(q)$} \label{line: A2 start}
                \If{$a \neq b$}
                    \State $A.\Call{Push}{(p, q, a, b)}$
                \EndIf\label{line: A2 end}
            \EndFor
        \EndFor
        \State
        \State $T \gets \Call{Empty}{Q \times Q}$\Comment{Propagation phase.} \label{line: propagation start}
        \While{$A \neq \emptyset$}
            \State $(p, q, a, b) \gets A.\Call{Pop}{ }$
            \If{$|T[p, q]| < 2 \land (a, b) \notin T[p, q]$}
                \State $T[p, q].\Call{Push}{(a, b)}$
                \ForAll{$(p', q')$ outgoing from $(p, q)$ in $\mt  D^2$}
                    \If{$p' \neq q'$}
                        \State $A.\Call{Push}{(p', q', a, b)}$
                    \EndIf
                \EndFor
            \EndIf
        \EndWhile \label{line: propagation end}
        \State
        \State $V \gets \Set{(p,q) \in Q\times Q}{|T[p,q]| \geq 2}$ \Comment{Subgraph of $\mt  D^2$ of interest.} \label{line: build graph start}
        \State $E \gets \Set{(u, v) \in V\times V}{\text{$v$ is outgoing from $u$ in $\mt  D^2$} }$ \label{line: build graph end}
        
        \State\Return $\Call{IsAcyclic}{V, E}$
    \EndFunction
\end{algorithmic}
\end{algorithm}

\paragraph{Running Time.}
For an automaton $\mt  D_\mathcal{L}$ with $n$ states and $m$ transitions whose squared automaton
$\mt  D_\mathcal{L}^2$ has $n^2$ states and at most $nm$ transitions, the time required by the respective steps of the above algorithm is as follows:
\begin{enumerate}
    \item The computation of $\lambda'$ requires $\mathcal{O}(m)$ time,
    \item Computing the table $T$ can be done in $\mathcal{O}(nm)$ time,
    \item  Computing the graph $G = (V, E)$ takes $\mathcal{O}(nm)$ time,
    \item Finally, testing acyclicity of $G$ takes $\mathcal{O}(nm)$ as well.
\end{enumerate}
Altogether, Algorithm~\ref{algo:T-table-new} runs in $\mathcal{O}(n^2 + nm) \subseteq \mathcal{O}(nm)$ time. 

\subsubsection{Conditional Lower Bound.}\label{subsec:optimal}
First, we define the Orthogonal Vectors problem.
\begin{definition}[Orthogonal Vectors problem (OV)]\label{def:OV}
    Given two sets $A$ and $B$, each containing $N$ vectors from $\{0,1\}^d$, decide whether there exist $a\in A$ and $b\in B$ such that $a^{\intercal}b = 0$. 
\end{definition}

Through a classical reduction~\cite{Williams05}, for $d\in\omega(\log N)$, it is known that \OV{} cannot be solved in time $\mt O(N^{2-\eta} \poly(d))$ for any constant $\eta>0$
unless the Strong Exponential Time Hypothesis \cite{impagliazzo2001complexity,vassilevska2015hardness} (\SETH{}) fails. Our goal now is to prove the following theorem.

\begin{theorem}\label{thm:lower bound}
    Let $\La$ be a regular language, let $\mt  D_\La$ be the minimum DFA accepting $\La$ and let $m$ be the number of transitions in $\mt  D_\La$. 
    If the problem of checking whether $\La\in \UW$ can be solved in time $\mt O(m^{2-\eta})$ for some constant $\eta>0$, then the Orthogonal Vectors problem with $d \in \Omega(\log N)$ can be solved in time $\mt O(N^{2-\eta} \poly(d))$, thus contradicting \SETH{}.
\end{theorem}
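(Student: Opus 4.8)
The plan is to prove Theorem~\ref{thm:lower bound} via a fine-grained reduction from \OV{} to the problem of deciding $\UW$. Given an \OV{} instance with sets $A=\{a_1,\dots,a_N\}$ and $B=\{b_1,\dots,b_N\}$ of $d$-dimensional Boolean vectors, I would construct a DFA $\mt D$ (ideally already minimal, or at least of size $\mt O(N d)$ so that minimizing it does not hurt the bound) whose accepted language $\La$ fails to be in $\UW$ if and only if the \OV{} instance is a ``yes'' instance. By Corollary~\ref{cor:testingUW}, $\La\notin\UW$ iff the graph $(V,E)$ of Definition~\ref{def:graph V E} has a cycle, equivalently (Corollary~\ref{cor:strategy}) iff $\mt D_\La^2$ has a cycle through a pair $(p,q)$ with $p\neq q$ and $p\intr q$. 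So the design goal is: encode each vector $a_i$ as a ``left gadget'' producing a state $p_i$, each vector $b_j$ as a ``right gadget'' producing a state $q_j$, arrange that there is an equally-labelled cycle on the pair $(p_i,q_j)$ (giving the cyclic structure in $\mt D_\La^2$) precisely when $a_i$ and $b_j$ are orthogonal, and arrange that $p_i\intr q_j$ always holds (or holds exactly when needed).

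Concretely, I would use an alphabet of constant size — say $\{0,1,\#,\$\}$ plus one or two separators — and read a vector coordinate by coordinate. The string $\#\,a_i[1]a_i[2]\cdots a_i[d]$ should lead from the source to a state $p_i$, and similarly $\$\,b_j[1]\cdots b_j[d]$ should lead to a state $q_j$ — the two distinct leading symbols $\#,\$$ immediately give distinct incoming-label content, which is the lever I would pull to guarantee $p_i\intr q_j$ (two pairs $(a,b)$ in $P[p_i,q_j]$, cf.\ Lemma~\ref{lem:Pgeq2}), independently of the vectors. For the cycle: after the prefix reading $a_i$, I want a loop readable from $p_i$ labelled by some word $w$, and similarly a loop at $q_j$ labelled by the \emph{same} $w$, and I want both loops to exist simultaneously exactly when $\langle a_i,b_j\rangle=0$. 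The standard trick is to let the ``test phase'' re-read the coordinates: a symbol sequence that can be consumed from $p_i$ iff it never selects a coordinate where $a_i$ is $1$, and from $q_j$ iff it never selects a coordinate where $b_j$ is $1$; orthogonality means there is no coordinate that is $1$ in both, so some common test word (e.g.\ cycling through all coordinates) is readable from both, yielding the synchronized cycle in $\mt D_\La^2$. I must be careful to make the ``left'' states for different $a_i$ pairwise non-equivalent (and likewise the ``right'' states), so that minimization does not collapse the gadgets; tagging each gadget with the distinguishing behaviour of its vector, or with a distinct ``exit'' to acceptance, handles this.

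The size accounting is then routine: each of the $2N$ gadgets has $\mt O(d)$ states and $\mt O(d)$ transitions over a constant alphabet, plus an $\mt O(1)$-size shared ``test'' component, so the minimal DFA $\mt D_\La$ has $m\in\mt O(Nd)$ transitions. An $\mt O(m^{2-\eta})$ algorithm for deciding $\UW$ would then decide \OV{} in time $\mt O((Nd)^{2-\eta})=\mt O(N^{2-\eta}\poly(d))$, which for $d\in\Omega(\log N)$ (in fact the hardness kicks in for $d\in\omega(\log N)$, but $d\in\Omega(\log N)$ suffices to state the consequence) contradicts \SETH{} by the reduction of~\cite{Williams05}. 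The main obstacle, and where most of the real work lies, is the correctness of the gadget: proving the biconditional ``$\mt D_\La^2$ has a qualifying cycle through $(p_i,q_j)$ $\iff$ $a_i\perp b_j$'' with no spurious cycles — in particular ruling out cycles through pairs $(p,q)$ where $p,q$ are internal gadget states or mixed left/right states, and ensuring the ``$p_1\intr q_1$'' side condition is met on the cycle's basepoint but does not accidentally create cycles when the vectors are \emph{not} orthogonal. Getting the separators and the self-loop structure exactly right so that the only cycles in $\mt D_\La^2$ live on orthogonal pairs is the delicate part; everything else (the size bound, invoking Corollary~\ref{cor:testingUW} and the \OV{}/\SETH{} link) is bookkeeping.
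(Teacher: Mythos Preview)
Your plan is structurally the same as the paper's: a fine-grained reduction from \OV{} that produces a minimum DFA in which synchronized cycles in $\mt D_\La^2$ through a pair with the $\intr$ property occur exactly on orthogonal pairs, then invoke Corollary~\ref{cor:strategy}. The difference is in execution. Rather than designing a gadget from scratch, the paper observes that the construction of Becker et~al.~\cite{becker2023optimal} (their Proposition~1) can be reused \emph{verbatim}: that construction already yields a minimum DFA over $\{\#,0,1\}$ of size $\Theta(N(d+\log N))$ with distinguished states $\hat a_1,\dots,\hat a_N,\hat b_1,\dots,\hat b_N$ such that \emph{any} two distinct states $u,v$ admit a common cycle label iff $\{u,v\}=\{\hat a_i,\hat b_j\}$ for some orthogonal pair. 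This wholesale reuse eliminates exactly the ``delicate part'' you flag---ruling out spurious synchronized cycles on internal or mixed states---since Becker et~al.\ already proved it. The only new observation the paper needs is that $\hat a_i \intr \hat b_j$ always holds, and this follows not from distinct leading symbols but from the fact that each $\hat a_i$ is reached by words ending in both $000$ and $110$ while each $\hat b_j$ is reached by a word ending in $010$: for the order with $0\prec 1$ one gets $\alpha\prec\beta\prec\alpha'$.

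One point in your sketch is underspecified: a single pair of distinct leading symbols $(\#,\$)$ contributes only \emph{one} element to $P[p_i,q_j]$, not the two required by Lemma~\ref{lem:Pgeq2}, and with $\#,\$$ occurring only as initial letters no suffix relation holds either. You would need a second independent source of variation (e.g.\ two distinct words into each $p_i$, as the paper's $000$/$110$ endings provide) to force $|P[p_i,q_j]|\ge 2$. This is fixable, but it is precisely the kind of detail that reusing the existing construction lets the paper avoid.
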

We prove this theorem using the following lemma, which reduces an instance of the \OV{} problem with two sets of $N$ vectors in $d$ dimensions each into a minimum DFA (and thus a regular language $\La$ for which we can check if $\La\in \UW$) of size $\Theta(Nd)$.
\begin{lemma}\label{lemma: reduction}
    For an instance of \OV{}, we can in $\mt O(N(d+\log N))$ time construct a DFA $\mt D$ 
    with $m = \Theta(N(d+\log N))$ edges
    that is minimum for its language such that the \OV{} instance is a \YES-instance if and only if $\mt  D$ contains two distinct nodes $u,v$ with $\delta(u, \gamma) = u$ and $\delta(v, \gamma) = v$ for some string $\gamma$ and $u\intr v$.
\end{lemma}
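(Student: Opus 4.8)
The plan is to design a DFA whose squared-automaton cycle structure, restricted to the $\intr$ relation, encodes orthogonality. The natural idea is to build, for each vector $a\in A$, a ``reading gadget'' — a chain of $d$ states $a_1,\dots,a_d$ — and similarly for each $b\in B$. The string $\gamma$ along which we want self-loops at two distinct states will be something like a length-$d$ word over $\{0,1\}$, say the word $w=1^d$ or a word that tests coordinates; reading $\gamma$ from state $a_i$ should succeed exactly when the transitions are consistent with the coordinates of $a$. Concretely I would have the transition $\delta(a_i,\sigma)$ defined only when $\sigma$ is a ``legal'' symbol for coordinate $i$ of vector $a$; to make the self-loop work I would route the chain back to its start so that reading a full block of $d$ symbols returns to $a_1$. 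The orthogonality condition $a^\intercal b=0$ — i.e., no coordinate where both $a$ and $b$ are $1$ — must translate into the existence of one common symbol sequence $\gamma$ that is simultaneously loopable at $a$'s entry state and at $b$'s entry state; incompatibility at some coordinate (both $1$) must block the common loop.

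First I would fix the alphabet, plausibly $\Sigma=\{0,1\}$ plus a few separator symbols, and design the coordinate gadget so that at coordinate $i$ the $A$-gadget for vector $a$ accepts symbol $0$ if $a_i=0$ and accepts $0$ or $1$ if $a_i=1$ (or some such asymmetric convention), and symmetrically the $B$-gadget accepts $0$ if $b_i=0$ and $0,1$ if $b_i=1$; then a common loop symbol at coordinate $i$ exists for \emph{both} iff not ($a_i=1$ and $b_i=1$) — wait, that gives the wrong polarity, so I would instead use the convention that coordinate $i$ ``blocks'' exactly when both bits are $1$, achieved by having $a$'s gadget at a $1$-coordinate require symbol $1$ and $b$'s gadget at a $1$-coordinate forbid symbol $1$. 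The key design goal: there is a length-$d$ word $\gamma$ with $\delta(a_1,\gamma)=a_1$ and $\delta(b_1,\gamma)=b_1$ if and only if $a\perp b$. Second, I need $u\intr b$ for the two loop states: by Lemma~\ref{lem:Pgeq2} this means $|P[u,v]|\ge 2$, so I must ensure the entry states $a_1$ of different $A$-gadgets (and the $a_1$, $b_1$ pair) are reached by at least two words differing in their last-before-common-suffix symbol, or one word is a suffix of another. This is arranged using a binary-tree ``addressing'' prefix of depth $\log N$ routing to each gadget, so that any two gadget-entry states have incoming words that witness $\intr$ — hence the $m=\Theta(N(d+\log N))$ bound (the $\log N$ from addressing, the $Nd$ from the gadgets).

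Third, I would verify minimality: the DFA must be the minimum DFA for its language. Here I would argue that distinct states have distinct right languages — the addressing prefixes make all gadget copies reachable by unique routes, and the coordinate position within a gadget is detectable by the residual length/content — and that no two states are Myhill--Nerode equivalent; if necessary, adding a distinguishing final state or a ``verification tail'' after each gadget that accepts only upon a correct full traversal makes all states inequivalent. Then the final biconditional follows: if the \OV{} instance is a YES-instance with witnesses $a,b$, take $u=a_1$, $v=b_1$, $\gamma$ the common loop word, and check $u\intr v$ via the addressing prefixes; conversely, if $\mt D$ has distinct $u,v$ with a common self-looping word $\gamma$ and $u\intr v$, I argue the only self-loops in $\mt D$ of positive length are inside the gadgets (the addressing tree is acyclic, the verification tails are acyclic), $u$ and $v$ must be entry states of an $A$-gadget and a $B$-gadget respectively (two $A$-gadgets share no common loop word unless $a=a'$, handled by design, and similarly $B$), and the common loop word forces $a\perp b$.

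The main obstacle I expect is the simultaneous satisfaction of three tensions: (1) the gadgets must be \emph{rigid} enough that positive-length self-loops occur only at intended places and only for orthogonal pairs, (2) the whole automaton must nonetheless be \emph{minimal}, which fights against the symmetry and repetition inherent in having $N$ near-identical gadgets, and (3) the $\intr$ condition $|P[u,v]|\ge 2$ must hold for \emph{every} relevant cross pair $(a_1,b_1)$ without accidentally making spurious pairs satisfy both ``$\intr$'' and ``common self-loop,'' which would break the ``only if'' direction. Balancing the addressing-tree construction (which creates the $\intr$ witnesses and helps minimality) against the gadget design (which must not inherit unwanted loops from the tree) is the delicate part; I would likely need a careful separator symbol between the addressing phase and the coordinate phase, and a post-gadget tail, to keep these concerns decoupled, and then a somewhat tedious but routine case analysis to confirm minimality and the loop characterization.
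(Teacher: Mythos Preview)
Your plan is a reasonable blueprint for a from-scratch reduction, but it diverges from the paper's approach and has a concrete gap in the $\intr$ argument.

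\textbf{What the paper does instead.} The paper does not build a new automaton at all: it reuses verbatim the minimum DFA constructed in Proposition~1 of Becker et al.~\cite{becker2023optimal} over the alphabet $\{\#,0,1\}$. That construction already guarantees that the only pairs of distinct states admitting a common self-looping word are pairs $(\hat a_i,\hat b_j)$ with $a_i^\intercal b_j=0$. The only new ingredient is the observation that in that same construction every $\hat a_i$ is reached by a word ending in $000$ and by a word ending in $110$, while every $\hat b_j$ is reached by a word ending in $010$. From these three concrete suffixes one gets, for \emph{every} order $\preceq$, that $\alpha\prec\beta$ iff $\beta\prec\alpha'$ (compare $000$/$010$ versus $110$/$010$), so $\hat a_i\intro\hat b_j$ for all orders, i.e.\ $\hat a_i\intr\hat b_j$ unconditionally. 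The biconditional then drops out in two lines. Minimality, the size bound, and the cycle characterisation are all inherited from the cited construction.

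\textbf{The gap in your plan.} Your mechanism for establishing $u\intr v$ is the binary addressing tree, but a tree by itself does not give $|P[u,v]|\ge 2$. In a depth-$\log N$ complete binary tree each leaf (gadget entry) has a \emph{single} incoming word, its address; for two specific leaves $\hat a_i,\hat b_j$ the two addresses determine exactly one distinguishing letter-pair at their longest common suffix, so $|P[\hat a_i,\hat b_j]|=1$ and $\intr$ fails. Appealing to words that go around the gadget loop does not help in general: address$_a\cdot\gamma^k$ and address$_b\cdot\gamma^k$ share the suffix $\gamma^k$ and then disagree at the \emph{same} position as the bare addresses, producing the same single pair again. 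You would need an additional device --- precisely something like the paper's trick of forcing each $\hat a_i$ to be reached by two words with \emph{different} short suffixes (here $000$ and $110$) chosen so that comparison against the $\hat b_j$-suffix ($010$) flips direction with the order. Without such a device the forward direction of the biconditional (YES $\Rightarrow$ $\exists u,v,\gamma$ with $u\intr v$) does not go through.

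A secondary issue: your coordinate-gadget convention is still unresolved (you flag the polarity problem yourself), and the minimality argument is only sketched. Both are fixable, but neither is ``routine'' until the gadget is actually pinned down; in the paper these burdens are discharged by citation.
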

Once this lemma is established, we can take an \OV{} instance with sets of size $N$ containing vectors of dimension $d \in \omega(\log N)$ and construct the DFA $\mt D$ of size $m = \Theta(N(d+\log N)) = \Theta(Nd)$. Now assume that we can check if $\La\in \UW$ for the language recognized by $ \mt D$ in $\mt O(m^{2-\eta})$. Using Corollary~\ref{cor:strategy} and Lemma~\ref{lemma: reduction}, we can thus solve the \OV{} instance in $\mt O((Nd)^{2-\eta}) = \mt O(N^{2-\eta}\poly(d))$ time, as the \OV{} instance is a \YES-instance if and only if $\La\notin \UW$. This shows Theorem~\ref{thm:lower bound}.

In the rest of this section we argue why Lemma~\ref{lemma: reduction} holds. Our reduction is indeed identical to the one of Becker et al.~\cite{becker2023optimal}, however we need a few additional observations in order to adjust the proof of their Proposition~1 to a proof of Lemma~\ref{lemma: reduction}.

The construction given in Proposition 1 of \cite{becker2023optimal} gives a minimum DFA $\mt  D$ over the alphabet $\{\#,0,1\}$ from an instance $A=\{a_1,...,a_N\}, B=\{b_1, ...,b_N\}$ of \OV{} such that:
\begin{enumerate}
\item $\mt  D$ contains pairwise distinct nodes  $\hat{a}_1,...,\hat{a}_N, \hat{b}_1, ...,\hat{b}_N$, such that, for all states  $u\neq v$ in $\mt  D$, there exists $\gamma$ with  $\delta(u, \gamma) = u$ and $\delta(v, \gamma) = v$ for some string $\gamma$ iff $u= \hat{a}_i, v= \hat{b}_j$, and $a_i^{\intercal}b_j=0$ holds. 
\item every node $\hat{a}_i$ is reached by a word  ending in $000$ and by a word ending in $110$; every node $\hat{b}_j$ is reached by a word  ending in $010$.
\end{enumerate}

These properties  are sufficient  to prove Lemma~\ref{lemma: reduction}.
On the one hand,  if $A=\{a_1,...,a_N\}, B=\{b_1, ...,b_N\}$ are such that    $a_i^{\intercal}b_j=0$ for some $i,j$,  then,   by 1  above,  in  $\mt  D$  we can find $u=\hat{a}_i, v=\hat{b}_j$  and a string  $\gamma$ with  $\delta(u, \gamma) = u$ and $\delta(v, \gamma) = v$. Moreover,   by 2  above, there are $\alpha, \alpha'\in I_u$ and $\beta\in I_v$ such that $000\dashv \alpha, 110\dashv \alpha'$ and $010\dashv \beta$. Then $u\intr v$ holds, because     we have $\alpha\prec \beta$ iff $\beta\prec \alpha'$,  for any order  $\preceq$ of the alphabet $\{\#,0,1\}$.

On the other hand,  if $\mt  D$ contains two distinct nodes $u,v$ with $\delta(u, \gamma) = u$ and $\delta(v, \gamma) = v$ for some string $\gamma$ and $u\intr v$, then, by 1 above,   there are $i,j$ such that  $u= \hat{a}_i, v= \hat{b}_j$ and $a_i^{\intercal}b_j=0$ holds.

\section{Conclusions}
In this paper, we established a link between the class of Wheeler languages and --- well studied --- (sub-)classes of \emph{locally testable} languages. Our link is provided on both theoretical as well as algorithmic grounds. On the one hand, we proved that the class of languages that, together with their complement, are Wheeler with respect to \emph{any} order of the alphabet corresponds exactly to the union of Definite and Reverse Definite languages --- that is, the collections of words characterized either by a finite set of prefixes or suffixes. On the other hand we provided an optimal algorithm for characterizing the \emph{universally} (with respect to the alphabet ordering) Wheeler languages. 

The next steps along the path we started to follow  here are, for example, to characterize languages that, together with their complement, are Wheeler with respect to a fixed order of the alphabet. Moreover, somehow along the style proposed in \cite{jacm/CotumaccioDPP23}, it would be interesting to reformulate and study analogues of the problems studied here for general --- i.e. co-lexicographically \emph{partially} ordered --- languages/automata.

\begin{subappendices}
\renewcommand{\thesection}{\Alph{section}}%

\section{Proofs from Section \ref{sec:DefRevDef}}

The proof of Lemma \ref{lem:wcycle} relies on the following results from \cite{AlankoDPP21}:
\begin{theorem}[{\cite[Theorem 4.3]{AlankoDPP21}}]\label{th:Wheeler} Let $\La$ be a regular language   and let $\mt D_\La$ be its minimum trimmed DFA. Then  $\La\not \in \Wh(\preceq)$ if and only if all of the following hold:
\begin{enumerate}
    \item there exists a pair of equally-labeled   cycles $\mt C_p, \mt C_q$ (say labeled by $\gamma \in \Sigma^+$) in $\mt D_\La$  starting respectively from  states $p\neq q$;
    \item there exist two words $\alpha, \beta$ with $\alpha\in I_p, \beta\in I_q $ such that $\gamma$ is not a suffix of $\alpha $ and  $\gamma$ is not a suffix of $\beta$;
\item   either $\alpha, \beta \prec \gamma$ or $\gamma \prec \alpha, \beta$. 
\end{enumerate}
\end{theorem}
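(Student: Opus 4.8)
The plan is to deduce the characterization from Lemma~\ref{lem:wcycle}: for the minimum trimmed DFA $\mt D_\La$, I will show that the presence in $\mt D_\La^2$ of a cycle through a pair $(p,q)$ with $p\neq q$ and $p\intro q$ is equivalent to the conjunction of conditions~(1)--(3). A cycle of $\mt D_\La^2$ through $(p,q)$ labelled by a word $\gamma\in\Sigma^+$ is exactly a pair of equally-labelled cycles $p\xrightarrow{\gamma}p$ and $q\xrightarrow{\gamma}q$ in $\mt D_\La$, so condition~(1) carries over directly; what remains is to match ``$p\intro q$ for some such $p\neq q$ carrying a common $\gamma$-cycle'' with the existence of $\alpha\in I_p$, $\beta\in I_q$ such that $\gamma$ is a suffix of neither and $\alpha,\beta$ lie on the same side of $\gamma$. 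Both matchings rely on two elementary facts about the co-lexicographic order, which is just the lexicographic order of reversed strings: (a)~appending a fixed word on the right both preserves and reflects $\preceq$; and (b)~if $\gamma$ is not a suffix of $\alpha$, then for every $w$ whose length-$|\gamma|$ suffix equals $\gamma$ we have $\alpha\prec w\iff\alpha\prec\gamma$ (reading from the right, the first mismatch between $\alpha$ and $w$ already falls inside $\gamma$). Throughout I use that $\alpha\in I_p$ together with a $\gamma$-cycle at $p$ gives $\alpha\gamma^{j}\in I_p$ for all $j\ge 0$, and likewise for $q$.

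For ``(1)--(3)$\Rightarrow\La\notin\Wh(\preceq)$'', assume the conditions; suppose first $\alpha,\beta\prec\gamma$ (the case $\gamma\prec\alpha,\beta$ is symmetric, where condition~(2) is genuinely needed). Given $j,k$, strip the common power $\gamma^{\min(j,k)}$ from $\alpha\gamma^{j}$ and $\beta\gamma^{k}$ and apply~(b) to the remaining word, which ends in $\gamma$: one obtains $\alpha\gamma^{j}\prec\beta\gamma^{k}$ when $j<k$, and $\alpha\gamma^{j}\succ\beta\gamma^{k}$ when $j>k$. Taking $(j,k)=(1,2)$ and $(j,k)=(2,1)$ produces words in $I_p$ and $I_q$ witnessing $p\intro q$; together with the $\mt D_\La^2$-cycle through $(p,q)$ and $p\neq q$, Lemma~\ref{lem:wcycle} gives $\La\notin\Wh(\preceq)$.

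For the converse, Lemma~\ref{lem:wcycle} furnishes a cycle of $\mt D_\La^2$ through $(p,q)$, labelled by some $\gamma\in\Sigma^+$, with $p\neq q$ and $p\intro q$; fix witnesses $\alpha_1,\alpha_2\in I_p$, $\beta_1,\beta_2\in I_q$ with $\alpha_1\prec\beta_1$ and $\beta_2\prec\alpha_2$. Call a word $w$ reaching $p$ or $q$ \emph{left} (resp.\ \emph{right}) according to the sign of the comparison of $w\gamma^{M}$ with a pure power $\gamma^{N}$ for $N\gg M\gg 0$; by~(a),(b) this sign is stable, every such $w$ is left or right, and left words are $\prec$ right words. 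Then $\alpha_1\prec\beta_1$ excludes ``$\alpha_1$ right, $\beta_1$ left'' and $\beta_2\prec\alpha_2$ excludes ``$\alpha_2$ left, $\beta_2$ right''. If there were no monochromatic pair in $\{\alpha_1,\alpha_2\}\times\{\beta_1,\beta_2\}$ then both $\{\alpha_1,\alpha_2\}$ and $\{\beta_1,\beta_2\}$ would be monochromatic, and each of the four colour combinations is ruled out by one of the two exclusions above; hence there are $\alpha\in I_p$ and $\beta\in I_q$ of the same colour. Finally replace the cycle label $\gamma$ by $\gamma^{M}$ with $M$ large enough that $|\gamma^{M}|$ exceeds $|\alpha|$ and $|\beta|$: then $\gamma^{M}$ is a suffix of neither word (condition~(2)), $\gamma^{M}$ still loops at $p$ and $q$ (condition~(1)), and for $M$ large the common colour of $\alpha,\beta$ gives exactly $\alpha,\beta\prec\gamma^{M}$ (left) or $\gamma^{M}\prec\alpha,\beta$ (right), which is~(3).

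The delicate point I expect is the bookkeeping around ``sides'': that the notion is well defined relative to the periodic word $\gamma\gamma\gamma\cdots$, is stable under the $\gamma$-padding the cycles force on us, and is compatible with the interleaving $p\intro q$; and the observation that one must be free to replace $\gamma$ by a power $\gamma^{M}$, since condition~(2) can genuinely fail for the originally given cycle label. I would also remark that a proof not routed through Lemma~\ref{lem:wcycle} (whose own proof invokes this theorem) is available: the direction ``(1)--(3)$\Rightarrow$ not Wheeler'' by pumping the two cycles inside a hypothetical WDFA for $\La$ and using uniqueness of the Wheeler order, and ``not Wheeler $\Rightarrow$ (1)--(3)'' from the non-totality of $\leq_{\mt D}$ on the input-consistent refinement of $\mt D_\La$, whose incomparable pairs of states supply the witness words to be pumped into cycles.
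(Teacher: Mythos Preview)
The paper does not give its own proof of Theorem~\ref{th:Wheeler}; it is quoted verbatim from \cite{AlankoDPP21} and then used, in the appendix, to derive Lemma~\ref{lem:wcycle}. Your proposal runs that derivation in the opposite direction---deducing Theorem~\ref{th:Wheeler} from Lemma~\ref{lem:wcycle}---which, as you yourself flag in the last paragraph, is circular within this paper's logical structure. If one takes Lemma~\ref{lem:wcycle} as an independently established external result (it is cited from \cite{becker2023optimal}), your argument is legitimate; but as a self-contained proof here, the main body of the proposal begs the question, and only your closing sentence sketches a direct route.

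Setting the circularity aside, the technical core of your proposal---that ``$\mt D_\La^2$ has a cycle through $(p,q)$ with $p\neq q$ and $p\intro q$'' is equivalent to conditions (1)--(3)---is precisely what the paper establishes in its appendix proof of Lemma~\ref{lem:wcycle}, so the two arguments are essentially the same computation read in opposite directions. For (1)--(3) $\Rightarrow$ $p\intro q$, both you and the paper exhibit interleaved words of the form $\alpha\gamma^j,\beta\gamma^k$. For the converse, the paper first replaces $\gamma$ by a high power so that it is a suffix of none of the four witnesses and then does a short case split on the position of $\gamma$ relative to $\alpha,\alpha',\beta$; your ``left/right colour relative to $\gamma^\omega$'' is a clean repackaging of exactly that case analysis, and your remark that one must be free to pass to a power $\gamma^M$ matches the paper's ``by using a power of $\gamma$, if necessary''. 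So your argument is correct and, at the level of the underlying equivalence, essentially identical to what the paper does---only the logical dependency is inverted.
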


\begin{lemma*}[\ref{lem:wcycle}]  
  Let $\La$ be a regular language,   let $\preceq$ be an order of the alphabet, and let  $\mt D_\La$ be the minimum trimmed DFA accepting $\La$. Then, 
    $\La\notin \Wh(\preceq)$ if and only if $\mt D_\La^2$ contains a cycle $(p_1, q_1) \rightarrow
    (p_2, q_2) \rightarrow \cdots \rightarrow (p_k, q_k) \rightarrow (p_1, q_1)$ such that
 the following hold: (i) $p_1 \neq q_1$, and (ii) $p_1 \intro  q_1$.
 \end{lemma*}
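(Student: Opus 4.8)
The plan is to deduce Lemma~\ref{lem:wcycle} from the three-part characterization in Theorem~\ref{th:Wheeler}. The point is to translate conditions (1)--(3) of that theorem into the single packaged statement about a cycle in the product automaton $\mt D_\La^2$ together with the intertwining relation $p_1 \intro q_1$. A pair of equally-labeled cycles $\mt C_p$ (from $p$) and $\mt C_q$ (from $q$), both labeled by the \emph{same} word $\gamma\in\Sigma^+$, is exactly a cycle in $\mt D_\La^2$ passing through $(p,q)$: reading $\gamma$ from $(p,q)$ returns to $(p,q)$, and conversely any cycle through $(p,q)$ in $\mt D_\La^2$ gives two equally-labeled cycles from $p$ and from $q$. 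So the structural content of condition (1) and the ``cycle in $\mt D_\La^2$ with $p_1\neq q_1$'' content of the lemma match up directly, with $(p_1,q_1) = (p,q)$ and $k$ the length of $\gamma$. The remaining work is to show that, given such a cycle, the combination of (2) and (3) holds for \emph{some} choice of representatives if and only if $p\intro q$.

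\textbf{Key steps.} First I would fix the dictionary: a cycle $(p_1,q_1)\to\cdots\to(p_k,q_k)\to(p_1,q_1)$ in $\mt D_\La^2$ is the same data as two cycles $\mt C_{p_1},\mt C_{q_1}$ in $\mt D_\La$ labeled by a common word $\gamma$ of length $k$, starting from $p_1$ and $q_1$ respectively. Then, for the ``only if'' direction of the lemma, assume $\La\notin\Wh(\preceq)$; apply Theorem~\ref{th:Wheeler} to obtain the cycles, the word $\gamma$, and words $\alpha\in I_{p}$, $\beta\in I_{q}$ with $\gamma$ not a suffix of either and either $\alpha,\beta\prec\gamma$ or $\gamma\prec\alpha,\beta$. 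I then need to produce from these the four words witnessing $p\intro q$, i.e.\ $\alpha,\alpha'\in I_p$ and $\beta,\beta'\in I_q$ with $\alpha\prec\beta$ and $\beta'\prec\alpha'$. The natural idea is to pump: since $p$ lies on the cycle $\mt C_p$, the words $\alpha\gamma^n$ all lie in $I_p$, and similarly $\beta\gamma^n\in I_q$; by choosing the number of repetitions on the two sides differently, one can flip the co-lexicographic comparison. Concretely, in the case $\alpha,\beta\prec\gamma$: for large $n$ the word $\alpha\gamma^n$ is co-lexicographically larger than $\beta\gamma^{n-1}$-type words but smaller than $\beta\gamma^{n}$-type words — using that a longer run of $\gamma$ at the end dominates — so picking suitable exponents yields both $\alpha\gamma^{i}\prec\beta\gamma^{j}$ and $\beta\gamma^{j'}\prec\alpha\gamma^{i'}$. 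The condition ``$\gamma$ is not a suffix of $\alpha$ (resp.\ $\beta$)'' is precisely what is needed to guarantee that appending extra copies of $\gamma$ actually changes the co-lexicographic comparison in the controlled way (it prevents the suffix-of relation from collapsing the comparison), and the symmetric case $\gamma\prec\alpha,\beta$ is handled the same way with the roles of the exponents swapped. For the ``if'' direction, assume the cycle exists with $p_1\neq q_1$ and $p_1\intro q_1$; I would recover the two equally-labeled cycles (condition~(1)) from the cycle, and then use the four intertwining witnesses to extract $\alpha\in I_{p_1}$ and $\beta\in I_{q_1}$ satisfying conditions (2) and (3) — essentially, among the available witnesses one can select ones where $\gamma$ is not a suffix (again by appending a letter of the cycle that differs, or by choosing the witness that is already on the ``wrong side'' of $\gamma$), and the strict inequalities $\alpha\prec\beta$, $\beta'\prec\alpha'$ force, after possibly pumping by $\gamma$, that both chosen words sit on the same side of $\gamma$. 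Then Theorem~\ref{th:Wheeler} gives $\La\notin\Wh(\preceq)$.

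\textbf{Main obstacle.} The delicate part is the bookkeeping of co-lexicographic comparisons under pumping by $\gamma$: I must argue carefully that appending copies of $\gamma$ to words in $I_p$ and $I_q$ lets me realize \emph{both} orderings $\alpha\prec\beta$ and $\beta'\prec\alpha'$ simultaneously, and conversely that the two strict inequalities in the definition of $\intro$ can be ``aligned'' with a single word $\gamma$ so that conditions (2)--(3) of Theorem~\ref{th:Wheeler} hold for one common choice. This requires a small lemma on how co-lexicographic order behaves on words of the form $w\gamma^n$ as $n$ grows — roughly, that for fixed $w,w'$ the sign of the comparison of $w\gamma^n$ and $w'\gamma^{n'}$ is eventually governed by $n$ vs.\ $n'$ unless $\gamma$ is a suffix of $w$ or $w'$ — and a case analysis on whether the relevant witnesses lie ``below'' or ``above'' $\gamma$. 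Everything else (the equivalence between product-automaton cycles and pairs of equally-labeled cycles, and the identification $(p_1,q_1)=(p,q)$) is routine.
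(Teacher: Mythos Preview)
Your approach is essentially the paper's: reduce both directions to Theorem~\ref{th:Wheeler}, identify the product-automaton cycle with a pair of equally-labeled cycles, and use pumping by $\gamma$ to pass between conditions (2)--(3) and $p\intro q$. Two minor points where the paper's execution is cleaner than what you sketch: in the forward direction no asymptotic ``large $n$'' lemma is needed---exponents $1$ and $2$ already work (e.g.\ if $\alpha,\beta\prec\gamma$ then $\alpha\gamma\prec\beta\gamma\prec\alpha\gamma^2$); and in the backward direction one does not pump the witnesses at all---instead one replaces $\gamma$ by a high enough power so that it is not a suffix of any of $\alpha,\alpha',\beta,\beta'$, and then a short case split on the position of these four words relative to $\gamma$ directly yields a pair on the same side.
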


 \begin{proof}
  Suppose $\La$ is not  Wheeler w.r.t. $\preceq$. Then, by Theorem \ref{th:Wheeler} there are:  
\begin{enumerate}
    \item a  pair of equally labeled   cycles $\mt C_p, \mt C_q$ (say labeled by $\gamma$) in $\mt D_\La$  starting starting respectively from  states $p\neq q$;
    \item two words $\alpha, \beta$ with $\alpha\in I_p, \beta\in I_q $ such that $\gamma$ is not a suffix of $\alpha $ or $\beta$;
\item   either $\alpha, \beta \prec \gamma$ or $\gamma \prec \alpha, \beta$. 
\end{enumerate} Let $p_1=p,q_1=q$ and consider the $\gamma$-cycle       $(p_1, q_1) \rightarrow
    (p_2, q_2) \rightarrow \cdots \rightarrow (p_k, q_k) \rightarrow (p_1, q_1)$  in $\mt D_\La^2$.
Then it is   sufficient to prove that  $p_1\intro q_1$. 
Suppose w.l.o.g. that $\alpha\prec \beta$. If 
  $\alpha, \beta \prec \gamma$ then $\alpha\gamma, \alpha \gamma^2\in I_{p_1}$, $\beta \gamma \in I_{q_1}$ and $\alpha\gamma \prec\beta\gamma$ while  $\beta\gamma\prec \alpha \gamma^2$.   If $\gamma \prec \alpha, \beta$ then 
  $\beta\gamma, \beta\gamma^2\in I_{q_1}$, $\alpha\gamma \in I_{p_1}$ and 
  $\beta\gamma\succ \alpha\gamma$ and $\alpha\gamma  \succ \beta\gamma^2$ (the last disequality holds because $\gamma $  is not a suffix of $\alpha$).
Hence in both cases we obtain   $p_1\intro q_1$. 

\medskip

Suppose now that we have two states $p_1\neq q_1$ with  $p_1\intro q_1$  in $\mt D_\La$  from which two   cycles labeled by the same word $\gamma$ start.  From  $p_1\intro q_1$ we know that there are $\alpha, \alpha'\in I_{p_1}$, $\beta, \beta'\in I_{q_1}$ with $\alpha \prec \beta $ and $\beta'\prec \alpha'$. By using a power of $\gamma$, if necessary, we may suppose that $\gamma$ is not a suffix of $\alpha, \beta, \alpha', \beta'$. If $\gamma \prec \alpha$  or $\alpha'\prec \gamma$ then the conditions of Theorem \ref{th:Wheeler}    are satisfied (with $\alpha', \beta'$ playing the role of $\alpha, \beta$ in the second case).
Otherwise, we have $\alpha \prec \alpha'$. Then, either  $\beta\prec \gamma$ or   $\gamma \prec \beta$ holds. In the first case we have $\alpha,  \beta \prec \gamma$ while in the second case we have $\gamma \prec \beta, \alpha'$, so that, in both cases we fulfill the  conditions of Theorem ~\ref{th:Wheeler}. 
\end{proof}

\begin{lemma*}[\ref{lem:3cycles}]
    If the minimum trimmed automaton $\mt D_\La$ of a language $\La$ contains three equally labeled cycles starting from three different states, then $\La\not \in \EW$.
\end{lemma*}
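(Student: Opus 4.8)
The plan is to fix an \emph{arbitrary} total order $\preceq$ on $\Sigma$ and show $\La\notin\Wh(\preceq)$; since $\preceq$ is arbitrary this gives $\La\notin\EW$. Let $p_1,p_2,p_3$ be the three distinct states of $\mt D_\La$ carrying cycles labelled by a common word $\gamma\in\Sigma^+$, so $\delta(p_i,\gamma)=p_i$ and hence $\alpha\gamma^k\in I_{p_i}$ for every $\alpha\in I_{p_i}$ and every $k\ge 0$. For each pair $i\neq j$ the word $\gamma$ spells a (non‑trivial) cycle through $(p_i,p_j)$ in $\mt D_\La^2$, so by Lemma~\ref{lem:wcycle} it suffices to exhibit some pair with $p_i\intro p_j$. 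Assume for contradiction that no pair among $p_1,p_2,p_3$ satisfies $\intro$. Since the co‑lexicographic order is total on $\Sigma^*$ and $I_p\cap I_q=\emptyset$ for $p\neq q$, the failure of $p\intro q$ means that either every word in $I_p$ is $\prec$ every word in $I_q$, or conversely; hence the three states are linearly ordered by this ``everything‑below'' relation, and after relabelling we may assume $I_{p_1}\prec I_{p_2}\prec I_{p_3}$ (meant elementwise).

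The heart of the argument is a claim about two such states. Extend $\preceq$ co‑lexicographically to the set $\Sigma^*\cup\Sigma^{-\omega}$ of finite and left‑infinite words, and let ${}^{\omega}\gamma=\cdots\gamma\gamma\gamma$ be the left‑infinite periodic word with period $\gamma$. \emph{Claim: if $p\neq q$ both carry a $\gamma$‑cycle and $I_p\prec I_q$ elementwise, then $\alpha\prec{}^{\omega}\gamma$ for every $\alpha\in I_p$ and $\beta\succ{}^{\omega}\gamma$ for every $\beta\in I_q$.} Granting this, apply the claim to the pair $(p_1,p_2)$ to get $w\succ{}^{\omega}\gamma$ for every $w\in I_{p_2}$, and to the pair $(p_2,p_3)$ to get $w\prec{}^{\omega}\gamma$ for every $w\in I_{p_2}$; as $\mt D_\La$ is trimmed, $I_{p_2}\neq\emptyset$, and any $w\in I_{p_2}$ would then be both $\succ$ and $\prec{}^{\omega}\gamma$, a contradiction. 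Hence some pair among $p_1,p_2,p_3$ interleaves and we conclude by the first paragraph.

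To prove the claim I would first record two elementary facts: (i) right‑concatenation by a fixed word is a strictly $\preceq$‑monotone injection on words, so for any finite word $w$ the sequence $(w\gamma^k)_{k\ge 0}$ is strictly monotone, with direction equal to that of $w$ versus $w\gamma$; and (ii) since ${}^{\omega}\gamma\cdot\gamma={}^{\omega}\gamma$, the comparison between $w\gamma^k$ and ${}^{\omega}\gamma$ does not depend on $k$ — all the $w\gamma^k$ lie strictly on the same side of ${}^{\omega}\gamma$ — while $w\gamma^k$ and ${}^{\omega}\gamma$ share a suffix of length at least $k|\gamma|$; consequently $w\prec{}^{\omega}\gamma$ is equivalent to $(w\gamma^k)_k$ being increasing, in which case $\sup_k w\gamma^k={}^{\omega}\gamma$ and $\inf_k w\gamma^k=w$, and $w\succ{}^{\omega}\gamma$ is equivalent to it being decreasing, in which case $\inf_k w\gamma^k={}^{\omega}\gamma$ and $\sup_k w\gamma^k=w$. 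Now fix $\alpha\in I_p$, $\beta\in I_q$; from $I_p\prec I_q$ we get $\alpha\gamma^k\prec\beta\gamma^j$ for all $k,j$, hence $\sup_k\alpha\gamma^k\preceq\inf_j\beta\gamma^j$, and also $\alpha\prec\beta$. A short case analysis over which side of ${}^{\omega}\gamma$ each of $\alpha,\beta$ lies then eliminates three of the four combinations: $\alpha\prec{}^{\omega}\gamma,\ \beta\prec{}^{\omega}\gamma$ forces ${}^{\omega}\gamma=\sup_k\alpha\gamma^k\preceq\inf_j\beta\gamma^j=\beta\prec{}^{\omega}\gamma$; $\alpha\succ{}^{\omega}\gamma,\ \beta\prec{}^{\omega}\gamma$ forces $\alpha\succ{}^{\omega}\gamma\succ\beta$, contradicting $\alpha\prec\beta$; and $\alpha\succ{}^{\omega}\gamma,\ \beta\succ{}^{\omega}\gamma$ forces $\alpha=\sup_k\alpha\gamma^k\preceq\inf_j\beta\gamma^j={}^{\omega}\gamma\prec\alpha$. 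Only $\alpha\prec{}^{\omega}\gamma$ and $\beta\succ{}^{\omega}\gamma$ survives, which is the claim.

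I expect the main obstacle to be purely expository: setting up the co‑lexicographic order on $\Sigma^*\cup\Sigma^{-\omega}$ and the suffix‑convergence of $(w\gamma^k)_k$ to ${}^{\omega}\gamma$ carefully enough to justify the $\sup$/$\inf$ identities — this machinery is standard in the co‑lexicographic/Wheeler literature but costs a few lines. Everything else (Lemma~\ref{lem:wcycle}, strict monotonicity of right‑concatenation, totality of co‑lex, and the final four‑case check) is routine; alternatively the whole claim can be phrased using only sufficiently large finite powers $\gamma^N$ in place of ${}^{\omega}\gamma$, at the cost of somewhat messier bookkeeping.
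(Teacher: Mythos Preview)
Your argument is correct, but it takes a considerably longer route than the paper's. The paper does not go through Lemma~\ref{lem:wcycle} at all; it invokes the upstream characterization (Theorem~\ref{th:Wheeler} from \cite{AlankoDPP21}), whose condition~(3) reads ``either $\alpha,\beta\prec\gamma$ or $\gamma\prec\alpha,\beta$''. That formulation is tailor-made for a pigeonhole: pick one word $\zeta_i$ into each of the three states, replace $\gamma$ by a high enough power so that $\gamma$ is a suffix of none of the $\zeta_i$, and observe that among three words two must lie on the same side of $\gamma$; those two states then satisfy all of (1)--(3) of Theorem~\ref{th:Wheeler}. The whole proof is three lines.

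Your choice of entering through Lemma~\ref{lem:wcycle} forces you to manufacture the relation $p_i\intro p_j$ instead, and that is what drags in the left-infinite word ${}^{\omega}\gamma$, the sup/inf identities, and the four-case elimination. All of that is correct (your monotonicity and convergence facts about $(w\gamma^k)_k$ are sound, and the sup really is ${}^{\omega}\gamma$), but it is recreating, in a heavier form, the same dichotomy the paper gets for free: your claim ``$I_p$ lies below ${}^{\omega}\gamma$ and $I_q$ lies above'' is the infinite-word avatar of the paper's ``$\zeta_i,\zeta_j$ lie on the same side of $\gamma$''. What your approach buys is independence from the specific shape of Theorem~\ref{th:Wheeler} --- you only need the product-automaton cycle criterion --- at the cost of a page of order-theoretic setup; the paper trades that setup for a direct appeal to the stronger external theorem.
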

\begin{proof} Suppose $\mt D_\La$ contains three different states $p,q,r$ which are the starting point  of three   cycles labeled by the same word $\gamma$. Let $\zeta_1, \zeta_2,\zeta_3$ be words reaching $p,q,r$ from the starting state. Using a power of $\gamma$, if necessary, we may suppose that $\gamma$ is not a suffix of   $\zeta_i$, for all $i=1,2,3$. 
     Fix an arbitrary order $\preceq$ of the alphabet. Then  
  there must be two indices $i,j\in \{1,2,3\}$ such that $\zeta_i,\zeta_j$ are either both bigger or both smaller than 
 $\gamma$. By Theorem ~\ref{th:Wheeler},  we have   $\La\not \in \Wh(\preceq)$ and by the generality of the choice of $\preceq$ it follows $\La\notin \EW$.
\end{proof}

 \begin{lemma*}[\ref{lem:Sigma^*}] 
  $\La\in \DEF$   iff  $\La$ is finite,  or  $\La\in \UW$ and  $\pf\La=\Sigma^*$.   
\end{lemma*}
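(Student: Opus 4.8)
**Proof proposal for Lemma~\ref{lem:Sigma^*}.**

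The plan is to prove both implications separately, using the characterization of $\UW$ in terms of cycles in $\mt D_\La^2$ (Corollary~\ref{cor:strategy}) together with the structural description of $\mt D_\La$ from the preliminaries.

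First I would prove the right-to-left direction. If $\La$ is finite, then $\La$ is trivially definite (take $G=\emptyset$ and $F=\La$), so assume instead $\La\in \UW$ and $\pf\La=\Sigma^*$. Since $\pf\La=\Sigma^*$, the trimmed minimum DFA $\mt D_\La$ is already complete and has no absorbing non-final state, so $\mt D_\La=\mt D^c_\La$. The key observation is that $\La\in \UW$ forces, via Corollary~\ref{cor:strategy} (equivalently Lemma~\ref{lem:wcycle} applied to any fixed order together with Lemma~\ref{lem:caron}), a strong structural restriction: if there were two equally-labeled cycles $\mt C_p,\mt C_q$ in $\mt D_\La$ starting from distinct states $p\neq q$, then since $\pf\La=\Sigma^*$ every state is reached by words of every ``suffix type'', and one can always pick witnessing words making $p\intr q$ true (this is exactly the kind of argument used in the lower-bound section, where $\pf\La=\Sigma^*$-like abundance of incoming words yields $u\intr v$ for free). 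Hence $\La\in \UW$ and $\pf\La=\Sigma^*$ together imply that $\mt D_\La$ has no pair of equally-labeled cycles from distinct states, i.e., $\La\in\SLT$ by Lemma~\ref{lem:caron}. It then remains to show that an $\SLT$ language with $\pf\La=\Sigma^*$ is definite: here I would argue that, since the automaton is complete and has the property that from any state the continuation behaviour stabilizes after reading a sufficiently long block (no two distinct states carry the same cycle), after reading any word of length exceeding the number of states $n$ the state reached depends only on the last $n$ letters; this is precisely the statement that $\La=F\cup\Sigma^*G$ for $F=\La\cap\Sigma^{<n}$ and $G$ the set of length-$n$ suffixes that lead to a final state, i.e., $\La\in\DEF$.

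For the left-to-right direction, suppose $\La\in\DEF$, say $\La=F\cup\Sigma^*G$ with $F,G$ finite. If $G=\emptyset$ then $\La=F$ is finite and we are done; so assume $G\neq\emptyset$. Then every word of $\Sigma^*$ can be extended (by a fixed word of $G$) into $\La$, hence $\pf\La=\Sigma^*$. By Lemma~\ref{lem:slt_are_uw} we have $\DEF\subseteq\SLT\subseteq\UW$, so $\La\in\UW$; combined with $\pf\La=\Sigma^*$ this gives the right-hand disjunct. (Note the two cases of the disjunction overlap exactly on the finite definite languages that are also in $\UW$, which is harmless.)

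The main obstacle I expect is the combinatorial core of the right-to-left direction: turning ``$\mt D_\La$ complete with no two equally-labeled cycles from distinct states'' into the explicit definite form $\La=F\cup\Sigma^*G$. One must show that reading any sufficiently long word lands in a state determined solely by a bounded-length suffix of that word. The delicate point is ruling out that a long word could reach two different states depending on its distant past; this is where the ``no pair of equally-labeled cycles'' condition (hence $\SLT$, via Lemma~\ref{lem:caron}) must be leveraged, presumably via a pumping-style argument: if two words $\alpha x$ and $\alpha' x$ with the same length-$n$ suffix $x$ reached different states, iterating the argument along a common suffix would produce the forbidden pair of equally-labeled cycles. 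Making this quantifier-chasing clean, and choosing the right bound (the number $n$ of states of $\mt D_\La$) for the length threshold defining $F$ and $G$, is the part that needs care; everything else is bookkeeping with the structural facts already recorded in the preliminaries and in Lemmas~\ref{lem:caron}, \ref{lem:slt_are_uw}, and~\ref{lem:wcycle}.
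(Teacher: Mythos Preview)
Your overall plan mirrors the paper's: first show that $\La\in\UW$ together with $\pf\La=\Sigma^*$ forces $\La\in\SLT$, then that an $\SLT$ language with $\pf\La=\Sigma^*$ is definite. The second step and the converse direction are essentially fine (though in your pigeonhole argument the correct threshold for the suffix length is the number of \emph{pairs} of distinct states, not $n$; the paper sidesteps this entirely by reading off $\epsilon\in H$ and $W=\emptyset$ directly from the $\SLT$ normal form $F\cup(H\Sigma^*\cap\Sigma^*K)\setminus\Sigma^*W\Sigma^*$, which is cleaner).

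The genuine gap is in your first step. From $\pf\La=\Sigma^*$ you only obtain that $\mt D_\La$ is complete, i.e., that every word reaches \emph{some} state; it does \emph{not} follow that ``every state is reached by words of every suffix type''. Hence your assertion that for any pair $p\neq q$ on equally-labeled $\gamma$-cycles one can always exhibit witnesses for $p\intr q$ is unjustified. The analogy with the lower-bound section is misleading: there the intertwining $u\intr v$ is engineered by hand via the specific endings $000,110,010$ of the construction, not obtained from a general principle about complete automata. Concretely, if $\alpha\in I_p$, $\beta\in I_q$ and the letter pairs deciding $\alpha$ vs.\ $\gamma$ and $\beta$ vs.\ $\gamma$ happen to be opposite (say $(a,b)$ and $(b,a)$), then no single order places both $\alpha$ and $\beta$ on the same side of $\gamma$, and you have given no mechanism to produce alternative words.

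The paper closes this gap not by proving $p\intr q$ directly, but by using completeness to manufacture a \emph{third} $\gamma$-cycled state: reading powers of $\gamma$ from $s$ must eventually enter a cycle at some state $r$. If $r\notin\{p,q\}$, one has three equally-labeled cycles and Lemma~\ref{lem:3cycles} yields $\La\notin\EW$, contradicting $\La\in\UW$. If $r\in\{p,q\}$, then $r$ is reached by a power $\nu$ of $\gamma$, which is co-lexicographically below the cycle label for \emph{every} order; a short case analysis on the remaining word then produces an order violating Wheelerness via Theorem~\ref{th:Wheeler}. This third-state trick is the missing idea in your argument.
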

 
\begin{proof} 
If $\La\in \DEF$ then $L=F \cup \Sigma^*G$ for finite sets $F,G$.  Hence, $\pf\La=\Sigma^*$ if $\La$ is not finite. Moreover, $\La\in \UW$ by Lemma~\ref{lem:slt_are_uw} as definite languages are strictly locally testable.  

Conversely, suppose that $\La$ is finite or  $\La\in \UW$ and $\pf\La=\Sigma^*$. If $\La$ is finite then $\La\in \DEF$  and we are done. 
If $\La\in \UW$ and $\pf\La=\Sigma^*$, consider the minimum    automaton  $\mt D_\La=(Q,q_0, \dots)$ recognizing $\La$. Since $\pf\La=\Sigma^*$, the automaton $\mt D_\La$ is complete and   trimmed. We first prove that $\La\in \SLT$. Suppose, by way of a contradiction, that this is not the case. Then, by   Lemma~\ref{lem:caron},  we have that $\mt D_\La$ contains two states $p\neq q$ which are both the starting point of a cycle labeled by a string $\delta\in \Sigma^+$. Let $\alpha\neq  \beta$ be words reaching $p,q$ from the initial state, respectively. If $k$ is such that     $|\delta^k|>|\alpha|, |\beta|$, we are sure  that     no power of $\delta^k$ is  a suffix of    $\alpha$ or  $ \beta $.

Let $\gamma=\delta^k$ and consider the sequence $q_0,q_1,q_2, \dots, q_i, \dots$   that we obtain,  in $\mt D_\La$, reading the powers of $\gamma$ starting from the initial state
\[q_0 \rightarrow ^\gamma q_1 \rightarrow ^\gamma q_2  \rightarrow^\gamma \dots \rightarrow^\gamma q_i  \rightarrow^\gamma\dots \rightarrow^\gamma q_j  \rightarrow^\gamma\dots \]
(here we are using the hypothesis $\pf\La=\Sigma^*$: this path has to be in $\mt D_\La$).
 Then, there exist $i, j$ with $i<j$ such that $q_i=q_j$.  
 Notice that   $\rho=\gamma^{j-i}$ labels   cycles from the states $p,q$ and $r=q_i$;   if $\nu=\gamma^i$, by using an higher power of $\gamma$ if necessary, we may suppose that  $\rho$ is not a suffix of $\nu$. 

 We now consider two cases:
 \begin{enumerate}
     \item $r\neq p, r\neq q$.   In this case  we have three       cycles labeled $\rho$ starting from      three different  $\mt D_\La$-states $p,q,r$.  By Lemma~\ref{lem:3cycles} we obtain     $\La\notin \EW$, contradicting $\La\in \UW$.   
     
 \item $r=p$ or $r=q$. Suppose w.l.o.g. that $r=p$.  In this case we have that $\nu=\gamma^i$ reaches $r$  and, since  $\rho$ is a bigger power of $\gamma $,    $\nu$ is always co-lexicographically smaller than $ \rho$ w.r.t. any order of the alphabet. Moreover, we know that $\rho$ is  not a suffix of $\beta$. This implies that we can always find an order $\preceq$ of the alphabet such that $\beta\prec \rho$: if $\beta$ is not  a suffix of $\rho$  (otherwise we already have $\beta \prec \rho$) then,  since $\rho$ is not a suffix of $\beta$ either,   we may consider       the first letters  $x,y$ from the right in $\beta, \rho$, respectively,   marking a difference between the two words;  we may then consider an order $\preceq$ having  $x\prec y$ in order to obtain $\beta\prec \rho$ and $\nu\prec \rho$ (the last one always holds).  
 
 Hence, we have two states $r,p$  which are both the starting point of a cycle labeled $\rho$ and are reached by strings smaller than $\rho$. This implies $\La\not \in \Wh(\preceq)$, contradicting $\La\in \UW$.
 \end{enumerate}

This proves that  $\La\in \SLT$, hence, there are  finite sets $H,K,W,F$ such that 
 \[\La=F\cup \left [(H\Sigma^* \cap \Sigma^* K) \setminus \Sigma^* W \Sigma^*\right ]. \]
 We now prove that, since $\pf\La=\Sigma^*$, we have $\La\in \DEF$.
 We have that  $\epsilon \not \in W$, otherwise $ \Sigma^* W \Sigma^*=\Sigma^*$, and  $L=F$ would be  a finite set, contradicting $\pf\La=\Sigma^*$.
 
To conclude the proof  we   check that  $\epsilon \in H$  (so that $H\Sigma^*=\Sigma^*$) and $W=\emptyset$. 
 Since    $\La\subseteq F \cup  H\Sigma^*$, all words in $\La$ which are longer than $\max\Set{|\alpha|}{\alpha \in F}$  must have have a prefix in the finite set $H$, contradicting $\pf\La=\Sigma^*$ unless $\epsilon \in H$.
Suppose, by way of a contradiction, that    $\xi \in W$, and     consider a word $\beta$, longer than $\max\Set{|\alpha|}{\alpha \in F}$, such that   $\beta$ contains $\xi$ as a prefix. Hence,  $\beta \in W\Sigma^*$ and, since $\pf\La=\Sigma^*$, there exists  $\delta\in \Sigma^*$ such that $ \beta\delta \in \La$. However, since $ \beta\delta \not \in F$ (it is too long), we should  have $ \beta \delta \not \in  \Sigma^* W \Sigma^*$,   contradicting  $\beta \in W\Sigma^*$.
Hence, $W=\emptyset$ and  $ \La=F\cup  \Sigma^* K\in  \DEF$.   
\end{proof}

\begin{lemma*}[\ref{lemma:esiste}]
Let  $|\Sigma|\geq 2$ and let  $\La\in \UW$. If $\pf\La\neq \Sigma^*$, the following are equivalent:
\begin{enumerate}
    \item there exists an order $\preceq$ such that $\overline \La\not \in \Wh(\preceq)$;
\item   $ \pf\La\cap \pf{\overline\La}$ is an infinite set. 
   \end{enumerate}
\end{lemma*}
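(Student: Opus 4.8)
The plan is to read off the structure of $\mt D_{\overline\La}$ from Remark~\ref{rem:careful} and then feed it into Lemma~\ref{lem:wcycle}. Since $\pf\La\neq\Sigma^*$ we have $|Q_2|=1$, say $Q_2=\{\overline q\}$, where $\overline q$ is absorbing and final in $\mt D^c_{\overline\La}$ and survives in $\mt D_{\overline\La}$; moreover $\mt D_{\overline\La}$ has state set $Q_2\cup Q_3$ and shares with $\mt D_\La$ exactly the states $Q_3$ and the transitions among them. The structural facts I would record first are: (a) $\bigcup_{r\in Q_3}I_r=\pf\La\cap\pf{\overline\La}$, because a string reaches a state of $Q_3$ iff from that state one can reach both a final and a non-final state, which holds iff the string is in $\pf\La\cap\pf{\overline\La}$; (b) every path reaching a $Q_3$-state stays inside $Q_3$ (a prefix reaching some state $t$ can be continued to the $Q_3$-state, so $t$ itself reaches both a final and a non-final state); hence (c) $\pf\La\cap\pf{\overline\La}$ is infinite iff $\mt D_{\overline\La}$ has a cycle entirely contained in $Q_3$; and (d) since $\overline q$ (and, if it exists, the removed state of $Q_1$) is absorbing, the only cycles of $\mt D_{\overline\La}$ are the $Q_3$-cycles together with the self-loops at $\overline q$.

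For $(1)\Rightarrow(2)$ I would argue contrapositively. If $\pf\La\cap\pf{\overline\La}$ is finite, then by (c) and (d) the only state of $\mt D_{\overline\La}$ lying on a cycle is $\overline q$, so $\mt D_{\overline\La}^2$ contains no cycle through any pair $(p_1,q_1)$ with $p_1\neq q_1$ (such a cycle would project onto two equally-labelled cycles starting from two distinct states). By Lemma~\ref{lem:wcycle} applied to $\overline\La$ this gives $\overline\La\in\Wh(\preceq)$ for every order $\preceq$, contradicting $(1)$.

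For $(2)\Rightarrow(1)$, assume $\pf\La\cap\pf{\overline\La}$ is infinite. By (c) there is a cycle of $\mt D_{\overline\La}$ entirely inside $Q_3$, say from a state $r$ with label $\gamma\in\Sigma^+$; since $\overline q$ is absorbing it also carries a $\gamma$-labelled self-loop, so $r$ and $\overline q$ are two distinct states each starting a $\gamma$-cycle, producing a $\gamma$-cycle through $(r,\overline q)$ in $\mt D_{\overline\La}^2$. By Lemma~\ref{lem:wcycle} it now suffices to exhibit an order $\preceq$ with $r\intro\overline q$, equivalently (Lemma~\ref{lem:Pgeq2}) to verify $|P[r,\overline q]|\geq 2$ in $\mt D_{\overline\La}$. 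This is immediate: for any $\alpha\in I_r$ and $\beta_0\in I_{\overline q}$ the word $\beta_0\alpha$ lies in $I_{\overline q}$ because $\overline q$ is absorbing, and $\alpha\dashv\beta_0\alpha$, so $(\dashv,\dashv)\in P[r,\overline q]$; and choosing $a\in\lambda(r)$ (nonempty since $r$ lies on a cycle) together with some $b\neq a$ (which exists as $|\Sigma|\geq 2$), the characterization~\eqref{formula: equiv P} with witness string $\varepsilon$, $p'=r$, $q'=\overline q$, together with $\lambda(\overline q)=\Sigma$, yields $(a,b)\in P[r,\overline q]$. The two elements are distinct, so $|P[r,\overline q]|\geq 2$, hence $r\intr\overline q$; fixing an order $\preceq$ witnessing this, the $\gamma$-cycle through $(r,\overline q)$ satisfies the hypotheses of Lemma~\ref{lem:wcycle}, so $\overline\La\notin\Wh(\preceq)$.

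The one genuinely delicate point is the structural claim of the first paragraph — specifically (b) and (d), i.e.\ that paths into $Q_3$ never leave $Q_3$ and that $\overline q$ is the only cyclic state once $Q_3$ is acyclic; these rest on Remark~\ref{rem:careful} and on the absorbing nature of $\overline q$ (and of the possibly removed $Q_1$-state). Once this is established, both implications are short: one is a direct application of Lemma~\ref{lem:wcycle}, and the other combines the all-letter self-loops at $\overline q$ with Lemma~\ref{lem:Pgeq2}. (In fact the hypothesis $\La\in\UW$ is not used in either implication; it merely fixes the context in which the lemma is later invoked.)
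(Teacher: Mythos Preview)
Your proof is correct, and the overall architecture---locate the absorbing state $\overline q$ in $\mt D_{\overline\La}$, pair it with a state on a $Q_3$-cycle, and invoke Lemma~\ref{lem:wcycle}---coincides with the paper's. The differences lie in the tools. For $(1)\Rightarrow(2)$ the paper does not argue contrapositively: it observes that $\overline\La\notin\Wh(\preceq)$ forces $\overline\La\notin\SLT$, applies Caron's characterization (Lemma~\ref{lem:caron}) to extract two equally-labelled cycles, and notes that at least one of them avoids $\overline q$, giving infinitely many words in $\pf\La\cap\pf{\overline\La}$; your direct use of Lemma~\ref{lem:wcycle} is shorter. For $(2)\Rightarrow(1)$ the paper never touches $P[\cdot,\cdot]$: it fixes $\alpha\in I_p$, $\beta\in I_{\overline q}$ and a cycle label $\gamma$ (with $\gamma$ not a suffix of $\alpha$), then splits on whether $\alpha\dashv\gamma$ to exhibit by hand an order $\preceq$ and three concrete words witnessing $p\intro\overline q$. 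Your route through Lemma~\ref{lem:Pgeq2} (noting $(\dashv,\dashv)\in P[r,\overline q]$ via absorption and $(a,b)\in P[r,\overline q]$ via $\lambda(\overline q)=\Sigma$) sidesteps that case split entirely, at the cost of a forward reference to Section~\ref{sec:decidingUW}. Your closing remark that $\La\in\UW$ is never used is accurate: the paper records $\La\in\Wh(\preceq)$ at the start of $(1)\Rightarrow(2)$ but does not invoke it thereafter.
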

\begin{proof}
    ($1\Rightarrow 2$)~If there exists an order $\preceq$ such that $\overline \La\not \in \Wh(\preceq)$ but  $\La\in \UW$, then, in particular, $\La\in \Wh(\preceq)$.
    Since $\overline \La\not \in \Wh(\preceq)$, then  $\overline \La\not \in \SLT$. By Lemma~\ref{lem:caron}, $\mt  D_{\overline \La}$   must contain two equally labeled cycles (say, with label    $\gamma$) starting from two different states $p,q$. 
    One of them, say $p$, must be different from  the absorbing state $\overline q$ for $\overline \La$ (see the discussion on minimum complete or trimmed DFA for a language and its complement). Since     all words arriving in   $p$ are in $\pf\La\cap \pf{\overline\La}$,  we have  
      $\alpha \gamma^* \in \pf\La\cap \pf{\overline\La}$ and     $ \pf\La\cap \pf{\overline\La}$ is infinite.

($2\Rightarrow 1$)~Suppose  $ \pf\La\cap \pf{\overline\La}$ is infinite.   Since  $\pf\La \neq \Sigma^* $ implies  
$\pf{\overline\La} \setminus \pf\La \neq \emptyset $, in $\mt D_{\overline \La} $ 
there is an absorbing final state $\overline q$. Let  $\beta$ be a word arriving in $\overline q$ in $\mt D_{\overline \La} $.  From the hypothesis we get that 
$\beta \neq \epsilon$ (otherwise $\overline {\La}=\Sigma^*$ would  contradict $ \pf\La\cap \pf{\overline\La}\neq \emptyset$),  and, since $ \pf\La\cap \pf{\overline\La}$ is infinite,    in $\mt D_{\overline {\La}}$ there must be   a cycle not passing through $\overline q$.  Let $p\neq \overline q$ be on this cycle,    $\alpha$  be a word arriving in $p$,   and  $\gamma\neq \epsilon$ be a label of  a  cycle starting from $p$ and avoiding $\overline q$. 
We may suppose w.l.o.g.\ that $\gamma$ is not a suffix of $\alpha$ (if so, just consider a power of $\gamma$ instead of $\gamma$). Notice that there is a cycle labeled $\gamma$ both from $p$ and from $\overline q$ ($\overline q$ is an absorbing state, hence there are cycles starting from $\overline q$ labeled by any word).

We claim  that there exists an alphabet order $\preceq$ such that $p\intro \overline q$.
Consider two cases:
\begin{enumerate}
 \item $\alpha \dashv \gamma$. Let  $g$ be the last letter of $\gamma$ and let  $x\in \Sigma$ with  $x\neq g$. Fix an order $\preceq $ of the alphabet such that $x\prec g$.  Since   $\alpha$ arrives in $p$  and $ \beta x$ arrives in $\overline q$   we have either $\alpha \prec \beta x$ or $\alpha \succ \beta x$. In the first case,   we may consider the words
$\alpha \prec \beta x \prec  \alpha \gamma$, while if $\alpha \succ \beta x$  
we may consider the words
$\beta x \prec \alpha \prec \beta x \gamma$.
In both cases the words witness that $p\intro \overline q$.
\item $\alpha \not \dashv \gamma$. Then there is an order $\preceq$ of the alphabet such that $ \gamma\prec \alpha$ holds. If $\preceq$ is such an order, then, since we already supposed $\gamma \not \dashv \alpha$,  we have $\beta \gamma \prec \alpha \prec \beta \alpha$, witnessing $p\intro \overline q$.
 \end{enumerate}
 This proves the claim. Since a cycle labeled $\gamma$ starts from both $p$ and $\overline q$,   in $\mt D^2_{\overline \La} $ there is a cycle starting from $(p,\overline q)$, with $p\neq \overline q$ and $p\intro \overline q$, and Lemma~\ref{lem:wcycle} applied to $\overline \La$ yields 
 $\overline \La\not \in \Wh(\preceq)$.  
 \end{proof}

 \section{Examples}
 
\begin{example} \label{ap:ex} In Fig. \ref{fig:example WDFA}  we consider an example of the differences between automata $\mt D^c_{\La}$, $\mt D_{\La}$, 
$\mt D^c_{\overline{\La}}$, and $\mt D_{\overline {\La}}$.
\begin{figure}[ht]%
\begin{center}
\begin{minipage}{0.26\textwidth}
\begin{tikzpicture}[->,>=stealth', semithick, initial text={}, auto, scale=.25]
 \node[state, label=above:{}, initial] (0) at (0,0) {$s$};
 \node[state, label=above:{}] (1) at (-4,4) { };
 \node[state, label=above:{}] (2) at (4,4) { };
 \node[state, label=above:{}, accepting] (3) at (0,8) {};
 \node[state, label=above:{}] (4) at (-6,8) {$\overline q$};
 \node[state, label=above:{}, accepting] (5) at (6,8) {$q$};
 
\draw (0) edge [below]  node  {$a$} (1);
 \draw (0) edge [below]   node   {$b$} (2);
 \draw (1) edge [below]  node   {$b$} (3);
 \draw (2) edge [below]  node    {$a$} (3);
 \draw (3) edge [loop above] node [above, xshift=8] {$a$} (3);
 \draw (1) edge   node [ xshift=-2, yshift=2]   {$a$} (4);
  \draw (4) edge [loop above] node [above, xshift=8] {$a,b$} (4);
   \draw (3) edge [below]  node     {$b$} (4);
    \draw (2) edge [below]  node [below, xshift=5, yshift=2] {$b$} (5);
  \draw (5) edge [loop above] node [above, xshift=8] {$a,b$} (5);

\end{tikzpicture}
\end{minipage}
\hspace{1.1cm}
\begin{minipage}{0.26\textwidth}
\begin{tikzpicture}[->,>=stealth', semithick, initial text={}, auto, scale=.25]
 \node[state, label=above:{}, initial] (0) at (0,0) {$s$};
 \node[state, label=above:{}] (1) at (-4,4) { };
 \node[state, label=above:{}] (2) at (4,4) { };
 \node[state, label=above:{}, accepting] (3) at (0,8) {};
 \node[state, label=above:{}, accepting] (5) at (6,8) {$q$};
 
\draw (0) edge [below]  node  {$a$} (1);
 \draw (0) edge [below]   node   {$b$} (2);
 \draw (1) edge [below]  node   {$b$} (3);
 \draw (2) edge [below]  node    {$a$} (3);
 \draw (3) edge [loop above] node [above, xshift=8] {$a$} (3);

    \draw (2) edge [below]  node [below, xshift=5, yshift=2] {$b$} (5);
  \draw (5) edge [loop above] node [above, xshift=8] {$a,b$} (5);
 
\end{tikzpicture}
\end{minipage}
\hspace{1.1cm}
\begin{minipage}{0.26\textwidth}
\begin{tikzpicture}[->,>=stealth', semithick, initial text={}, auto, scale=.25]
 \node[state, label=above:{}, initial,accepting ] (0) at (0,0) {$s$};
 \node[state, label=above:{},accepting] (1) at (-4,4) { };
 \node[state, label=above:{},accepting] (2) at (4,4) { };
 \node[state, label=above:{} ] (3) at (0,8) {};
 \node[state, label=above:{}, accepting] (4) at (-6,8) {$\overline q$};

\draw (0) edge [below]  node  {$a$} (1);
 \draw (0) edge [below]   node   {$b$} (2);
 \draw (1) edge [below]  node   {$b$} (3);
 \draw (2) edge [below]  node    {$a$} (3);
 \draw (3) edge [loop above] node [above, xshift=8] {$a$} (3);
 \draw (1) edge   node [ xshift=-2, yshift=2]   {$a$} (4);
  \draw (4) edge [loop above] node [above, xshift=8] {$a,b$} (4);
   \draw (3) edge [below]  node     {$b$} (4);
 
\end{tikzpicture}
\end{minipage}
\end{center}
    \caption{\emph{Left:} The minimum complete DFA $\mt D^c_{\La}$ for a regular language $\La$, with both the absorbing state $\overline q$ for $\overline {\La}$  and the absorbing state $ q$ for $ {\La}$.   The minimum complete DFA $\mt D^c_{\overline{\La}}$ for $\overline{\La}$ is obtained by switching final and non final states.
    \emph{Center:}  The minimum trimmed DFA $\mt  D_{\La}$ for  $\La$, in which the absorbing state for $\overline{\La}$ disappears. 
    \emph{Right:}  The minimum trimmed DFA $\mt  D_{\overline{\La}}$ for  $\overline{\La}$, in which the absorbing state for ${\La}$ disappears. }
    \label{fig:example WDFA}
\end{figure}
\end{example}

\begin{example}\label{ex} We give an example of a language $\La$  such that $\pf{\La}=\Sigma^*$,    
 the  minimum trimmed automaton for $\La$  (on the left in the figure below)   satisfies the condition of  Theorem~\ref{th:Wheeler} (w.r.t. the usual order of the alphabet), proving that $\La\not \in \UW$,  while   the minimum trimmed automaton for $\overline \La$ (on the right)  does not contain two equally labeled cycles, proving that $\overline \La\in \UW$.   
 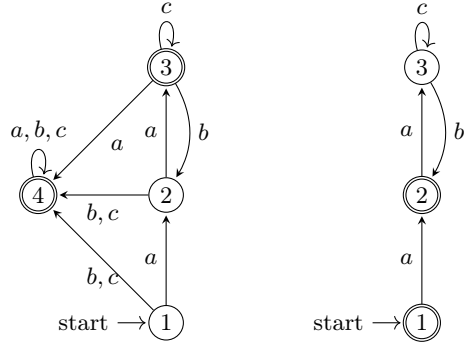
\begin{figure}[ht]
\centering
\begin{tikzpicture}[shorten >=1pt,node distance = 1.7cm, on grid, auto]
\tikzset{every state/.style={inner sep=2pt,minimum size=0pt}} 

\node (1) [state, initial] {1};
\node (2) [state,  above   = of 1] {2};
\node (3) [state, above  = of 2, accepting] {3};
\node (4) [state,    left = of 2, accepting] {4};
\node(5) [right =of 1] {};
  
\path [-stealth]
    (1) edge node {$a$} (2)
    (1) edge [below]  node{$b,c$} (4)
    (2) edge  node {$a$} (3)
     (2) edge   node {$b,c$} (4) 
     (3) edge [loop above ]  node {$c$} (3) 
     (4) edge [loop above ]  node {$a,b,c$} (4)
     (3) edge [bend left ]  node {$b$} (2)
     (3) edge [ ]  node {$a$} (4);
  
  \node (1') [state, initial, accepting,     right    = of 5 ] {1};
\node (2') [state, above    = of 1', accepting] {2};
\node (3') [state, above  = of 2'] {3};
  
\path [-stealth]
    (1') edge node {$a$} (2')
    (2') edge  node {$a$} (3')
     (3') edge [loop above ]  node {$c$} (3') 
     (3') edge [bend left ]  node {$b$} (2');  
\end{tikzpicture}
\caption{A language $\La$  (minimum trimmed DFA on the left)  with  $\pf\La=\Sigma^*$ which is not in $\UW$ while $\overline \La\in \UW$ (minimum trimmed DFA on the right)}\label{barLinUW}
\end{figure} 
\end{example}

 \section{Proofs from Section \ref{sec:decidingUW}}

\begin{lemma*} [\ref{lem:Pgeq2}]   
    Let $p, q \in Q$ with $p \neq q$. Then, $p \intr q$ if and only if $|P[p, q]| \geq 2$.
\end{lemma*}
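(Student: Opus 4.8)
The plan is to prove the two directions of the equivalence $p \intr q \iff |P[p,q]| \geq 2$ separately, using the characterization of $P[p,q]$ given in~\eqref{formula: equiv P} together with the definitions of $\intr$ and $\intro$. Throughout, recall that $\intr$ quantifies existentially over the alphabet order, i.e., $p \intr q$ means there is some order $\preceq$ with $p \intro q$, where $p \intro q$ asks for $\alpha, \alpha' \in I_p$ and $\beta, \beta' \in I_q$ with $\alpha \prec \beta$ and $\beta' \prec \alpha'$.

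First I would prove the ``$\Leftarrow$'' direction. Assume $|P[p,q]| \geq 2$, so $P[p,q]$ contains two distinct pairs, each of which is either of the form $(a,b)$ with $a \neq b$ (witnessed, via~\eqref{formula: equiv P}, by states $p', q'$ and a string $\gamma$ with $a \in \lambda(p')$, $b \in \lambda(q')$, $p = \delta(p',\gamma)$, $q = \delta(q',\gamma)$), or one of the special markers $(\dashv,\dashv)$ (meaning $p \dashv q$) or $(\vdash,\vdash)$ (meaning $q \dashv p$). The idea is that each such pair ``forces'' a co-lexicographic comparison in a direction we can control by choosing the alphabet order suitably: a pair $(a,b)$ gives words $\alpha_1 a \gamma \in I_p$ and $\beta_1 b \gamma \in I_q$ which, since they share the suffix $b\gamma$ (resp. $a\gamma$) only after the differing letters $a \neq b$, compare as $\prec$ or $\succ$ exactly according to whether $a \prec b$ or $b \prec a$; similarly $(\dashv,\dashv)$ gives a pair $\alpha \in I_p$, $\beta \in I_q$ with $\alpha \dashv \beta$, hence $\alpha \preceq \beta$ for every order, and symmetrically for $(\vdash,\vdash)$. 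I would then do a short case analysis over the (unordered) pair of elements of $P[p,q]$: given any two such witnesses I must exhibit an order $\preceq$ under which one witness yields $\alpha \prec \beta$ (some $\alpha \in I_p$, $\beta \in I_q$) and the other yields $\beta' \prec \alpha'$ ($\alpha' \in I_p$, $\beta' \in I_q$). For two letter-pairs $(a,b)$ and $(c,d)$ one picks an order making $a \prec b$ and $d \prec c$ — possible because $a \neq b$ and $c \neq d$, and even if the pairs interact (e.g.\ share a coordinate) one can always arrange the two strict inequalities simultaneously, or fall back on the fact that $\dashv$/$\vdash$-type witnesses are order-independent. Handling all the sub-cases (two letter pairs; one letter pair and one marker; two markers) is the bulk of the work but each is elementary; the conclusion in every case is $p \intro q$ for the chosen order, hence $p \intr q$.

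Next the ``$\Rightarrow$'' direction. Assume $p \intr q$, so fix an order $\preceq$ and words $\alpha, \alpha' \in I_p$, $\beta, \beta' \in I_q$ with $\alpha \prec \beta$ and $\beta' \prec \alpha'$. From each of these two strict co-lexicographic inequalities I extract an element of $P[p,q]$, and I must argue the two elements obtained are distinct. By the definition of the co-lexicographic order, $\alpha \prec \beta$ means either $\alpha \dashv \beta$ — which directly puts $(\dashv,\dashv) \in P[p,q]$ — or $\alpha = \alpha_1 a \gamma$, $\beta = \beta_1 b \gamma$ with $a \prec b$, which puts the pair $(a,b)$ into $P[p,q]$ (and note $a \neq b$ since $a \prec b$). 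Symmetrically $\beta' \prec \alpha'$ gives either $(\vdash,\vdash) \in P[p,q]$ (from $\beta' \dashv \alpha'$) or a pair $(c,d) \in P[p,q]$ with $d \prec c$. It remains to check these two contributed elements cannot coincide: a $(\dashv,\dashv)$ from the first and a $(\vdash,\vdash)$ from the second are obviously distinct; a marker and a letter pair are distinct by type; and if both are letter pairs $(a,b)$ and $(c,d)$ then $a \prec b$ while $d \prec c$, so $(a,b) = (c,d)$ would force $a \prec b$ and $b \prec a$, a contradiction — hence they differ. In every case $P[p,q]$ contains two distinct elements, so $|P[p,q]| \geq 2$.

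The main obstacle I anticipate is the case analysis in the ``$\Leftarrow$'' direction when both elements of $P[p,q]$ are letter pairs that share a letter (e.g.\ $(a,b)$ and $(a,c)$, or $(a,b)$ and $(b,c)$): one must verify that a single total order on $\Sigma$ can simultaneously satisfy $a \prec b$ and $c \prec a$ (resp.\ $a \prec b$ and $c \prec b$), which is a small but genuine constraint-satisfaction check — it always works because the two constraints never form a cycle on $\{a,b,c\}$ forcing $x \prec x$, but one should spell this out. A clean way to organize it is to observe that from $(a,b) \in P[p,q]$ with $a \prec b$ we always get a witness $\alpha \prec \beta$ and from $(c,d) \in P[p,q]$ with $c \succ d$ a witness $\beta' \prec \alpha'$, and then simply pick the order to make $a \prec b$ and $d \prec c$ — feasible because $\{a \prec b\} \cup \{d \prec c\}$ is an acyclic set of constraints on at most four letters (the only potential cycle, $a = d$ and $b = c$, i.e.\ the pairs being $(a,b)$ and $(b,a)$, is harmless since we may instead use $(a,b)$ twice with the second occurrence read ``backwards'' to supply $\beta' \prec \alpha'$, or note that having both $(a,b)$ and $(b,a)$ in $P[p,q]$ trivially yields $p \intro q$ for any order).
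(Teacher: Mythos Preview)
Your proposal is correct and follows essentially the same approach as the paper: in both directions you extract from each co-lexicographic comparison either a marker $(\dashv,\dashv)$/$(\vdash,\vdash)$ or a letter pair $(a,b)$, and argue distinctness (for $\Rightarrow$) respectively construct a compatible order (for $\Leftarrow$) from two such elements. The only difference is that you spell out the satisfiability of the two order constraints $a\prec b$, $d\prec c$ in the $(\Leftarrow)$ direction, whereas the paper simply asserts that such an order exists; your check is fine (in fact the only ``cycle'' candidate, $(c,d)=(b,a)$, yields the single repeated constraint $a\prec b$, so no obstruction ever arises).
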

\begin{proof}
    $(\Rightarrow)$ If $p \intr q$, then there exists an order $\preceq$ and strings $\alpha, \alpha' \in I_p$ and
    $\beta, \beta' \in I_q$ such that $\alpha \prec \beta$ and $\beta' \prec \alpha'$.
    If $\alpha$ is a suffix of $\beta$, then $(\dashv, \dashv) \in P[p, q]$. Otherwise, there exist 
    $a, b \in \Sigma$, with $a \prec b$, and a string $\gamma \in \Sigma^*$ such that $a\gamma \dashv \alpha$ and $b\gamma \dashv \beta$, and thus $(a, b) \in P[p,q]$. Similarly, for $\beta' \prec \alpha'$ we have either 
    $(\vdash, \vdash) \in P[p,q]$ or there exists  $a', b' \in \Sigma$, with $b' \prec a'$  and $(a', b')$ in $P[p, q]$. Furthermore, if such $a,b,a',b'$ exists, we must have $(a,b) \neq (a',b')$ because  
    $a \prec b$ and $b' \prec a'$. Hence, $|P[p,q]| \geq 2$.

$(\Leftarrow)$   Let $(a, b) \neq (a',b')$ be two elements in $P[p, q]$. 
Suppose  $\{(a, b), (a',b')\}\cap\{(\dashv, \dashv)\}=\emptyset$.  One of the pairs, say $(a,b)$,  must be different from $(\vdash, \vdash)$.   
 
  If $(a',b')=(\vdash, \vdash)$, then  any total order $\preceq$ on $\Sigma$ for which $a \prec b$ is such that 
 $p \intro q$. Otherwise, any total order $\preceq$ on $\Sigma$ for which $a \prec b$ and $b' \prec a'$ is such that 
 $p \intro q$. Similarly, if  $\{(a, b), (a',b')\}\cap\{(\vdash, \vdash)\}=\emptyset$ then there exists $\preceq$ such that         $p \intro q$.
Finally, if  $\{(a, b), (a',b')\}=\{(\dashv, \dashv), (\vdash, \vdash)\}$ then  $p\intro q$ holds for any order $\preceq$ of the alphabet $\Sigma$.
\end{proof}

\begin{lemma*}[\ref{lemma:P-table-geq2}]
    Let $p, q \in Q$, with $p \neq q$. Then $|P[p, q]| \geq 2$ iff $|P_{\lambda'}[p, q]| \geq 2$. 
\end{lemma*}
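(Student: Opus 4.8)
The plan is to prove the two implications separately. The direction ``$|P_{\lambda'}[p,q]|\ge 2 \Rightarrow |P[p,q]|\ge 2$'' is immediate: since $\lambda'(\cdot)\subseteq\lambda(\cdot)$ and the $\dashv/\vdash$ parts of the two tables are literally the same, we have $P_{\lambda'}[p,q]\subseteq P[p,q]$, as already observed right before the statement. So the whole content lies in the forward direction, and for that I would use Lemma~\ref{lem:Pgeq2} together with the characterization~\eqref{formula: equiv P}. Assume $|P[p,q]|\ge 2$; by Lemma~\ref{lem:Pgeq2} this means $p\intr q$, so fix an order $\preceq$ and words $\alpha,\alpha'\in I_p$, $\beta,\beta'\in I_q$ with $\alpha\prec\beta$ and $\beta'\prec\alpha'$. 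I would then re-run the $(\Rightarrow)$ argument of Lemma~\ref{lem:Pgeq2}, but taking care to land inside $P_{\lambda'}$ rather than just $P$.

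Concretely, from $\alpha\prec\beta$ there are two cases. If $\alpha\dashv\beta$ then $(\dashv,\dashv)\in P_{\lambda'}[p,q]$, since the $\dashv$-membership condition is identical in both tables. Otherwise there are letters $a\prec b$, a string $\gamma$ and states $p_1,q_1$ with $a\in\lambda(p_1)$, $b\in\lambda(q_1)$, $\delta(p_1,\gamma)=p$ and $\delta(q_1,\gamma)=q$ (the last edge before $\gamma$ on the two paths into $p$ and $q$). The key point is that, although $a$ need not lie in $\lambda'(p_1)$ nor $b$ in $\lambda'(q_1)$, we can still pick some $a_0\in\lambda'(p_1)$ and $b_0\in\lambda'(q_1)$ with $a_0\ne b_0$: indeed $\lambda'(p_1)$ and $\lambda'(q_1)$ cannot both equal the same singleton $\{c\}$, because a singleton $\lambda'$ forces the corresponding $\lambda$ to be that same singleton, which would give $a=c=b$, contradicting $a\ne b$; and whenever the two $\lambda'$-sets are not the same singleton one can always choose an off-diagonal pair from their product. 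By~\eqref{formula: equiv P} applied with $\lambda'$ in place of $\lambda$, this yields $(a_0,b_0)\in P_{\lambda'}[p,q]$. Moreover, if \emph{in addition} $|\lambda(p_1)|\ge 2$ and $|\lambda(q_1)|\ge 2$, then $\lambda'(p_1)$ and $\lambda'(q_1)$ have exactly two elements each, and a two-element set times a two-element set minus the diagonal has at least two elements, so this single witness already places two distinct proper pairs in $P_{\lambda'}[p,q]$ and we are done. The inequality $\beta'\prec\alpha'$ is treated symmetrically, producing either $(\vdash,\vdash)\in P_{\lambda'}[p,q]$ or a proper pair $(a_0',b_0')\in P_{\lambda'}[p,q]$ (or directly two proper pairs).

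It then remains to assemble the two contributions. If one of the two inequalities contributes $(\dashv,\dashv)$ and the other $(\vdash,\vdash)$, or if one contributes a $\dashv/\vdash$-pair and the other a proper pair, or if either contributes two distinct proper pairs, then $|P_{\lambda'}[p,q]|\ge 2$ and we are done. The only residual case is that both inequalities contribute \emph{the same} single proper pair $(a_0,b_0)$, which by the analysis above can only be forced when, for each $i\in\{1,2\}$, at least one of $\lambda(p_i),\lambda(q_i)$ is a singleton and the product $\lambda'(p_i)\times\lambda'(q_i)$ has a unique off-diagonal element. I expect this residual case to be the main obstacle: it must be dispatched by a finer structural analysis — splitting on which side ($p_i$ or $q_i$) carries the singleton incoming label, and exploiting the freedom still available in the choice of the witnessing words $\alpha,\alpha',\beta,\beta'$, of the order $\preceq$, and of the predecessor pairs of $(p,q)$ in $\mt D_\La^2$ — in order to exhibit a genuinely second element of $P_{\lambda'}[p,q]$ (either one of $(\dashv,\dashv),(\vdash,\vdash)$, or a proper pair coming from a different predecessor of $(p,q)$ in $\mt D_\La^2$). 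Everything outside this residual case is routine bookkeeping with~\eqref{formula: equiv P}.
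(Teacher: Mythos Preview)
Your $(\Leftarrow)$ direction is fine and matches the paper. The $(\Rightarrow)$ direction, however, is left incomplete: you explicitly flag the ``residual case'' (both inequalities producing the same single proper pair of $P_{\lambda'}$) and do not resolve it. Your suggested plan of attack for that case---exploiting freedom in $\alpha,\alpha',\beta,\beta',\preceq$, or searching for another predecessor of $(p,q)$---is not carried out, and it is not clear it can be made to work without essentially rediscovering a global structural argument about all predecessor pairs.

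The paper takes a completely different and much shorter route: a contrapositive that never fixes an order $\preceq$ or specific witnessing words. Assuming $|P_{\lambda'}[p,q]|<2$, its proper part (i.e., with the $\dashv$/$\vdash$ entries removed) is contained in a singleton $\{(a,b)\}$. This is a constraint on \emph{every} predecessor pair $(p',q')$ of $(p,q)$ in $\mt D_\La^2$ simultaneously; from it the paper deduces $\lambda'(p')\subseteq\{a\}$ and $\lambda'(q')\subseteq\{b\}$. Now maximality of $\lambda'$ kicks in: a $\lambda'$-set of size at most one must equal the full $\lambda$-set, hence $\lambda(p')\subseteq\{a\}$ and $\lambda(q')\subseteq\{b\}$ for every predecessor, so by~\eqref{formula: equiv P} the proper part of $P[p,q]$ is again contained in $\{(a,b)\}$. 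Since the $\dashv$/$\vdash$ entries of $P$ and $P_{\lambda'}$ coincide by definition, $|P[p,q]|<2$.

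The contrapositive avoids your residual case altogether because it reasons uniformly over all predecessor pairs rather than chasing two particular witnesses obtained via Lemma~\ref{lem:Pgeq2}. The detour through $p\intr q$, a chosen order $\preceq$, and specific words $\alpha,\alpha',\beta,\beta'$ is precisely what creates the residual case in your argument; the paper never needs any of it.
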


\begin{proof}
    $(\Rightarrow)$ We prove the statement by contraposition. Suppose $|P_{\lambda'}[p, q]| < 2$,  hence $(P_{\lambda'}[p, q] \setminus \{(\dashv, \dashv), (\vdash,\vdash)\})
    \subseteq \{(a, b)\}$, for some $a,b\in \Sigma$. Then, for every pair $p', q' \in Q$ such that there exists $\gamma \in \Sigma^*$
    for which $p = \delta(p', \gamma)$ and $q = \delta(q', \gamma)$ it holds $\lambda'(p') \subseteq \{a\}$ and
    $\lambda'(q') \subseteq \{b\}$. Now recall that $\lambda'(p')$ and $\lambda'(q')$ where chosen as a maximal subset of $\lambda(p')$ and $\lambda(q')$ with at most two elements and thus also   $\lambda(p') \subseteq \{a\}$ and
    $\lambda(q') \subseteq \{b\}$ must hold, from which it follows that $(P[p, q] \setminus  \{(\dashv, \dashv), (\vdash,\vdash)\})\subseteq \{(a, b)\}$. Then $|P[p, q]| < 2$ since 
    $(\dashv, \dashv)\in P[p, q] $ implies $(\dashv, \dashv)\in P_{\lambda'}[p, q] $ and the same holds for $(\vdash, \vdash)$. 

    $(\Leftarrow)$ Since $P_{\lambda'}[p, q] \subseteq P[p, q]$, the result is immediate. 
\end{proof}

\end{subappendices}

\paragraph{Funding.} Ruben Becker and Nicola Prezza are funded by the European Union (ERC, REGINDEX, 101039208). Views and opinions expressed are however those of the authors only and do not necessarily reflect those of the European Union or the European Research Council Executive Agency. Neither the European Union nor the granting authority can be held responsible for them. Giuseppa Castiglione is funded by the project “ACoMPA"  (CUP B73C24001050001),  NextGeneration EU programme PNRR ECS00000017 Tuscany Health Ecosystem (Spoke 6). Brian Riccardi received grants from the European Union's Horizon 2020 Research and Innovation Programme under the Marie Sk\l{}odowska-Curie grant agreement PANGAIA No. 872539, and from MUR 2022YRB97K, PINC, Pangenome INformatiCs: from Theory to Applications.
Alberto Policriti is founded by Project funded under the National Recovery and Resilience Plan (NRRP), Mission 4 
Component 2 Investment 1.4 – Call for tender No. 3138 of 16 December 2021, rectified by
Decree n.3175 of 18 December 2021 of Italian Ministry of University and Research funded by
The European Union – NextGenerationEU.
Award Number: Project code CN\_00000033, Concession Decree No. 1034 of 17 June 2022 
adopted by the Italian Ministry of University and Research, CUP G23C22001110007, 
Project title: “National Biodiversity Future Center – NBFC”.

\bibliographystyle{splncs04}
\bibliography{bib.bib}

\begin{thebibliography}{10}
\providecommand{\url}[1]{\texttt{#1}}
\providecommand{\urlprefix}{URL }
\providecommand{\doi}[1]{https://doi.org/#1}

\bibitem{AlankoDPP21}
Alanko, J., D'Agostino, G., Policriti, A., Prezza, N.: Wheeler languages. Inf.
  Comput.  \textbf{281},  104820 (2021). \doi{10.1016/J.IC.2021.104820},
  \url{https://doi.org/10.1016/j.ic.2021.104820}

\bibitem{becker2023optimal}
Becker, R., Cenzato, D., Kim, S.H., Kodric, B., Policriti, A., Prezza, N.:
  Optimal wheeler language recognition. In: International Symposium on String
  Processing and Information Retrieval. pp. 62--74. Springer (2023)

\bibitem{Brzozowski}
Brzozowski, J.: Characterizations of locally testable events. Discrete
  Mathematics  \textbf{4}(3),  243--271 (1973).
  \doi{10.1016/S0012-365X(73)80005-6},
  \url{https://doi.org/10.1016/S0012-365X(73)80005-6}

\bibitem{Caron}
Caron, P.: Families of locally testable languages. Theor. Comput. Sci.
  \textbf{242}(1-2),  361--376 (2000). \doi{10.1016/S0304-3975(98)00332-6},
  \url{https://doi.org/10.1016/S0304-3975(98)00332-6}

\bibitem{Castiglione}
Castiglione, G., Restivo, A.: Completing wheeler automata. In: de'Liguoro, U.,
  Palazzo, M., Roversi, L. (eds.) Proceedings of the 25th Italian Conference on
  Theoretical Computer Science, Torino, Italy, September 11-13, 2024. {CEUR}
  Workshop Proceedings, vol.~3811, pp. 120--132. CEUR-WS.org (2024),
  \url{https://ceur-ws.org/Vol-3811/paper060.pdf}

\bibitem{jacm/CotumaccioDPP23}
Cotumaccio, N., D'Agostino, G., Policriti, A., Prezza, N.: Co-lexicographically
  ordering automata and regular languages - part {I}. J. {ACM}  \textbf{70}(4),
   27:1--27:73 (2023). \doi{10.1145/3607471},
  \url{https://doi.org/10.1145/3607471}

\bibitem{ictcs/Martincigh21}
D'Agostino, G., Martincigh, D., Policriti, A.: Ordering regular languages: a
  danger zone. In: Coen, C.S., Salvo, I. (eds.) Proceedings of the 22nd Italian
  Conference on Theoretical Computer Science, Bologna, Italy, September 13-15,
  2021. {CEUR} Workshop Proceedings, vol.~3072, pp. 46--69. CEUR-WS.org (2021),
  \url{https://ceur-ws.org/Vol-3072/paper5.pdf}

\bibitem{DBLP:journals/tcs/DAgostinoMP23}
D'Agostino, G., Martincigh, D., Policriti, A.: Ordering regular languages and
  automata: Complexity. Theor. Comput. Sci.  \textbf{949},  113709 (2023).
  \doi{10.1016/J.TCS.2023.113709},
  \url{https://doi.org/10.1016/j.tcs.2023.113709}

\bibitem{FerraginaM05}
Ferragina, P., Manzini, G.: Indexing compressed text. J. {ACM}  \textbf{52}(4),
   552--581 (2005). \doi{10.1145/1082036.1082039},
  \url{https://doi.org/10.1145/1082036.1082039}

\bibitem{GAGIE201767}
Gagie, T., Manzini, G., Sirén, J.: {Wheeler graphs: a framework for BWT-based
  data structures}. Theoretical Computer Science  \textbf{698},  67 -- 78
  (2017). \doi{10.1016/j.tcs.2017.06.016}, algorithms, Strings and Theoretical
  Approaches in the Big Data Era (In Honor of the 60th Birthday of Professor
  Raffaele Giancarlo)

\bibitem{GagieNP20}
Gagie, T., Navarro, G., Prezza, N.: Fully functional suffix trees and optimal
  text searching in bwt-runs bounded space. J. {ACM}  \textbf{67}(1),
  2:1--2:54 (2020). \doi{10.1145/3375890},
  \url{https://doi.org/10.1145/3375890}

\bibitem{impagliazzo2001complexity}
Impagliazzo, R., Paturi, R.: {On the complexity of k-SAT}. Journal of Computer
  and System Sciences  \textbf{62}(2),  367--375 (2001).
  \doi{10.1006/jcss.2000.1727}

\bibitem{vassilevska2015hardness}
Vassilevska~Williams, V.: Hardness of easy problems: Basing hardness on popular
  conjectures such as the strong exponential time hypothesis (invited talk).
  In: 10th International Symposium on Parameterized and Exact Computation (IPEC
  2015). pp. 17--29. Schloss Dagstuhl--Leibniz-Zentrum fuer Informatik (2015)

\bibitem{Williams05}
Williams, R.: A new algorithm for optimal 2-constraint satisfaction and its
  implications. Theor. Comput. Sci.  \textbf{348}(2-3),  357--365 (2005).
  \doi{10.1016/j.tcs.2005.09.023}

\end{thebibliography}
\end{document}